\theoremstyle{plain}
\newtheorem{theorem}{Theorem}[section]
\newtheorem{lemma}[theorem]{Lemma}
\newtheorem{corollary}[theorem]{Corollary}
\newtheorem{observation}[theorem]{Observation}
\theoremstyle{definition}
\newtheorem{definition}[theorem]{Definition}
\theoremstyle{remark}
\newtheorem{remark}{Remark}
\newcommand{\nfrac}[2]{#1/#2}
\newcommand{\weak}{weak reconstruction\xspace}
\newcommand{\strong}{strong reconstruction\xspace}
\newcommand{\planted}{\mathcal{G}_{2n,p,q}}
\newcommand{\Prob}[2]{\mathbf{P}_{#1} \left( #2 \right)}
\newcommand{\Expec}[2]{\mathbf{E}_{#1} \left[ #2 \right]}
\newcommand{\Var}[2]{\mathrm{Var}_{#1} \left( #2 \right)}
\newcommand{\Vr}[1]{\mathrm{Var} \left( #1 \right)}
\newcommand{\skproof}{\noindent\textit{Outline of Proof. }}
\newcommand{\ideaproof}{\noindent\textit{Idea of the proof. }}
\newcommand{\bigO}{\mathcal{O}}
\newcommand{\aveprot}{\mbox{\sc Averaging}}
\newcommand{\avg}{\mbox{\sc Averaging}\xspace}
\def\bx{{\bf x}}
\def\by{{\bf y}}
\def\bv{{\bf v}}
\def\bw{{\bf w}}
\def\bz{{\bf z}}
\def\be{{\bf e}}
\def\bchi{{\bm\chi}}
\def\bone{{\mathbf 1}}
\def\bzero{{\mathbf 0}}
\def\R{{\mathbb R}}
\let\pr=\Prob
\renewcommand{\Pr}{\Prob{}}
\def\E{\Expec{}}
\renewcommand{\leq}{\leqslant}
\renewcommand{\geq}{\geqslant}
\renewcommand{\epsilon}{\varepsilon}
\title{\textbf{Find Your Place: Simple Distributed Algorithms for Community Detection}}
\author{Luca Becchetti\\
    {\small{}Sapienza Università di Roma}\\
    {\small{} Rome, Italy}\\
    {\small{}becchetti@dis.uniroma1.it} 
    \and Andrea Clementi \\ 
    {\small{}Università di Roma Tor Vergata}\\
    {\small{} Rome, Italy}\\
    {\small{}clementi@mat.uniroma2.it} 
    \and Emanuele Natale\\
    {\small{}Sapienza Università di Roma}\\
    {\small{} Rome, Italy}\\
    {\small{}natale@di.uniroma1.it} 
    \and Francesco Pasquale \\
    {\small{}Sapienza Università di Roma}\\
    {\small{} Rome, Italy}\\
    {\small{}pasquale@dis.uniroma1.it} 
    \and Luca Trevisan \\
    {\small{}U.C. Berkeley}\\
    {\small{} Berkeley, CA, United States}\\
    {\small{}luca@berkeley.edu} 
}
\date{}
\begin{document}
\maketitle
\thispagestyle{empty}

\begin{abstract}
Given an underlying graph, we consider the following \emph{dynamics}: 
Initially, each node locally chooses a value in $\{-1,1\}$, 
uniformly at random and independently of other nodes. Then, in each consecutive 
round, every node updates its local value to the average of the values 
held by its neighbors, at the same time applying an elementary, local clustering rule that only 
depends on the current and the previous values held by the node. 

%Under various models of graphs that exhibit a sparse balanced cut, including the 
%stochastic block model, we show that the process resulting from this dynamics
%produces in logarithmic time a clustering that exactly or 
%approximately (depending on the model) reflects the underlying cut.  
%We also prove that a natural extension of this 
%dynamics performs community detection on a regular
%version of the stochastic block model with more than two communities.

We prove that the process resulting from this dynamics produces a 
clustering that exactly or approximately (depending on the graph) 
reflects the underlying cut in logarithmic time, under various 
graph models that exhibit a sparse balanced cut, including the 
stochastic block model.  
We also prove that a natural extension of this 
dynamics performs community detection on a regularized
version of the stochastic block model with multiple communities.

Rather surprisingly, our results provide rigorous evidence for the ability of
an extremely simple and natural dynamics to address a computational problem that is non-trivial even in a
centralized setting. 

\bigskip
\noindent
\textbf{Keywords:} Distributed Algorithms,  Averaging Dynamics, Community Detection, 
Spectral Analysis, Stochastic Block Models.

\end{abstract}
\newpage
\setcounter{page}{1}

\newpage

\section{Introduction} 

% Parte 1 (Short overview)
Consider the following distributed algorithm on an undirected 
graph: At the  outset, every node picks an initial value, 
independently and uniformly at random in $\{ -1,1\}$; then, in each 
synchronous round, every node updates its value to the average of 
those held by its neighbors. A node also tags itself ``blue'' if the last update 
increased its value, ``red'' otherwise.  

%We show that under 
%various models of graphs with a sparse balanced cut, including the 
%stochastic block model, the process corresponding to the above simple 
%local rule converges, in logarithmic time, to a configuration in 
%which nodes' colors exactly or approximately (depending on 
%the model) reflect the underlying cut.  
%Furthermore, a natural extension of this 
%algorithm, in which every node locally maintains two small arrays of 
%values and of colors respectively, solves the community detection 
%problem on a regular 
%version of the stochastic block model with more than two communities.

We show that under various graph models exhibiting sparse balanced cuts, including the 
\emph{stochastic block model}~\cite{holland1983stochastic}, the process resulting from the above simple 
local rule converges, in logarithmic time, to a colouring that exactly or approximately (depending on 
the model) reflects the underlying cut.  
We further show that our approach simply and naturally extends to more  
communities, providing a quantitative analysis for a 
regularized version of the stochastic block model with multiple communities. 

Our algorithm is one of the few examples of a
\emph{dynamics}~\cite{AAE07,afek2011biological,Doty14,MNT14} that 
solves  a computational problem
that is non-trivial in a centralized setting.
By {\em dynamics} we here mean synchronous distributed algorithms
characterized by a very simple structure, whereby the state of a node at 
round $t$ depends only on its state and a symmetric function of the  
multiset of states of its neighbors at round $t-1$, while the update rule is 
the same for every graph and every node and does not change over time. Note that 
this definition implies that the network is {\em anonymous}, that is, 
nodes  do not possess distinguished identities.
Examples of dynamics include update rules in which every 
node updates its state to the plurality or the median of the states of 
its neighbors,\footnote{When states correspond 
to rational values.} or, as is the case in this 
paper, every node holds a value, which it updates to the average of the 
values held by its neighbors. In contrast, an algorithm 
that, say, proceeds in two phases, using  averaging during the first 
$10\log n$ rounds and plurality from round  $1+ 10\log n$ onward, with $n$ 
the number of nodes, is not a dynamics according 
to our definition, since its update rule depends on the size of 
the graph. As another example, an algorithm that starts by having the lexicographically first 
vertex elected as  ``leader'' and then propagates its state to 
all other nodes again does not meet our definition of dynamics, since 
it assigns roles to the nodes and requires them to possess 
distinguishable identities.

The \avg dynamics, in which each node updates its value 
to the average of its neighbors, is perhaps one of the simplest and most 
interesting examples of linear dynamics and it always converges when $G$ is connected and not 
bipartite: It converges to the global average of the initial values 
if the graph is regular and 
to a weighted global average if it isn't~\cite{boyd_randomized_2006,shah2009gossip}.
Important  applications of linear dynamics have been proposed in 
the recent past 
\cite{kempe_gossip-based_2003,aysal2009broadcast,tsitsiklis1984problems,kleinberg1999authoritative},
for example to  perform basic tasks such as 
self-stabilizing \emph{consensus} in  faulty distributed systems \cite{benezit2009interval,xiao2007distributed,olshevsky2009convergence}.
The convergence time of the \avg dynamics is the mixing time of a random
walk on $G$ \cite{shah2009gossip}. It is logarithmic in $|V|$ if the underlying 
graph is a  good \emph{expander} \cite{hoory2006expander}, while it is slower 
on graphs that exhibit sparse cuts.

While previous work on applications of linear dynamics has focused on 
tasks that are specific to distributed computing (such as reaching 
consensus, or stability in the presence of faulty nodes), in this paper 
we show that our simple protocol based on the  
the \avg\ dynamics is able to address community detection, i.e.,  
it identifies partition $(V_1,V_2)$ of a clustered 
graph $G=((V_1,V_2),E)$, either exactly (in 
which case we have a {\em strong} reconstruction algorithm) or
approximately (in which case we speak of a {\em weak} reconstruction algorithm).

\subsection{Our contributions}
Consider a graph $G=(V,E)$. We show that, if a partition 
$(V_1,V_2)$ of $G$ exists, such that
$\bone_{V_1} - \bone_{V_2}$ is\footnote{As explained further, $\bone_{V_i}$, is the vector with 
$|V|$ components, such that the $j$-th component is $1$ if $j\in 
V_i$, it is $0$ otherwise.} (or is close to) a right-eigenvector 
of the second largest eigenvalue of the transition matrix of $G$, and 
the gap between the second and the third largest eigenvalues is 
sufficiently large, our algorithm identifies the partition 
$(V_1,V_2)$, or a close approximation thereof, in a logarithmic 
number of rounds.
Though the algorithm we propose does not explicitly perform any 
eigenvector computation, its exploits the spectral structure of the 
underlying graph, based on the intuition that the dynamics is a 
distributed simulation of the power method. Our analysis involves
two main novelties, relating to how nodes assign themselves to clusters, and
to the spectral bounds that we prove for certain classes of graphs.

A conceptual contribution is to make each node, at each round $t$, 
assign itself to a cluster (``find its place'') by
considering the difference between its value at time $t$ and its value at time $t-1$.
Such a  criterion removes the component of the value lying in the first eigenspace
without explicitly computing it. This idea has two advantages: it allows a
particularly simple algorithm, and it gives a running time that depends
on the third eigenvalue of the transition matrix of the graph. In graphs
that have the structure of two expanders joined by a sparse cut, the running
time of the algorithm depends only on  the expansion of the components and
it is faster than the mixing time of the overall graph. To the best of our knowledge, this is the first 
distributed block reconstruction algorithm  converging faster than the mixing 
time. 

Our algorithm works on any graph where (i) the right-eigenspace of the second 
eigenvalue of the transition matrix is correlated to the cut between the two clusters and (ii) the gap 
between the second and third eigenvalues is sufficiently large.
While these conditions have been investigated for the spectrum of the \textit{adjacency}
matrix of the graph, our analysis requires these conditions to hold for 
the \textit{transition} matrix. A technical contribution of this paper is to show that such  
conditions are met by a large  class of graphs, that includes graphs sampled  
from the \textit{stochastic block model}. Proving spectral properties of the
transition matrix of a random graph is more challenging than proving
such properties for the adjacency matrix, because the entries of the transition 
matrix are not independent. 

\vspace{-12pt}
\paragraph{Strong reconstruction for regular clustered graphs.}

%\noindent
%A $(2n,d,b)$-clustered regular graph $G = ((V_1,V_2),E)$ is a connected
%graph over vertex set $V_1 \cup V_2$, with 
%$|V_1|=|V_2|=n$, adjacency matrix $A$, and such that every node has degree $d$ and 
%it has (exactly) $b$ neighbors outside its cluster.    
%If the two clusters (the subgraphs induced by $V_1$ and $V_2$) are good 
%expanders and $b$ is sufficiently small, the second and third 
%eigenvalues of the graph's transition 
%matrix $P = (1/d) \cdot A$ are separated by a large gap. In this case,  
%we can prove that the following happens with high 
%probability (for short \emph{w.h.p}\footnote{We say that a sequence  of  events 
%$\mathcal E_n$,  $n= 1,2, \ldots$ holds \emph{with high 
%probability} if  $ \Prob{}{\mathcal E_n} = 1- \bigO (1/n^{\gamma})$ for  
%some positive constant $\gamma >0$.}): If we initialize every 
%node with a value chosen uniformly and independently at random in 
%$\{-1,1\}$ and we run the \avg dynamics, within a logarithmic number of 
%rounds the system enters a regime in which
%nodes' values are monotonically increasing or decreasing, depending on
%the community they belong to. 
%As a consequence, every node can apply a simple and completely local clustering 
%rule in each round, which eventually results in a strong reconstruction (Theorem~\ref{thm:reg_stop}). Interestingly enough, nodes need no further exchange of 
%information to apply this clustering rule.    

\noindent
A $(2n,d,b)$-clustered regular graph $G = ((V_1,V_2),E)$ is a connected
graph over vertex set $V_1 \cup V_2$, with 
$|V_1|=|V_2|=n$, adjacency matrix $A$, and such that every node has degree $d$ and 
it has (exactly) $b$ neighbors outside its cluster.    
If the two subgraphs induced by $V_1$ and $V_2$ are good 
expanders and $b$ is sufficiently small, the second and third 
eigenvalues of the graph's transition 
matrix $P = (1/d) \cdot A$ are separated by a large gap. In this case,  
we can prove that the following happens with high 
probability (for short \emph{w.h.p}\footnote{We say that a sequence  of  events 
$\mathcal E_n$,  $n= 1,2, \ldots$ holds \emph{with high 
probability} if  $ \Prob{}{\mathcal E_n} = 1- \bigO (1/n^{\gamma})$ for  
some positive constant $\gamma >0$.}): If the \avg dynamics is 
initialized by having every node choose a value uniformly and independently at random in 
$\{-1,1\}$, within a logarithmic number of rounds the system enters a regime in which
nodes' values are monotonically increasing or decreasing, depending on
the community they belong to. 
As a consequence, every node can apply a simple and completely local clustering 
rule in each round, which eventually results in a strong reconstruction (Theorem~\ref{thm:reg_stop}).

We then show that, under mild assumptions, a graph selected from the \textit{regular 
stochastic block model} \cite{brito_recovery_2015}  is  a 
$(2n,d,b)$-clustered regular graph  that satisfies  the above spectral gap hypothesis, w.h.p. 
We thus  obtain a fast and extremely simple dynamics for strong reconstruction, over the full range of parameters 
of the regular stochastic block model for which this is known to be possible using centralized algorithms
\cite{mossel_reconstruction_2014,brito_recovery_2015} (Section \ref{ssec:related} and 
Corollary~\ref{cor.regular}).

We further show that a natural 
extension of our algorithm, in which nodes maintain an array of values and an 
array of colors, correctly identifies a hidden balanced $k$-partition in a regular 
clustered graph with a gap between $\lambda_k$ and 
$\lambda_{k+1}$. Graphs sampled from the regular stochastic block model with 
$k$ communities satisfy such conditions,  w.h.p. (Theorem \ref{thm:more}).

\vspace{-12pt}
\paragraph{Weak reconstruction for non-regular clustered graphs.}

\noindent
As a  main technical contribution, we extend the above  analysis  to show that 
our dynamics  also ensures weak reconstruction in clustered graphs having 
two clusters that 
satisfy an approximate regularity condition and a gap between second 
and third eigenvalues of the transition matrix $P$ (Theorem~\ref{thm:irreg_stop}). 
As an application, we then prove that these conditions are
met by the \emph{stochastic block model}~\cite{abbe2014exact,coja2010graph,decelle_asymptotic_2011,
dyer1989solution,holland1983stochastic,jerrum_metropolis_1998,mcsherry2001spectral}, a 
random graph model that offers a popular framework for the 
probabilistic modelling of graphs that exhibit good clustering or 
community properties. We here consider  its  simplest    version, i.e.,   the  random 
graph $\planted$  consisting of $2n$ nodes and an edge probability 
distribution defined as follows: The node set is partitioned into
two subsets $V_1$ and $V_2$, each of size $n$; edges  linking 
nodes belonging to the same partition appear in $E$  independently at random with 
probability $p=p(n)$, while edges connecting nodes from different 
partitions appear with probability  $q=q(n)<p$.
Calling $a= pn$ and $b= qn$, we   prove  that  graphs sampled from $\planted$ satisfy the above 
approximate regularity and spectral gap conditions, w.h.p., whenever 
$a-b > \sqrt{(a+b) \cdot \log n}$ (Lemma~\ref{lemma:irreg}).

We remark that the latter result for the stochastic block model follows from an 
analysis that applies to general \emph{non-random} clustered graphs and 
hence does not exploit crucial properties of random graphs.
A further technical contribution of this paper is a refined, ad-hoc analysis of
the \avg dynamics for the $\planted$ model, showing that this protocol achieves weak-reconstruction, 
correctly classifying a $1-\epsilon$ fraction of vertices,  in logarithmic time
whenever $a-b > \Omega_\epsilon (\sqrt{(a+b)})$ and the expected degree 
$d = a+b$ grows at least logarithmically (Theorem~\ref{thm:tight_bounds}). 
This refined analysis requires a deeper understanding of the eigenvectors of 
the \textit{transition matrix} of $G$.
Coja-Oghlan~\cite{coja2010graph} defined certain graph properties 
that guarantee that a near-optimal bisection can be found based on 
eigenvector computations of the \emph{adjacency matrix}. Similarly, we show 
simple sufficient conditions under which a right eigenvector of the second largest 
eigenvalue of the transition matrix of a graph approximately 
identifies the hidden partition.  We give a tight analysis 
of the spectrum of the transition matrix of graphs sampled from the 
stochastic block model in Section~\ref{ssec:thm_irreg}. Notice that 
the analysis of the transition matrix is somewhat harder than that of the adjacency matrix, 
since the entries are not independent of each other; we were not 
able to find comparable results in the existing literature.

\subsection{Related work and additional remarks} \label{ssec:related}

\paragraph{Dynamics for block reconstruction.}
Dynamics received 
considerable attention in the recent past across different research 
communities, both as efficient   distributed 
algorithms~\cite{AAE07,benezit2009interval,olshevsky2009convergence,metivier2011optimal}
and as abstract models of \emph{natural}  interaction mechanisms 
inducing emergent behavior in complex systems \cite{afek2011biological,cardelli2012cell,Doty14,HouseHunt,MNT14}. 
For instance, simple averaging dynamics have 
been considered to model opinion formation mechanisms
\cite{degroot1974reaching,doi:10.1080/0022250X.1990.9990069}, while 
a number of other dynamics have been proposed to describe different
social phenomena \cite{easley2010networks}. 
\noindent
\emph{Label propagation algorithms}
\cite{raghavan2007near} are dynamics based on majority updating rules~\cite{AAE07} 
and have been applied to some
computational problems including clustering.
Several papers present experimental results  for
such protocols on specific classes of clustered graphs \cite{barber2009detecting,liu2010advanced,raghavan2007near}.
The only available rigorous analysis of label 
propagation algorithms on planted partition graphs  is the one presented in 
\cite{kothapalli2013analysis}, where the authors propose and analyze  a label propagation protocol
on   $\planted$ for dense topologies.
In particular, their analysis considers  the case where 
$p = \Omega(1/n^{1/4-\epsilon})$ and $q = \bigO(p^2)$, a parameter 
range in which very dense clusters of constant diameter separated by a 
sparse cut occur w.h.p.
In this setting, characterized by  a polynomial gap between $p$ and $q$,
simple combinatorial and concentration 
arguments show that the protocol converges in constant expected 
time. They also conjecture a logarithmic bound for sparser topologies.

Because of their relevance for the reconstruction problem, we also mention
another class of algorithms, \emph{belief propagation algorithms}, whose
simplicity is close to that of dynamics. 
\emph{Belief propagation algorithms}  are best known as
message-passing algorithms for performing inference in graphical models
\cite{MAC03}. 
Belief propagation cannot be   considered a dynamics: At each round, each node sends a
different message to each neighbors, thus  the  update rule is not symmetric
w.r.t. the neighbors,  requiring thus  port numbering \cite{SUO13},  and the   required  local memory
  grows linearly in the degree of the node. 
Non-rigorous methods have given strong evidence that some \emph{belief
propagation algorithms} are optimal for the reconstruction problem
\cite{decelle_asymptotic_2011}.
Its rigorous analysis is a major challenge; in particular, the convergence to
the correct value of belief propagation is far from being fully-understood on
graphs which are not trees \cite{WEI00, MK07}.
As we discuss in the next subsection,  more complex algorithms,  
inspired  by belief propagation,   have been rigorously shown to perform
reconstruction optimally.  

\vspace{-12pt}
\paragraph{General algorithms for block reconstruction.}
While an important goal, improving performance of spectral clustering 
algorithms and testing their limits to the purpose of block 
reconstruction is not the main driver behind this work. 
Still, for the sake of completeness, we next compare  our dynamics  to previous general  algorithms 
for block reconstruction.   

%This 
%paper  for the first time shows rigorous evidence of the 
%possibilities offered by completely decentralized and extremely 
%simple dynamics to perform important mining tasks, such as community 
%detection in clustered non-dense graphs, whose complexity appears 
%far beyond most of the tasks to which this kind of dynamics have been 
%traditionally applied in the area of distributed computing (e.g. 
%consensus problems). 

Several algorithms for community detection are  \textit{spectral}: They typically 
consider the eigenvector associated to the second eigenvalue of the adjacency 
matrix $A$ of $G$, or the eigenvector corresponding to the largest eigenvalue 
of the matrix $A- \frac dn J$~\cite{boppana1987eigenvalues,coja-oghlan_spectral_2005,coja2010graph,mcsherry2001spectral}
\footnote{$A$ is the adjacency matrix of $G$, $J$ is the matrix having all entries equal to $1$, $d$ is the average degree and $n$ is the number
of vertices.}, since these are correlated with the hidden partition.  
More recently spectral algorithms have been 
proposed~\cite{abbe2015detection,coja2010graph,mossel_proof_2013,krzakala2013spectral,BLM15} 
that find a weak reconstruction even in the sparse, tight regime. 
%Thus, an eigenvalue computation can be used to find an approximation of the hidden
%partition which, in certain cases, can be refined to an exact 
%computation of the hidden partition using a post-processing phase. 
%For instance, Coja-Oghlan's algorithm \cite{coja2010graph} considers 
%the eigenvector of the largest eigenvalue of the matrix $A- \frac d{2n} J$.

Even though the above mentioned algorithms have been presented in a centralized setting,
spectral algorithms turn out to be 
a feasible approach also for distributed models. Indeed,
Kempe and McSherry~\cite{kempe2004decentralized} show that eigenvalue 
computations can be performed in a distributed
fashion, yielding distributed algorithms for community detection in various 
models, including the stochastic block model.
However, the algorithm of Kempe and McSherry as well as any distributed version of the
above mentioned centralized algorithms are not dynamics. Actually, adopting the   effective 
concept from   Hassin and Peleg in \cite{hassin2001distributed}, such algorithms are even
not  \emph{light-weight}:
Different and not-simple operations are executed at different rounds, nodes have identities, 
messages are treated differently depending on the originator, and so on.   
Moreover, a crucial aspect is convergence time: 
The mixing time of the simple random walk on the graph is a bottleneck for the 
distributed algorithm of Kempe and McSherry and 
for any algorithm that performs community detection in a graph $G$ 
by employing the power method or the Lanczos method~\cite{lanczos1950iteration} as a subroutine to 
compute the eigenvector associated to the second eigenvalue of the 
adjacency matrix of $G$. Notice that the mixing time of  graphs sampled from 
$\planted$ is at least of the order of $\frac{a+b}{2b}$: hence, it  can be 
super-logarithmic and even $n^{\Omega(1)}$.

%\vspace{1cm}
%This eigenvector 
%can be found by a centralized algorithm using $\bigO(\log n)$ matrix-vector 
%multiplications, but to perform such a computation  in a distributed 
%fashion the nodes need to compute the global average of the initial 
%state (otherwise they cannot compute multiplication by $J$), and the mixing 
%time of the graph is   a  bottleneck for such a computation.
%Such algorithms
%can also be run  in   the \local  distributed communication model. 
%However, all of them are  far from being dynamics and, even more, 
%they are  not efficient (i.e. superlogarithmic complexity) in terms of 
%local memory, message complexity, and convergence time. 
 
%\comment{INSERIRE: Non-Dynamics Algorithms}

\medskip

In  general, the reconstruction problem has been studied extensively 
using a multiplicity of techniques, which include combinatorial algorithms  
\cite{dyer1989solution}, belief propagation \cite{decelle_asymptotic_2011}, spectral-based techniques \cite{mcsherry2001spectral,coja2010graph},  Metropolis approaches \cite{jerrum_metropolis_1998}, and semidefinite programming \cite{abbe2014exact}, among others.
Stochastic Block Models have been studied in a number of areas, 
including computer science 
\cite{boppana1987eigenvalues,mcsherry2001spectral,massoulie_community_2014}, 
probability theory~\cite{mossel_reconstruction_2014}, 
statistical physics \cite{decelle_asymptotic_2011}, and social 
sciences \cite{holland1983stochastic}. 
 Unlike the distributed setting, where the existence  of  
\emph{light-weight   protocols} \cite{hassin2001distributed} is  the main issue (even in non-sparse regimes), 
  in centralized setting  strong attention has been devoted
to establishing sharp thresholds for weak and strong reconstruction. 
Define $a= np$  as the expected {\em internal degree} (the number of 
neighbors  that each node has on the same side of the partition) and   
$b= nq$ as the expected {\em external degree} (the number of neighbors 
that each node has on the opposite side of the partition). Decelle et 
al. \cite{decelle_asymptotic_2011} conjectured that weak reconstruction 
is possible if and only if  $a-b> 2\sqrt{a + b}$. This was proved by 
Massoulie and  Mossel et al. 
\cite{mossel_proof_2013,massoulie_community_2014,mossel_reconstruction_2014}.
Strong recovery is instead possible if and only if $a-b > 2\sqrt{a + b} +
\log n$  \cite{abbe2014exact}.

Versions of the stochastic block model in which the 
random graph is regular have also been 
considered~\cite{mossel_reconstruction_2014,brito_recovery_2015}. 
In particular Brito et al.~\cite{brito_recovery_2015} show that strong 
reconstruction is possible in polynomial-time when $a-b > 2\sqrt{a+b-1}$.

%\paragraph{Final remarks.}
%While we do think 
%that improving performance is an important goal, this   is not the main 
%driver behind this work. Rather, we believe the main 
%contribution of this paper is providing rigorous evidence for 
%the effectiveness of completely decentralized and elementary dynamics 
%in addressing important mining tasks, such as community detection in 
%clustered non-dense graphs. The complexity of such a task appears to lie far beyond the 
%traditional tasks to which this kind of dynamics have been 
%traditionally applied in the area of distributed computing (e.g. 
%consensus problems).

\section{Preliminaries} \label{sec:prely}
\paragraph{Distributed block reconstruction.} 
Let $G = \left((V_1, V_2), E\right)$ be a graph with $V_1 \cap V_2 = \emptyset$.
A weak (block) reconstruction is a two-coloring of the nodes that separates $V_1$ and $V_2$
up to a small fraction of the nodes. Formally, we define an $\epsilon$-\textit{weak 
reconstruction} as a map 
$f \,:\, V_1 \cup V_2 \rightarrow \{ \texttt{red}, \texttt{blue} \}$ such that
there are two subsets $W_1 \subseteq V_1$ and $W_2 \subseteq V_2$ with 
$|W_1 \cup W_2| \geqslant (1 - \epsilon) |V_1 \cup V_2|$
and $f(W_1) \cap f(W_2) = \emptyset$. When $\epsilon = 0$ we say that $f$ is a 
\textit{strong reconstruction}.
Given a graph $G = \left((V_1, V_2), E\right)$, 
the block reconstruction problem requires computing an $\epsilon$-reconstruction  of $G$.

In this paper, we propose the following distributed protocol. It is based on the averaging dynamics
and produces a coloring of the nodes at the end of every round. In the next
two sections we show that, within $\bigO(\log n)$ rounds, the coloring computed by the algorithm we propose 
achieves \emph{strong reconstruction} of the two blocks in the case 
of clustered regular graphs and \emph{weak 
reconstruction} in the case of clustered non-regular graphs. 

\iffalse
\footnote{Note that 
the \avg protocol is itself a dynamics according to our definition 
given in the introduction.}
\fi

\begin{center}
\fbox{
\begin{minipage}{15cm}
\avg protocol:
\begin{description}
\item[Rademacher initialization:] At round $t = 0$ every node $v \in V$ 
independently samples its value from $\{- 1, +1\}$ uniformly at random;
\item[Updating rule:] At each subsequent round $t \geq 1$, every node $v \in V$
\begin{enumerate} 
\item (\avg dynamics) Updates its value $\bx^{(t)}(v)$ to the average of the 
values of its neighbors at the end of the previous round
\item (Coloring) If $\bx^{(t)}(v) \geqslant \bx^{(t-1)}(v)$ then $v$ sets 
$\texttt{color}^{(t)}(v) = \texttt{blue}$
otherwise $v$ sets $\texttt{color}^{(t)}(v) = \texttt{red}$.  
\end{enumerate}
\end{description}
\end{minipage}
}
\end{center}

The choice of the above coloring rule will be clarified 
in the next section, just before Theorem \ref{thm:reg_stop}. We give 
here  two remarks. First of all, the algorithm is completely 
oblivious to time, being a dynamics in the strictest sense. Namely, 
after initialization the protocol iterates over and over at every 
node. Convergence to a (possibly weak) reconstruction is a property 
of the protocol, of which nodes are not aware, it is something that 
eventually occurs. Second, the clustering criterion is completely {
\em local}, in the sense that a decision is individually and 
independently made by each node in each round, only on the basis of 
its state in the current and previous rounds. This may seem 
counterintuitive at first, but it is only superficially so. Despite 
being local, the clustering criterion uses information that reflects 
the global structure of the network, since nodes' values are related 
to the second eigenvector of the network's transition matrix. 
  
\paragraph{The \avg dynamics and random walks on $G$.}
The analysis of the \avg dynamics on a graph $G$ is closely related to the 
behavior of random walks in $G$, which are best studied using tools from 
linear algebra that we briefly summarize below.

Let $G=(V,E)$ be an undirected graph (possibly with multiple edges and self 
loops),  $A$ its adjacency matrix and $d_i$ the degree of node 
$i$. The {\em transition matrix} of (the random walk on) $G$ is the matrix 
$P = D^{-1} A$, where $D$ is the diagonal matrix such that $D_{i,i} = d_i$.
$P_{i,j} = (1/d_i) \cdot A_{i,j}$ is thus
the probability of going from $i$ to $j$ in one-step of the random walk on $G$. $P$
operates as the random walk process on $G$ by left multiplication, and as the \avg dynamics by right multiplication.
For $i = 1, 2$, define $\bone_{V_i}$, as the $|V|$-dimensional 
vector, whose $j$-th component is $1$ if $j\in 
V_i$, it is $0$ otherwise.
If $(V_1, V_2)$ is a bipartition of the nodes with $|V_1| = |V_2| = n$, 
we define the \textit{partition indicator vector} $\bchi = \bone_{V_1} 
- \bone_{V_2}$.
%i.e., $\bchi(i) = 1$ if $i\in V_1$ and $\bchi(i) = -1$ if $\in V_2$.
If $\bx$ is the initial vector of values, after $t$ rounds of the \avg dynamics
the vector of values at time $t$ is $\bx^{(t)} = P^t \bx$.
The product of the power of a matrix times a vector is best understood in 
terms of the spectrum of the matrix, which is what we explore in the next section. 

In what follows we always denote by $\lambda_1 \geq \ldots \geq 
\lambda_{2n}$ the eigenvalues of $P$. Recall that, since $P$ is a stochastic matrix
we have $\lambda_1 = 1$ and $\lambda_{2n} \geqslant -1$, moreover for all 
graphs that are connected and not bipartite it holds that $\lambda_2 < 1$ and 
$\lambda_{2n} > -1$. We denote by $\lambda$ the largest, in
absolute value, among all but the first two eigenvalues, namely
$\lambda = \max\left\{ | \lambda_i | \,:\, i = 3, 4, \dots, 2n \right\}$.
Unless otherwise specified, the norm of a vector $\bx$ is 
the $\ell_2$ norm $\| \bx \| := \sqrt{\sum_i (\bx(i))^2}$ and the norm of a matrix 
$A$ is the spectral norm $\|A\| := \sup_{\bx: \| \bx \|=1} \| A\bx \|$.
For a diagonal matrix, this is the largest diagonal entry in absolute value.

\section{Strong reconstruction for regular graphs} \label{sec:regular}
If $G$ is $d$-regular then $P = (1/d) A$ is a real symmetric matrix and 
$P$ and $A$ have the same set of eigenvectors.
We denote by $\bv_1 = (1/\sqrt { 2n }) \bone, \bv_2, \dots, \bv_{2n}$ a 
basis of orthonormal eigenvectors, where each $\bv_i$ is the eigenvector associated to
eigenvalue $\lambda_i$. Then, we can write a vector $\bx$
as a linear combination $\bx = \sum_i \alpha_i \bv_i$ and we have:
\begin{equation} 
    P^t \bx = \sum_i \lambda^t_i \alpha_i \bv_i 
    = \frac{1}{2n} \left(\sum_i \bx(i) \right) \bone + \sum_{i=2}^{2n} \lambda_i^t \alpha_i \bv_i,
    \label{eq:decompreg} 
\end{equation}
which implies that
$\mathbf{x}^{(t)} = P^t \bx$ tends to $\alpha_1 \bv_1$ as $t$ tends to 
infinity, i.e., it converges to the vector that has the 
average of $\bx$ in every coordinate.

We next show that, if the regular graph is ``well'' 
clustered, then the \avg protocol produces a strong reconstruction of the
two clusters w.h.p.

\begin{definition}[Clustered Regular Graph]\label{def:clusteredregular}
A $(2n,d,b)$-clustered regular graph 
$G = ((V_1,V_2),E)$ is a graph over vertex set $V_1 \cup V_2$, with 
$|V_1|=|V_2|=n$ and such that: (i) Every node  has degree $d$ and 
(ii) Every node in cluster $V_1$ has $b$ neighbors in cluster $V_2$ and 
every node in $V_2$ has $b$ neighbors in $V_1$.
\end{definition}

%\noindent
%Let $G = ((V_1,V_2),E)$ be a $(2n,d,b)$-clustered regular graph  
%with adjacency matrix $A$ and transition matrix $P = (1/d)\cdot A$.
 
We know that $\bone$ is an eigenvector of $P$ with eigenvalue $1$, and it is 
easy to see that the partition indicator vector $\bchi$ is an eigenvector of 
$P$ with eigenvalue $1 - 2b/d$ (see Observation~\ref{apx-obs:secondeig} 
in Appendix~\ref{apx:la}).
We first  show  that, if $1- 2b/d$ happens to be the second 
eigenvalue, after $t$ rounds of the \avg 
dynamics, the configuration $\bx^{(t)}$  is close to a linear 
combination of $\bone$ and $\bchi$. Formally,
if $\lambda < 1 - 2b/d$ we prove (see Lemma~\ref{lem:metastable} in 
Appendix~\ref{sec:apx-wc}) that there are reals $\alpha_1,\alpha_2$ such that 
for every $t$

\begin{equation} \label{eq:metaregular}
\mathbf{x}^{(t)} = \alpha_1 \bone + \alpha_2 \lambda_2^t \bchi + \be^{(t)} \ \
\mbox{ where  } \ \left\| \be^{(t)} \right\|_\infty \leq \lambda^t \sqrt {2n} \, .
\end{equation}

Informally speaking, the equation above naturally ``suggested''  the 
choice of the coloring rule in the \avg protocol, once we considered 
the difference of two consecutive values of any node $u$, i.e.,
%Indeed, from \eqref{eq:metaregular}, consider the difference of two consecutive values of any node $u$, i.e.,

\begin{equation} \label{eq:diffregular}
\bx^{(t-1)}(u) - \bx^{(t)}(u) 
= \alpha_2 \lambda_2^{t-1} (1 - \lambda_2) \bchi(u) 
+ \be^{(t-1)}(u) - \be^{(t)}(u) \, .
 \end{equation}
Intuitively, if $\lambda$ is sufficiently small, we can exploit 
the bound on $\left\| \be^{(t)} \right\|_\infty$ in \eqref{eq:metaregular}
to show that, after a short initial phase,  the sign of $\bx^{(t-1)}(u) - \bx^{(t)}(u)$ is essentially determined   by 
$\bchi(u)$, thus by the community $u$ belongs to, w.h.p. 
The following theorem and its proof provide formal statements of the above fact.

\begin{theorem}[Strong reconstruction]\label{thm:reg_stop}
Let $G = ((V_1, V_2),E)$ be a connected $(2n,d,b)$-clustered regular graph with 
$1 - 2b/d > (1+\delta) \lambda$ for an arbitrarily-small constant $\delta > 0$.
Then the \avg protocol produces a \strong within $\bigO(\log n)$ rounds, w.h.p.
\end{theorem}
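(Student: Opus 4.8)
First I would observe that the hypothesis $1 - 2b/d > (1+\delta)\lambda$ forces $1-2b/d$ to be exactly the second eigenvalue $\lambda_2$ of $P$: indeed $\bchi$ is an eigenvector of $P$ with eigenvalue $1-2b/d$, and this value exceeds the modulus of every eigenvalue other than $\lambda_1=1$, so it is the second largest and is separated from the rest of the spectrum by the multiplicative gap $\lambda_2 > (1+\delta)\lambda$. Moreover $\lambda_2 = 1-2b/d \in (0,1)$ since $G$ is connected with $0 < b < d/2$. I would then invoke the decomposition \eqref{eq:metaregular} of Lemma~\ref{lem:metastable}, which for the Rademacher initialization gives $\bx^{(t)} = \alpha_1\bone + \alpha_2\lambda_2^t\bchi + \be^{(t)}$ with the tail bound $\|\be^{(t)}\|_\infty \le \lambda^t\sqrt{2n}$; note this bound is \emph{deterministic}, since it only uses $\|\bx\| = \sqrt{2n}$, which holds for any $\pm1$ vector. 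Subtracting consecutive configurations as in \eqref{eq:diffregular}, for every node $u$,
\[
\bx^{(t)}(u) - \bx^{(t-1)}(u) = -\,\alpha_2\lambda_2^{t-1}(1-\lambda_2)\,\bchi(u) + \bigl(\be^{(t)}(u) - \be^{(t-1)}(u)\bigr).
\]
Since $|\bchi(u)| = 1$ and $\lambda_2^{t-1}(1-\lambda_2) > 0$, the color of $u$ at round $t$ (blue iff this difference is nonnegative) is governed by the sign of $-\alpha_2\bchi(u)$ as soon as the leading term, of magnitude $|\alpha_2|\lambda_2^{t-1}(1-\lambda_2)$, dominates the error $|\be^{(t)}(u)| + |\be^{(t-1)}(u)| \le 2\lambda^{t-1}\sqrt{2n}$.

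The plan is to make this domination quantitative and uniform over all nodes. Using $\lambda/\lambda_2 < 1/(1+\delta)$, the required inequality
\[
|\alpha_2|(1-\lambda_2) > 2\sqrt{2n}\left(\frac{\lambda}{\lambda_2}\right)^{t-1}
\]
is implied by $(1+\delta)^{t-1} > 2\sqrt{2n}/\bigl(|\alpha_2|(1-\lambda_2)\bigr)$, hence holds for every $t \ge t_0$ with $t_0 = \bigO\!\bigl(\log(n/|\alpha_2|)/\log(1+\delta)\bigr)$. Crucially, the $\ell_\infty$ error bound controls all $2n$ coordinates simultaneously, so no union bound over nodes is needed: once $t \ge t_0$, every $u \in V_1$ receives the sign of $-\alpha_2$ and every $u \in V_2$ the opposite sign, i.e.\ the coloring separates $V_1$ from $V_2$ and is a \strong. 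Since the threshold condition only becomes easier as $t$ grows, this persists for all larger $t$, so the reconstruction is stable.

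The one genuinely probabilistic ingredient, and the main obstacle, is a lower bound on $|\alpha_2|$ ensuring $t_0 = \bigO(\log n)$. Here $\alpha_2 = \langle\bx,\bchi\rangle/\|\bchi\|^2 = \tfrac{1}{2n}\bigl(\sum_{u\in V_1}\bx(u) - \sum_{u\in V_2}\bx(u)\bigr)$ is a scaled sum of $2n$ independent Rademacher variables, and in particular an even integer. I would argue by anti-concentration that this sum is nonzero, hence $|\langle\bx,\bchi\rangle| \ge 2$ and $|\alpha_2| \ge 1/n$, with probability $1 - \bigO(n^{-1/2})$, since the probability that a symmetric $\pm1$ sum of length $2n$ vanishes is $\binom{2n}{n}2^{-2n} = \Theta(n^{-1/2})$; this is w.h.p.\ with $\gamma = 1/2$. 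Plugging $|\alpha_2| \ge 1/n$ (and $1-\lambda_2 = 2b/d \ge 1/n$) into the bound above yields $t_0 = \bigO(\log n)$ for constant $\delta$, completing the argument. The subtlety to get right is that one neither needs nor can hope for a constant floor on $|\alpha_2|$ — Rademacher anti-concentration forbids it — but the geometric gap $(1+\delta)^{t}$ converts any $1/\mathrm{poly}(n)$ lower bound on $|\alpha_2|$ into a logarithmic running time, which is exactly what the theorem asserts.
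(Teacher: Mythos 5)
Your proposal is correct and follows essentially the same route as the paper: the decomposition of Lemma~\ref{lem:metastable}, the consecutive-difference argument showing the sign is governed by $\alpha_2\bchi(u)$ once the geometric gap $(\lambda_2/\lambda)^{t}\geq(1+\delta)^{t}$ dominates the $\ell_\infty$ error, and a polynomial lower bound on $|\alpha_2|$ from Rademacher anti-concentration. The only (harmless) variation is that you derive $|\alpha_2|\geq 1/n$ from the parity of the $\pm1$ sum rather than from the paper's Lemma~\ref{rad.sum}, but both rest on the same $\binom{2n}{n}2^{-2n}=\Theta(n^{-1/2})$ estimate and yield the same failure probability.
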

\skproof
From~\eqref{eq:diffregular}, we have that 
$\texttt{sgn}\left( \bx^{(t-1)}(u) - \bx^{(t)}(u) \right) 
= \texttt{sgn}\left(\alpha_2 \bchi(u) \right)$ whenever
\begin{equation}\label{eq:clustcond}
\left| \alpha_2 \lambda_2^{t-1} (1 - \lambda_2) \right| > 
\left| \be^{(t-1)}(u) - \be^{(t)}(u) \right|
\end{equation}
From~\eqref{eq:metaregular} we have that 
$\left| \be^{(t)}(u) \right| \leqslant \lambda^t \sqrt{2n}$, 
thus~\eqref{eq:clustcond} is satisfied for all $t$ such that
\[
t - 1 \geqslant 
\log \left(\frac{2 \sqrt{2n}}{|\alpha_2| (1 - \lambda_2)}
\right) \cdot \frac{1}{\log\left(\lambda_2 / \lambda \right)}\,.
\]

The second key-step of the proof relies on the randomness of the initial 
vector. Indeed, since $\bx$ is a vector of independent and uniformly 
distributed random variables in $\{-1,1\}$, the absolute difference 
between the two partial averages in the two communities, i.e. $|\alpha_2|$, is 
``sufficiently'' large, w.h.p. 
More precisely, observe that both $\langle \bx, \bchi 
\rangle$ and $\langle \bx , \bone \rangle$ have the distribution of 
a sum of $2n$ Rademacher random variables. Such a sum takes the 
value $2k-2n$ with probability $\frac{1}{2^n} \binom{2n}{k}$, and 
so every possible value has probability at most $\frac{1}{2^n}\binom{2n}{n} 
\approx \frac{1}{\sqrt {2\pi n}}$. Consequently, if $R$ 
is the sum of $2n$ Rademacher random variables, we have
$ \pr{}{ |R| \leq \delta \sqrt {2n}} \leq \bigO(\delta)$. 
This    implies that
$|\alpha_2| = \frac{1}{2n} \langle \bchi , \bx \rangle \geqslant n^{-\gamma}$, 
for some positive constant $\gamma$, w.h.p. (see Lemma~\ref{rad.sum}).
The theorem thus follows from the above bound on $|\alpha_2|$ and from the 
hypothesis $\lambda_2 \geqslant  (1+\delta)\lambda$. 
\qed

\medskip\noindent
\emph{Remark}. Graphs to which Theorem~\ref{thm:reg_stop} apply are those 
consisting of two regular expanders connected by a regular sparse cut.
Indeed, let $G = \left((V_1, V_2), E \right)$ be a $(2n, d,
b)$-clustered regular graph, and let $\lambda_{A}=\max\left\{ \lambda_2 (A_1),
\lambda_2(A_2)\right\}$ and $\lambda_B=\lambda_2(B)$, where 
$A_1$, $A_2$ and $B$ are the adjacency
matrices of the subgraphs induced by $V_1$, $V_2$ and the cut between $V_1$ and
$V_2$, respectively. 
Since $\lambda= \frac ad \lambda_{A}+ \frac bd \lambda_B$, if  
$a-b >(1+\epsilon)(a \lambda_A + b\lambda_B)$, $G$
satisfies the hypothesis of Theorem~\ref{thm:reg_stop}.

\paragraph{Regular stochastic block model.}
We can use Theorem~\ref{thm:reg_stop} to  prove  that
the \avg protocol achieves strong reconstruction in the regular stochastic 
block model. In the case of two communities, a graph on $2n$ vertices is 
obtained as follows:
Given two parameters $a(n)$ and $b(n)$ (\emph{internal} and \emph{external} 
degrees, respectively), partition vertices into two 
equal-sized subsets $V_1$ and $V_2$ and then sample a random 
$a(n)$-regular graph over each of $V_1$ and $V_2$ and a random 
$b(n)$-regular graph between $V_1$ and $V_2$. 
This model can be instantiated in different ways depending on how one samples 
the random regular graphs (for example, via the uniform distribution over 
regular graphs, or by taking the disjoint union of random matchings) \cite{mossel_reconstruction_2014,brito_recovery_2015}.

If $G$ is a graph sampled from the regular stochastic block model with 
internal and external degrees $a$ and $b$ respectively, then it is a
$(2n,d,b)$-clustered graph with largest eigenvalue of the 
transition matrix $1$ and corresponding eigenvector $\bone$,   
while $\bchi$ is also an eigenvector, with eigenvalue 
$1-2b/d$, where $d:= a+b$.
Furthermore, we can derive  the following  upper bound on the maximal absolute 
value achieved by   the  other   $2n-2$  eigenvalues corresponding to 
eigenvectors orthogonal to $\bone$ and $\bchi$:
\begin{equation} \label{eq:regsbm}
\lambda \leqslant \frac{2}{a+b} (\sqrt {a+b-1} + o_n(1))
\end{equation}

\noindent
This bound can be proved using some general result of Friedman and 
Kohler~\cite{FK14} on {\em random degree $k$ lifts} of a graph. 
(see Lemma \ref{ov:lm:regulsbm} in Appendix~\ref{sec:apx-random}).   
Since $\lambda_2=\frac{a-b}{a+b}$, using \eqref{eq:regsbm} 
in  Theorem \ref{thm:reg_stop}, we  get a strong 
reconstruction for the regular stochastic block model:
 
\begin{corollary}\label{cor.regular}
Let $G$ be a random graph sampled from the regular stochastic block model with
$a - b > 2(1 + \eta) \sqrt{a+b}$ for any constant $\eta>0$, then the 
\avg protocol produces a \strong in $\bigO(\log n)$ rounds,  w.h.p.
\end{corollary}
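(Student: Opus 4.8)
The plan is to derive the corollary directly from Theorem~\ref{thm:reg_stop} by verifying, w.h.p., that a graph $G$ drawn from the regular stochastic block model meets its two hypotheses: that $G$ is a connected $(2n,d,b)$-clustered regular graph, and that $1 - 2b/d > (1+\delta)\lambda$ for some constant $\delta > 0$. The only substantive probabilistic input is the spectral estimate \eqref{eq:regsbm}; granting it, the corollary reduces to a short comparison of constants.

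First I would recall the eigenstructure already identified for the regular model. Setting $d := a+b$, the sampled graph is $(2n,d,b)$-clustered, the vector $\bone$ is the top eigenvector with $\lambda_1 = 1$, and $\bchi$ is an eigenvector with eigenvalue $1 - 2b/d = (a-b)/(a+b)$. By \eqref{eq:regsbm}, the remaining $2n-2$ eigenvalues, whose eigenvectors are orthogonal to $\bone$ and $\bchi$, satisfy $\lambda \leq \frac{2}{a+b}\bigl(\sqrt{a+b-1} + o_n(1)\bigr)$. Once I establish below that $(a-b)/(a+b)$ strictly exceeds this bound, it follows both that $\lambda_2 = (a-b)/(a+b)$ is genuinely the second eigenvalue and that the eigenvalue $1$ is simple (since then $\lambda_2,\lambda < 1$); simplicity of the eigenvalue $1$ is exactly connectivity of $G$, which disposes of the first hypothesis for free.

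The heart of the argument is the spectral-gap inequality. Substituting $1 - 2b/d = (a-b)/(a+b)$ and the bound on $\lambda$, the condition $1 - 2b/d > (1+\delta)\lambda$ is implied, after multiplying through by $a+b$, by
\begin{equation}
a - b > 2(1+\delta)\left(\sqrt{a+b-1} + o_n(1)\right).
\end{equation}
Given the hypothesis $a - b > 2(1+\eta)\sqrt{a+b}$, I would fix $\delta := \eta/2$ and combine $\sqrt{a+b-1} \leq \sqrt{a+b}$ with the fact that $(1+\delta)(1 + o_n(1)) < 1 + \eta$ for all sufficiently large $n$. This gives $2(1+\delta)(\sqrt{a+b-1} + o_n(1)) < 2(1+\eta)\sqrt{a+b} < a-b$, so the hypothesis of Theorem~\ref{thm:reg_stop} holds with $\delta = \eta/2$, and the conclusion — \strong in $\bigO(\log n)$ rounds, w.h.p. — follows at once.

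The genuine difficulty lies not in this corollary but in \eqref{eq:regsbm} itself, whose proof (via the Friedman–Kohler analysis of random degree-$k$ lifts) I am treating as a black box. The one point demanding care here is the margin between the $(1+\eta)$ of the hypothesis and the $(1+\delta)$ required by the theorem: I must absorb both the $o_n(1)$ error term and the harmless gap between $\sqrt{a+b-1}$ and $\sqrt{a+b}$ into the slack $\eta - \delta > 0$. This is precisely why the statement is asymptotic (valid for large $n$, hence w.h.p.) and why any fixed $\delta < \eta$ suffices.
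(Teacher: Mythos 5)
Your proposal is correct and follows the paper's own route: the paper likewise deduces the corollary by feeding the Friedman--Kohler bound \eqref{eq:regsbm} and the identity $\lambda_2 = (a-b)/(a+b)$ into Theorem~\ref{thm:reg_stop}, with the $(1+\eta)$ slack absorbing the $o_n(1)$ term exactly as you describe. Your explicit check of connectivity via simplicity of the eigenvalue $1$ is a small bonus the paper leaves implicit (note it additionally needs $b\geq 1$ so that $\lambda_2<1$), but the argument is the same.
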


\section{Weak reconstruction for non-regular graphs} \label{sec:non-regular-det}
The results of Section \ref{sec:regular} rely on very clear spectral 
properties of regular, clustered graphs, 
immediately reflecting their underlying topological structure. 
Intuition suggests that these properties should be 
approximately preserved if we suitably relax the notion of regularity.
With this simple intuition in mind, we generalize our approach for regular graphs to a
large class of non-regular clustered graphs. 
\begin{definition}[Clustered $\gamma$-regular graphs]\label{def:epsreggraph} 
A $(2n,d,b,\gamma)$-clustered graph $G = ((V_1,V_2),E)$ is a graph over 
vertex set $V_1 \cup V_2$, where $|V_1|=|V_2|=n$ such that:
i) Every node has degree $d \pm \gamma d$, and 
ii) Every node in $V_1$ has $b \pm \gamma d$ neighbors in $V_2$ and every
node in $V_2$ has $b\pm \gamma d$ neighbors in $V_1$.
\end{definition}

%\paragraph{The transition matrix of a non-regular graph.}
If $G$ is not regular then matrix $P = D^{-1} A$ is not symmetric in general, 
however it is possible to relate its eigenvalues and eigenvectors to those of 
a symmetric matrix as follows.
Denote the {\em normalized adjacency matrix} of $G$ as 
$N:= D^{-1/2} A D^{-1/2} = D^{1/2} P D^{-1/2}$. Notice that 
$N$ is symmetric, $P$ and $N$ have the same eigenvalues $\lambda_1,\ldots, \lambda_{2n}$, and $\bx$ is an 
eigenvector of $P$ if and only if $D^{1/2}\bx$ is an eigenvector of $N$
(if $G$ is regular then $P$ and $N$ are the same matrix). 
Let $\bw_1,\ldots,\bw_{2n}$ be a basis of orthonormal eigenvectors of $N$, 
with $\bw_i$ the eigenvector associated to eigenvalue $\lambda_i$, for every $i$.
We have that $\bw_1 = \frac {D^{1/2} \bone}{\| D^{1/2} \bone \|}$.
If we set $\bv_i := D^{-1/2} \bw_i$, we obtain a set of eigenvectors for $P$ 
and we can write $\bx = \sum_i \alpha_i \bv_i$ as a linear combination of 
them. Then, the averaging process can again be described as 
\[ 
P^t \bx = \sum_i \lambda_i^t \alpha_i \bv_i = \alpha_1 \bv_1 
+ \sum_{i=2}^{2n} \lambda_i^t \alpha_i \bv_i.
\]
So, if $G$ is connected and not bipartite, the \avg dynamics
converges to $\alpha_1 \bv_1$. In general, it is easy to see that $\alpha_i = 
\bw_i^T D^{1/2} \bx $ (see the first lines in the proof of Lemma 
\ref{lem:clustered}) and $ \alpha_1 \bv_1$ is the vector 
\[ 
(\bw_1^T D^{1/2} \bx) \cdot D^{-1/2} \bw_1 
= \frac{\bone^T D \bx}{\|D^{1/2} \bone \|^2} \bone 
= \frac{ \sum_i d_i \bx(i)}{\sum_i d_i} \cdot \bone \,.
\]

\noindent
As in the regular case, if the transition matrix $P$ of a clustered 
$\gamma$-regular graph has $\lambda_2$ close to $1$ and $|\lambda_3|, 
\ldots, | \lambda_{2n}|$ small, the \avg dynamics has a long phase in which 
$\mathbf{x}^{(t)} = P^t \mathbf{x}$ is close to $\alpha_1 \bone + \alpha_2 
\mathbf{v}_2$. 

However, providing an argument similar to the regular case is 
considerably harder, since the 
partition indicator vector $\bchi$ is no longer an eigenvector of $P$. In order to 
fix this issue, we generalize \eqref{eq:metaregular}, proving in  
Lemma~\ref{lem:clustered} that $\bx^{(t)}$ is still close to a linear  
combination of $\bone$ and $\bchi$. 
%The value $1-2b/d$ will be often used in this section, so we name it
%$\nu = 1 - \frac{2b}{d}$.
We set $\nu = 1 - \frac{2b}{d}$, since this value occurs frequently in 
this section.

%%%%%%%%%

\begin{lemma}\label{lem:clustered}
Let $G$ be a connected $(2n,d,b,\gamma)$-clustered graph with $\gamma
\leqslant 1/10$, and assume the \avg dynamics is run on $G$ with 
initial vector $\bx$.
If $\lambda < \nu $ we have:
\[
    \mathbf{x}^{(t)} = \alpha_1 \bone + \alpha_2 \lambda_2^t  \bchi 
    + \alpha_2 \lambda_2^t  \bz + \be^{(t)}\,,
\]
for some vectors $\bz$ and $\be^{(t)}$ with 
$\|\bz\| \leq \frac{88\, \gamma}{\nu - \lambda_3}\sqrt{2 n}$ and
$\|\be^{(t)}\| \leq 4 \lambda^t \| \bx \|$.
Coefficients $\alpha_1$ and $\alpha_2$ are 
$\alpha_1 = \frac{\bone^{\intercal} D\bx}{\| D^{\frac 12} \bone \|^2}$
and
$\alpha_2 = \frac{\bw_2^{\intercal}D^{\frac 12}\bx}{\bw_2^{\intercal}D^{\frac 12} \bchi}$.
\end{lemma}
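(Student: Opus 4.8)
The plan is to diagonalize the \emph{symmetric} matrix $N = D^{1/2} P D^{-1/2}$ and then transport every estimate back to $P$ through the conjugacy $\bv_i = D^{-1/2}\bw_i$. Introduce the raw coordinates $c_i := \bw_i^{\intercal} D^{1/2}\bx$, so that $\bx = \sum_i c_i \bv_i$ (apply $D^{1/2}$ and use orthonormality of the $\bw_i$), and hence $P^t\bx = \sum_i \lambda_i^t c_i \bv_i$. The $i=1$ term is $c_1\bv_1 = \frac{\bone^{\intercal}D\bx}{\|D^{1/2}\bone\|^2}\bone$, i.e. exactly the stated $\alpha_1\bone$. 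For the tail I would set $\be^{(t)} := \sum_{i\ge 3}\lambda_i^t c_i\bv_i = D^{-1/2}\sum_{i\ge3}\lambda_i^t c_i\bw_i$ and bound, using $|\lambda_i|\le\lambda$ and orthonormality of the $\bw_i$,
\[
\|\be^{(t)}\| \le \|D^{-1/2}\|\,\lambda^t\Big(\sum_i c_i^2\Big)^{1/2} = \|D^{-1/2}\|\,\lambda^t\,\|D^{1/2}\bx\| \le \|D^{-1/2}\|\,\|D^{1/2}\|\,\lambda^t\|\bx\|.
\]
Since every degree lies in $[(1-\gamma)d,(1+\gamma)d]$, one has $\|D^{-1/2}\|\,\|D^{1/2}\| = \sqrt{\max_i d_i/\min_i d_i}\le\sqrt{(1+\gamma)/(1-\gamma)}\le 4$ for $\gamma\le 1/10$, which gives the claimed $\|\be^{(t)}\|\le 4\lambda^t\|\bx\|$.

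It remains to handle the $i=2$ term, where the whole difficulty is that $\bchi$ is no longer an eigenvector. Set $\beta_i := \bw_i^{\intercal}D^{1/2}\bchi$, so that $\bchi = \sum_i\beta_i\bv_i$. Then $\lambda_2^t c_2\bv_2 = \lambda_2^t\frac{c_2}{\beta_2}(\beta_2\bv_2) = \lambda_2^t\alpha_2(\bchi+\bz)$ with $\alpha_2 = c_2/\beta_2$ (precisely the stated coefficient) and $\bz := \beta_2\bv_2-\bchi = -\sum_{i\ne 2}\beta_i\bv_i$. Putting the three pieces together reproduces the asserted decomposition, and it remains to prove $\|\bz\|\le\frac{88\gamma}{\nu-\lambda_3}\sqrt{2n}$, i.e. to show that $D^{1/2}\bchi$ is almost parallel to $\bw_2$.

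The engine for this is that $\bchi$ is an \emph{approximate} eigenvector of $P$ with eigenvalue $\nu$. A direct computation from $\gamma$-regularity shows that for $u\in V_1$ one has $(P\bchi)(u) = 1 - 2b_u/d_u$ with $b_u\in[b-\gamma d,b+\gamma d]$ and $d_u\in[(1-\gamma)d,(1+\gamma)d]$ (and symmetrically on $V_2$), so $P\bchi = \nu\bchi + \br$ with $\|\br\|_\infty = 2\max_u|b/d-b_u/d_u|\le 4\gamma/(1-\gamma)$ (using $b\le d$), hence $\|\br\|\le\frac{4\gamma}{1-\gamma}\sqrt{2n}$. Conjugating by $D^{1/2}$ and using $D^{1/2}P = N D^{1/2}$ turns this into $N(D^{1/2}\bchi) = \nu(D^{1/2}\bchi)+D^{1/2}\br$; expanding in the orthonormal basis yields $\sum_i(\lambda_i-\nu)^2\beta_i^2 = \|D^{1/2}\br\|^2$. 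Since $|\lambda_i|\le\lambda<\nu$ forces $\nu-\lambda_i\ge\nu-\lambda_3$ for every $i\ge3$, dropping the nonnegative $i=1,2$ terms gives $\sum_{i\ge3}\beta_i^2\le\|D^{1/2}\br\|^2/(\nu-\lambda_3)^2$, and the norms pair up as $\|D^{-1/2}\|\,\|D^{1/2}\br\|\le\sqrt{(1+\gamma)/(1-\gamma)}\,\|\br\|$, so the $\sqrt d$ factors cancel and $\|\sum_{i\ge3}\beta_i\bv_i\|=O\!\big(\gamma/(\nu-\lambda_3)\big)\sqrt{2n}$. The first-eigenvector part must be estimated separately, because the gap to $\lambda_1=1$ is $2b/d$, unrelated to $\nu-\lambda_3$: here $\bv_1=\bone/\|D^{1/2}\bone\|$ gives $\|\beta_1\bv_1\| = \frac{|\sum_i d_i\bchi(i)|}{\sum_i d_i}\sqrt{2n}\le\frac{\gamma}{1-\gamma}\sqrt{2n}$, the numerator being the degree imbalance between $V_1$ and $V_2$. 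Adding the two contributions via the triangle inequality and using $\nu-\lambda_3\le 1$ absorbs everything into $\|\bz\|\le\frac{88\gamma}{\nu-\lambda_3}\sqrt{2n}$.

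The main obstacle is bookkeeping the non-orthogonality of the $\bv_i$: one cannot take $\ell_2$ norms of combinations of the $\bv_i$ directly, and must route every estimate through $N$ and the orthonormal $\bw_i$, paying a factor $\|D^{\pm1/2}\|$ each time. Keeping these factors paired so that the $\sqrt d$'s cancel, and recognizing that the $\bw_1$-component of $D^{1/2}\bchi$ needs its own direct estimate because its spectral separation $2b/d$ is not controlled by $\nu-\lambda_3$, is where the care lies; the numerical constants ($4$ and $88$) are simply what falls out once one stops optimizing.
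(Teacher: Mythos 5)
Your proof is correct and follows essentially the same route as the paper's: expand $D^{1/2}\bx$ in the orthonormal eigenbasis of $N$, transport back through $D^{-1/2}$, and control $\bz$ by showing $D^{1/2}\bchi$ is nearly parallel to $\bw_2$, using that $\bchi$ is an approximate $\nu$-eigenvector of $P$ together with the gap $\nu-\lambda_3$ for the components $i\geq 3$ and a separate degree-imbalance estimate for the $\bw_1$-component (this is exactly the content of the paper's Lemma~\ref{lemma:sqrtdchiapprox}). The only cosmetic difference is that the paper first projects out the $\bw_1$-direction and then applies the gap argument to the residual $\by$, whereas you apply the gap identity to $D^{1/2}\bchi$ directly and bound $\beta_1$ afterwards; also your parenthetical $\nu-\lambda_3\leq 1$ should read $\nu-\lambda_3\leq 2$, which is harmless given the slack in the constant $88$.
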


\skproof
We prove the following two key-facts:  (i) 
 the second eigenvalue of the 
transition matrix of $G$ is not much smaller than $1 - 2b/d$,  and (ii) $D^{1/2} \bchi$ 
is close, in norm, to its projection on the second eigenvector 
of the normalized adjacency matrix $N$. Namely,  in Lemma~\ref{lemma:sqrtdchiapprox} 
  we prove that
if $\lambda_3 < \nu $ then
\begin{equation}\label{eq:coleeigenvec}
    \lambda_2 \geq \nu - 10\gamma 
    \qquad 
    \mbox{ and }
    \qquad
    \left\| D^{1/2} \bchi - \beta_2 \bw_2 \right\| 
    \leqslant \frac{44\, \gamma}{\nu - \lambda_3}\sqrt{2 nd},
\quad \mbox{where } \beta_2 = \bchi^\intercal D^{1/2} \bw_2\,.
\end{equation}

Now, we can use the above bounds to analyze $\bx^{(t)} = P^t \bx$.
To begin, note that $N=D^{-1/2}AD^{-1/2}$ and $P=D^{-1}A$ imply that $P = D^{-1/2} N D^{1/2}$ 
and $P^t = D^{-1/2} N^tD^{1/2}$. Thus, for any vector $\bx$, if we write 
$D^{1/2}\bx$ as a linear combination of an orthonormal basis of $N$, 
$D^{1/2} \bx = \sum_{i = 1}^{2n} a_i \bw_i$, we get 
\[
    P^t \bx = D^{-1/2} N^t D^{1/2} \bx 
    = D^{-1/2} \sum_{i = 1}^{2n} a_i \lambda_i^t \bw_i
    = \sum_{i = 1}^{2n} a_i \lambda_i^t D^{-1/2} \bw_i.
\]
We next estimate the first term, the second term, and the 
sum of the remaining terms:

\noindent
- We have $\bw_1 = \frac{D^{1/2} \bone}{\|D^{1/2} \bone\|}$, so the first term 
can be written as $\alpha_1 \bone$ with
$\alpha_1 = \frac{a_1}{\left\| D^{1/2} \bone \right\|}  
= \frac{\bw_1^\intercal D^{1/2} \bx}{\left\| D^{1/2} \bone \right\|}
= \frac{\bone^\intercal D \bx}{\left\| D^{1/2} \bone \right\|^2}$.

\noindent
- If we write $D^{1/2} \bchi = \beta_2 \bw_2 + \by$, with 
$\beta_2 = \bw_2^\intercal D^{1/2} \bchi$, \eqref{eq:coleeigenvec} 
implies that 
   $ \| \by \| \leqslant \frac{44\, \gamma}{\nu - \lambda_3}\sqrt{2 nd}$.
Hence the second term can be written as
\[
    a_2 \lambda_2^t D^{-1/2} \bw_2 
    = a_2 \lambda_2^t D^{-1/2} \left( \frac{D^{1/2} \bchi - \by}{\beta_2} \right)
    = \frac{a_2}{\beta_2} \lambda_2^t \bchi - \frac{a_2}{\beta_2} \lambda_2^t \bz
    = \alpha_2 \lambda_2^t \bchi - \alpha_2 \lambda_2^t \bz,
\]
where 
\[
    \| \bz \| = \left\| D^{-1/2} \by \right\| 
    \leqslant \left\| D^{-1/2} \right\| \, \| \by \|
    \leqslant \frac{2}{\sqrt{d}} \cdot \frac{44\, \gamma}{\nu - \lambda_3}\sqrt{2 nd}
= \frac{88\, \gamma}{\nu - \lambda_3}\sqrt{2 n},  \]
and
\[ \alpha_2 = a_2/\beta_2 = \frac{\bw_2^\intercal D^{1/2} \bx }{ \bw_2 D^{1/2} \bchi}. \]

\noindent
- As for all other terms, observe that
\begin{multline}
   \|\be^{(t)}\|^2 
   = \left\| \sum_{i = 3}^{2n} a_i \lambda_i^t D^{-1/2}\bw_i \right\|^2
         \leqslant  \left\| D^{-1/2} \right\|^2 
        \, \left\| \sum_{i=3}^{2n} a_i \lambda_i^t \bw_i \right\|^2\\
        = \left\| D^{-1/2} \right\|^2 \, \sum_{i = 3}^{2n} a_i^2 \lambda_i^{2t}
   \leqslant \left\| D^{-1/2} \right\|^2 \, 
    \lambda^{2t} \sum_{i = 3}^{2n} a_i^2  
    \leqslant \left\| D^{-1/2} \right\|^2 
    \, \lambda^{2t} \left\| D^{1/2} \bx \right\|^2 \\
     \leqslant  \left\| D^{-1/2} \right\|^2 \, \left\| D^{1/2} \right\|^2
    \, \lambda^{2t} \| \bx\|^2
    \leqslant 16 \lambda^{2t} \| \bx \|^2.
\end{multline}
\qed

%%%%%
\smallskip\noindent
The above lemma allows us to generalize our approach to achieve 
efficient, weak reconstruction in non-regular clustered 
graphs. The full proof of the following theorem is given in  
appendix \ref{apx-theoremweak}.

\begin{theorem}[Weak reconstruction] 
    \label{thm:irreg_stop}
    Let $G$ be a connected $(2n, d, b, \gamma)$-clustered graph with 
    $\gamma \leqslant c(\nu-\lambda_3)$ for a suitable constant $c>0$. 
    If $\lambda < \nu $ and $\lambda_2 \geqslant (1 + \delta) \lambda$ 
    for an arbitrarily-small positive constant $\delta$, the \avg protocol
    produces an $\bigO ( {\gamma^2 }/(\nu - \lambda_3)^2 )$-weak reconstruction 
    within $\bigO(\log n)$ rounds w.h.p.\footnote{Consistently, 
    Theorem~\ref{thm:reg_stop} is a special case of this one when
    $\gamma = 0$.} 
\end{theorem}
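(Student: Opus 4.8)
The plan is to reduce the statement to the decomposition already furnished by Lemma~\ref{lem:clustered} and then to isolate the single feature that distinguishes the irregular case from the regular one treated in Theorem~\ref{thm:reg_stop}: the \emph{persistent} perturbation $\bz$. Writing $\bone(u)=1$ and applying Lemma~\ref{lem:clustered}, the quantity that the coloring rule inspects is
\[
\bx^{(t-1)}(u) - \bx^{(t)}(u) = \alpha_2 \lambda_2^{t-1}(1-\lambda_2)\bigl(\bchi(u)+\bz(u)\bigr) + \be^{(t-1)}(u)-\be^{(t)}(u)\,.
\]
Since $\bchi(u)\in\{-1,+1\}$ and $\lambda_2\in(0,1)$, the leading term has sign $\texttt{sgn}(\alpha_2)\,\bchi(u)$, i.e.\ it encodes exactly the community of $u$ up to the global sign $\texttt{sgn}(\alpha_2)$, which merely swaps the two colors. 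I would therefore declare $u$ correctly colored as soon as this leading term dominates, in absolute value, the two deviations $\alpha_2\lambda_2^{t-1}(1-\lambda_2)\bz(u)$ and $\be^{(t-1)}(u)-\be^{(t)}(u)$.

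Next I would control the transient error exactly as in the regular proof. Using $\|\be^{(t)}\|\le 4\lambda^t\|\bx\|$ and $\|\bx\|=\sqrt{2n}$, the uniform per-coordinate bound $|\be^{(t-1)}(u)-\be^{(t)}(u)|\le 8\lambda^{t-1}\sqrt{2n}$ holds. For every node with $|\bz(u)|\le 1/2$ the leading term beats this error once $(\lambda_2/\lambda)^{t-1}$ exceeds $16\sqrt{2n}/\bigl(|\alpha_2|(1-\lambda_2)\bigr)$; since $\lambda_2\ge(1+\delta)\lambda$, the base $\lambda_2/\lambda\ge 1+\delta$ is a constant and $1-\lambda_2$ enters only logarithmically, so $t=O\bigl(\log(\sqrt n/|\alpha_2|)\bigr)$ rounds suffice \emph{simultaneously} for all such nodes. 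To reach $O(\log n)$ I must show $|\alpha_2|\ge n^{-O(1)}$ w.h.p.: from the Rademacher initialization, the numerator $\bw_2^\intercal D^{1/2}\bx=\sum_i (D^{1/2}\bw_2)(i)\,\bx(i)$ is a weighted sum of independent Rademacher variables of variance $\bw_2^\intercal D\bw_2=\Theta(d)$, so a Littlewood--Offord/Berry--Esseen anti-concentration bound (the weighted analogue of Lemma~\ref{rad.sum}) keeps it $\ge d^{\Omega(1)}$ w.h.p., while the denominator $\beta_2=\bw_2^\intercal D^{1/2}\bchi$ is comparable to $\|D^{1/2}\bchi\|=\Theta(\sqrt{nd})$ by \eqref{eq:coleeigenvec}; hence $|\alpha_2|=\Theta(1/\sqrt n)$ typically and is polynomially bounded below w.h.p.

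It then remains to handle the one irreducible error, the perturbation $\bz$, which unlike $\be^{(t)}$ does not decay with $t$: this is exactly why only weak reconstruction is attainable. Calling $u$ \emph{bad} when $|\bz(u)|>1/2$, Lemma~\ref{lem:clustered} gives $\|\bz\|^2\le\bigl(88\gamma/(\nu-\lambda_3)\bigr)^2\,2n$, so a counting (Markov) argument bounds the number of bad nodes by $4\|\bz\|^2=O\bigl(\gamma^2/(\nu-\lambda_3)^2\bigr)\cdot 2n$, an $O\bigl(\gamma^2/(\nu-\lambda_3)^2\bigr)$ fraction; the hypothesis $\gamma\le c(\nu-\lambda_3)$ keeps this fraction a small constant below $1$, so the reconstruction is nontrivial. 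By the first two steps every good node is colored according to $\texttt{sgn}(\alpha_2)\bchi(u)$, so outside the bad set the coloring separates $V_1$ and $V_2$; this is the claimed $O\bigl(\gamma^2/(\nu-\lambda_3)^2\bigr)$-weak reconstruction, attained within $O(\log n)$ rounds and stable thereafter.

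I expect the anti-concentration lower bound on $|\alpha_2|$ to be the delicate point: the regular-case computation, in which the weights are uniform and $|\alpha_2|$ is literally a centered binomial, no longer applies, so one must invoke a general small-ball estimate for weighted Rademacher sums and verify that no single coordinate of $D^{1/2}\bw_2$ dominates the variance, which uses $\|\bw_2\|=1$ together with $d_i=d\pm\gamma d$. The $\bz$-counting step is by contrast routine once $\|\bz\|$ is in hand; its only conceptual content is recognizing that $\bz$ caps the achievable accuracy and choosing the threshold $1/2$ (any constant $<1$ works) to trade a constant factor against the misclassified fraction.
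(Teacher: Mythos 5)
Your proposal follows the paper's proof essentially step for step: the same decomposition from Lemma~\ref{lem:clustered}, the same threshold $|\bz(u)|<1/2$ separating good from bad nodes, the same $\bigO(\log n)$ bound on the crossover time via $\lambda_2\geq(1+\delta)\lambda$ together with an anti-concentration lower bound on $|\alpha_2|$ (the paper's Lemma~\ref{lemma:apxrdm}), and the same Markov-type count of the bad set from $\|\bz\|$. The only place your sketch is imprecise is the verification of the hypothesis of that anti-concentration lemma: $\|\bw_2\|=1$ and $d_i=d\pm\gamma d$ alone do not force a constant fraction of the coordinates of $\bw_2$ to be bounded below in absolute value --- the paper derives this from the closeness of $\beta_2\bw_2$ to $D^{1/2}\bchi$ in \eqref{eq:coleeigenvec} --- and the claim that $1-\lambda_2$ ``enters only logarithmically'' needs the explicit bound $1-\lambda_2\geq 1/(2n^4)$, which the paper obtains from Cheeger's inequality and connectivity.
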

\skproof
    Lemma~\ref{lem:clustered} implies that for every node $u$ at any 
    round $t$ we have
    \[
        \bx^{(t-1)}(u) - \bx^{(t)}(u) 
        = \alpha_2 \lambda_2^{t-1} (1 - \lambda_2) \left(\bchi(u) + \bz(u)\right)
        + \be^{(t-1)}(u) - \be^{(t)}(u)
    \]
    Hence, for every node $u$ such that $|\bz(u)| < 1/2$,\footnote{The 
    value $1/2$ is chosen here only for readability sake, any     
    constant smaller than $1$ will do.} we have
    $\texttt{sgn}\left( \bx^{(t-1)}(u) - \bx^{(t)}(u) \right) 
    = \texttt{sgn}\left(\alpha_2 \bchi(u) \right)$ whenever
    \begin{equation}\label{eq:clustcondnonreg}
        \left| \frac{1}{2} \alpha_2 \lambda_2^{t-1} (1 - \lambda_2) \right| > 
        \left| \be^{(t-1)}(u) - \be^{(t)}(u) \right|.
    \end{equation}
    From Lemma~\ref{lem:clustered} we have 
    $\left| \be^{(t)}(u) \right| \leqslant 4 \lambda^t \sqrt{2n}$, 
    thus~\eqref{eq:clustcondnonreg} is satisfied for any  $t$ such that
    \begin{equation}
        t - 1 \geqslant 
        \log \left(\frac{16 \sqrt{2n}}{|\alpha_2| (1 - \lambda_2)}
        \right) \cdot \frac{1}{\log\left(\lambda_2 / \lambda \right)} \, .
        \label{eq:time}
    \end{equation}
    The right-hand side of the above formula is $\mathcal{O}(\log n)$ 
    w.h.p., because of the following three points:
 i) $\lambda_2 \geqslant (1+\delta)\lambda$ by hypothesis;
         ii) $1 - \lambda_2 \geqslant 1/(2n^4)$ from Cheeger's 
         inequality (see e.g. \cite{chung96}) and the fact
            that the graph is connected; iii)
     using similar (although harder - see Lemma~\ref{lemma:apxrdm}) 
     arguments as in  the proof of Theorem \ref{thm:reg_stop}, we can prove that
     Rademacher initialization of $\bx$ w.h.p. implies  $|\alpha_2| \geqslant n^{-c}$ for some large enough
            positive constant $c$.
 Finally, from Lemma~\ref{lem:clustered} we have 
       $ \|\bz \| \leqslant \frac{88\, \gamma}{\nu - \lambda_3}\sqrt{2 n}$.
    Thus, the number of nodes $u$ with $\bz(u) \geqslant 1/2$ is $\bigO(n {\gamma^2 }/(\nu - \lambda_3)^2)$.
\qed

\medskip\noindent
%The above theorem says  that the quality of the block reconstruction 
%depends on the regularity of the graph (i.e parameter $\gamma$) and the 
%conductance inside each community (here represented by the difference
%$|\nu - \lambda_3|$). Interestingly enough, if the communities are good 
%expanders, then the protocol achieves $\Theta(1)$-weak reconstruction over 
%graphs in which node degrees may differ each other by a \emph{linear} (in $d$)
%factor.
Roughly speaking, the above theorem states that the quality of block reconstruction 
depends on the regularity of the graph (through parameter $\gamma$) and  
conductance within each community (here represented by the difference
$|\nu - \lambda_3|$). Interestingly enough, as long as $|\nu - 
\lambda_3| = \Theta(1)$, the protocol achieves $\bigO(\gamma^2)$-weak 
reconstruction on $(2n, d, b, \gamma)$-clustered graphs.

\paragraph*{Stochastic block model.}
Below we prove that the stochastic block model $\planted$ satisfies
the hypotheses of Theorem~\ref{thm:irreg_stop}, w.h.p., and, thus,  the \avg protocol 
efficiently produces a good reconstruction. In what follows, we will often use 
the following parameters of the model:
expected internal degree $a = p n$, expected external degree $b = qn$, and 
$d = a + b$. 

\begin{lemma}\label{lemma:irreg} 
Let $G\sim \mathcal{G}_{2n,p,q}$. If $a-b > \sqrt{(a+b) \log n}$ then a 
positive constant $\delta$ exists such that the following  hold w.h.p.: 
i) $G$ is $(2n,d,b, 6\sqrt {\log n / d} )$-clustered \  \emph{and} \ 
ii) $\lambda \leq \min \left\{\lambda_2/(1+\delta) \,, \, 
24\sqrt{ (\log n)/d } \right\}$.
\end{lemma}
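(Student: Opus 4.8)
The two claims are of very different character: (i) is a routine degree-concentration statement, whereas (ii) is the spectral core of the lemma. I would dispatch (i) first. For a fixed vertex $u\in V_1$ its external degree is a sum of $n$ independent $\mathrm{Bernoulli}(q)$ indicators with mean $b=qn$, its internal degree is a sum of $n-1$ independent $\mathrm{Bernoulli}(p)$ indicators with mean close to $a=pn$, and its total degree has mean $(1\pm o(1))d$. Applying a Chernoff--Bernstein tail bound with deviation $t=6\sqrt{d\log n}=\gamma d$ (so that $\gamma=6\sqrt{\log n/d}$), and using that the hypothesis $a-b>\sqrt{(a+b)\log n}$ forces $d>\log n$, each tail is $n^{-\Omega(1)}$ with a large exponent. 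A union bound over the $2n$ vertices and over the three counts then shows that w.h.p. every vertex has degree $d\pm\gamma d$ and exactly $b\pm\gamma d$ neighbours on the other side, i.e. $G$ is $(2n,d,b,6\sqrt{\log n/d})$-clustered.

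For (ii) I pass to the symmetric normalized matrix $N=D^{-1/2}AD^{-1/2}$, which has the same spectrum as $P$. Let $\overline A$ be the rank-two signal matrix with entries $p$ inside the two blocks and $q$ across them, so that $\mathbf E[A]=\overline A-pI$ with the diagonal correction of negligible norm $p=O(a/n)$; the two nonzero eigenvalues of $\overline A$ are $d$ (eigenvector $\bone$) and $a-b$ (eigenvector $\bchi$). I then split
\[
N = \underbrace{D^{-1/2}\overline A\,D^{-1/2}}_{=:R,\ \operatorname{rank}\le 2}\;+\;\underbrace{D^{-1/2}(A-\overline A)\,D^{-1/2}}_{=:\Delta}.
\]
The key random-graph input is the spectral-norm bound $\|A-\overline A\|=O(\sqrt d)$ w.h.p.; combined with $\|D^{-1/2}\|^2\le 1/\big(d(1-\gamma)\big)$ from (i) it gives $\|\Delta\|=O(1/\sqrt d)$. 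Since $R$ has rank at most two, Weyl's inequality forces $|\lambda_i(N)|\le\|\Delta\|$ for every $i\ge 3$, whence $\lambda=O(1/\sqrt d)$. As $1/\sqrt d\le 24\sqrt{\log n/d}$ trivially, this already yields the explicit bound $\lambda\le 24\sqrt{\log n/d}$.

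It remains to lower bound $\lambda_2$ and conclude the gap $\lambda\le\lambda_2/(1+\delta)$. Instead of tracking how the normalization perturbs the second eigenvalue of $R$, I would bound $\lambda_2(N)$ from below by a Rayleigh quotient, using the variational characterization $\lambda_2(N)=\max_{\mathbf u\perp\bw_1}\mathbf u^\intercal N\mathbf u/\mathbf u^\intercal\mathbf u$ with $\bw_1=D^{1/2}\bone/\|D^{1/2}\bone\|$. Testing with $D^{1/2}\bchi$ (whose $\bw_1$-component is negligible, since $\bchi^\intercal D\bone$ is a mean-zero sum of magnitude $O(\sqrt{nd})$ against $\|D^{1/2}\bchi\|^2=\Theta(nd)$) gives
\[
\lambda_2(N)\;\ge\;\frac{\bchi^\intercal A\,\bchi}{\bchi^\intercal D\,\bchi}-o(1)
\;=\;\frac{E_{\mathrm{in}}-E_{\mathrm{out}}}{E_{\mathrm{in}}+E_{\mathrm{out}}}-o(1),
\]
where $E_{\mathrm{in}}$ and $E_{\mathrm{out}}$ count within- and across-cluster edges. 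These concentrate around $na$ and $nb$ with relative fluctuation $o(1)$, so the ratio concentrates to $\nu=(a-b)/(a+b)$; hence $\lambda_2\ge(1-o(1))\nu\ge(1-o(1))\sqrt{\log n/d}$ by the hypothesis. Comparing with $\lambda=O(1/\sqrt d)$ gives $\lambda_2/\lambda=\Omega(\sqrt{\log n})\to\infty$, so $\lambda_2\ge(1+\delta)\lambda$ for any constant $\delta$ once $n$ is large, which finishes (ii).

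The main obstacle is the sharp spectral-norm estimate $\|A-\overline A\|=O(\sqrt d)$ in the near-sparse regime $d\gtrsim\log n$. A clean matrix-Bernstein bound only gives $O(\sqrt{d\log n})$, which loses exactly the $\sqrt{\log n}$ factor that separates the signal $\nu\ge\sqrt{\log n/d}$ from the noise; obtaining $O(\sqrt d)$ requires a Feige--Ofek/Kahn--Szemer\'edi-type split of the bilinear form $\bx^\intercal(A-\overline A)\by$ into light- and heavy-pair contributions, the latter being controlled precisely by the degree bounds of (i) that rule out anomalously high-degree vertices. A secondary, conceptual difficulty, stressed in the paper, is that the entries of $P$ (equivalently $N$) are not independent because the degrees enter the normalization; the decomposition above side-steps this by proving concentration for the independent-entry matrix $A-\overline A$ and transferring it to $N$ through the uniform degree control of part (i).
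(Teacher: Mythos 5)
Your proposal is correct, but it takes a genuinely different route from the paper's. The paper proves claim~(ii) by combining a Matrix Bernstein estimate $\|A-B\|\le 3\sqrt{d\log n}\le\gamma d$ (Lemma~\ref{lemma:irregone}) with the degree regularity from claim~(i) to obtain $\|N-\frac1d B\|\le 4\gamma$, and then reads off \emph{both} $\lambda\le 4\gamma=24\sqrt{\log n/d}$ \emph{and} $|\lambda_2-\nu|\le 4\gamma$ from eigenvalue stability (Theorem~\ref{thm:eigen_vs_norm}); the gap $\lambda_2\ge(1+\delta)\lambda$ is then extracted under the condition $\nu>12\gamma$ of Lemma~\ref{lemma:irregtwo}. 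You instead split $N$ into a rank-two signal plus a noise term controlled by the sharp Feige--Ofek/Le--Vershynin bound $\|A-B\|=O(\sqrt d)$, getting $\lambda=O(1/\sqrt d)$ by Weyl, and you lower-bound $\lambda_2$ by a Rayleigh quotient with test vector $D^{1/2}\bchi$, getting $\lambda_2\ge(1-o(1))\nu$. What each approach buys: the paper's is self-contained (Matrix Bernstein is elementary) and treats $\lambda_2$ and $\lambda$ uniformly through a single perturbation bound, but --- exactly as you observe --- at noise level $\Theta(\sqrt{\log n/d})$ the signal $\nu>\sqrt{\log n/d}$ only beats the noise if the constant in $a-b>C\sqrt{(a+b)\log n}$ is large enough; indeed the hypothesis of Lemma~\ref{lemma:irreg} as stated does not literally imply the condition $\nu>12\gamma$ under which Lemma~\ref{lemma:irregtwo} delivers the spectral gap (the paper's Corollary~\ref{cor.irreg} compensates by assuming $a-b>25\sqrt{d\log n}$). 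Your route closes that constant-factor mismatch, making $\lambda_2/\lambda=\Omega(\sqrt{\log n})$, at the price of importing the harder concentration input --- which the paper itself only deploys later, in the tight analysis (Lemma~\ref{lem:versh_lemma}) --- so you could simply cite it rather than reprove it; and your Rayleigh-quotient bound for $\lambda_2$ is a nice elementary alternative to eigenvalue perturbation, needing only edge-count concentration together with the observation that the $\bw_1$-component of $D^{1/2}\bchi$ contributes $O(\log n)$ against $\bchi^\intercal A\bchi=\Theta(n(a-b))$. Claim~(i) is handled identically in both arguments.
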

\skproof
Claim~(i) follows (with probability $1-n^{-1}$) from an easy 
application of the Chernoff bound.
As for Claim~(ii), since $G$ is not regular and random, we derive spectral 
properties on its adjacency matrix $A$ by considering a ``more-tractable'' matrix, namely 
the expected matrix
\begin{small}
\[ 
    B:= \E{A} = \left( 
        \begin{array}{cc}
        pJ, & qJ\\
        qJ, & pJ
        \end{array}
    \right)
\]
\end{small}

\vspace{-3mm}\noindent
where $B_{i,j}$ is the probability that the edge $(i,j)$ exists in a random graph 
$G \sim  \mathcal{G}_{2n,p,q}$.  In Lemma~\ref{lemma:irregone} we will prove that such a 
$G$ is  likely to have an adjacency
matrix $A$ close to $B$ in spectral norm. Then, in Lemma~\ref{lemma:irregtwo} 
we will show that every clustered graph whose adjacency matrix is close to $B$
has the properties required in the analysis of the \avg dynamics, thus getting
Claim~(ii).
\qed

\medskip\noindent
By combining Lemma~\ref{lemma:irreg} and Theorem~\ref{thm:irreg_stop}, we  
achieve weak reconstruction for the stochastic block model.
\begin{corollary} \label{cor.irreg}
Let $G \sim \mathcal{G}_{2n,p,q}$. If $a-b > 25 \sqrt{d \log n}$ and
$b=\Omega(\log n /n)$ then the \aveprot{} protocol produces an
$\bigO({{{d}\log n }/(a-b)^2})$-\weak in $\bigO(\log n)$ rounds w.h.p.
\label{cor:weakrec_gnpq}
\end{corollary}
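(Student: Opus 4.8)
The plan is to verify that a graph $G\sim\mathcal G_{2n,p,q}$ satisfying the two hypotheses of the corollary meets, w.h.p., every requirement of Theorem~\ref{thm:irreg_stop}, and then to translate the resulting reconstruction guarantee back into the parameters $a,b,d$. First I would invoke Lemma~\ref{lemma:irreg}: since $a-b>25\sqrt{d\log n}$ trivially implies the weaker bound $a-b>\sqrt{(a+b)\log n}$, the lemma gives that, w.h.p., $G$ is a $(2n,d,b,\gamma)$-clustered graph with $\gamma=6\sqrt{\log n/d}$ and that $\lambda\le\min\{\lambda_2/(1+\delta),\,24\sqrt{\log n/d}\}$ for some constant $\delta>0$. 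The bound $\lambda\le\lambda_2/(1+\delta)$ is exactly the hypothesis $\lambda_2\ge(1+\delta)\lambda$ of Theorem~\ref{thm:irreg_stop}, so that condition comes for free.

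Next I would check the two remaining spectral conditions, $\lambda<\nu$ and $\gamma\le c(\nu-\lambda_3)$, recalling that $\nu=1-2b/d=(a-b)/d$. The hypothesis $a-b>25\sqrt{d\log n}$ gives $\nu>25\sqrt{\log n/d}$, while Lemma~\ref{lemma:irreg} bounds $\lambda\le24\sqrt{\log n/d}$; hence $\lambda<\nu$, and since $\lambda_3\le\lambda$ we get $\lambda_3<(24/25)\nu$, which forces $\nu-\lambda_3\ge\nu/25$, i.e.\ $\nu-\lambda_3=\Theta(\nu)=\Theta((a-b)/d)$. Comparing with $\gamma=6\sqrt{\log n/d}$ then yields $\gamma\le c(\nu-\lambda_3)$ once the numerical threshold constant is taken large enough; this is the only place where the factor $25$ (as opposed to the bare constant of Lemma~\ref{lemma:irreg}) is actually used. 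Connectivity, also required by the theorem, follows because $a-b>25\sqrt{d\log n}$ forces $a=\Omega(\log n)$ (indeed $a>625\log n$), so the subgraph induced by each side is a random graph dense enough to be connected w.h.p., and the crossing edges can only help.

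With all hypotheses in place, Theorem~\ref{thm:irreg_stop} yields an $\bigO(\gamma^2/(\nu-\lambda_3)^2)$-\weak in $\bigO(\log n)$ rounds, w.h.p.; a union bound over the failure events of Lemma~\ref{lemma:irreg} and of the Rademacher-initialization estimate inside the theorem keeps the whole statement w.h.p. It then remains to simplify the error term. Substituting $\gamma^2=36\log n/d$ and using $\nu-\lambda_3=\Theta((a-b)/d)$ gives
\[
\frac{\gamma^2}{(\nu-\lambda_3)^2}
=\bigO\!\left(\frac{\log n/d}{\left((a-b)/d\right)^2}\right)
=\bigO\!\left(\frac{d\log n}{(a-b)^2}\right),
\]
which is exactly the claimed bound, and the logarithmic running time is inherited verbatim from Theorem~\ref{thm:irreg_stop}.

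The proof is thus essentially a bookkeeping exercise: the two hard ingredients, namely the structural and spectral properties of $\mathcal G_{2n,p,q}$ in Lemma~\ref{lemma:irreg} and the dynamics analysis in Theorem~\ref{thm:irreg_stop}, are already established. The one point requiring care is making the chain of constants consistent: I must ensure that the ``suitable constant $c$'' of Theorem~\ref{thm:irreg_stop} is compatible with the constants $6$ and $24$ appearing in Lemma~\ref{lemma:irreg}, which is precisely what dictates the choice of the threshold factor $25$ (and may in principle require enlarging it). I expect this constant-chasing, together with confirming that the lower bound $\nu-\lambda_3=\Theta(\nu)$ is a genuine constant-factor statement rather than a boundary coincidence — so that the $\bigO(d\log n/(a-b)^2)$ error is informative in the regime $a-b\gg\sqrt{d\log n}$ — to be the only mild obstacle.
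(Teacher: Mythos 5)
Your proposal is correct and follows essentially the same route as the paper: invoke Lemma~\ref{lemma:irreg} to obtain the $(2n,d,b,\gamma)$-clustering with $\gamma\le 6\sqrt{\log n/d}$ and the spectral bounds, use $a-b>25\sqrt{d\log n}$ to get $\nu-\lambda_3=\Omega((a-b)/d)$ so that the hypotheses of Theorem~\ref{thm:irreg_stop} hold, and then simplify $\gamma^2/(\nu-\lambda_3)^2$ to $\bigO(d\log n/(a-b)^2)$. Your explicit verification of $\lambda<\nu$, of connectivity, and of the constant-chasing around $c$ is if anything slightly more careful than the paper's own two-line computation.
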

\skproof
From Lemma~\ref{lemma:irreg} we get that w.h.p. $G$ is
$(2n,d,b,\gamma)$-clustered with $\gamma \leq 6\sqrt{{\log n}/{d}} $,
$|\lambda_i | \leq 4\gamma  $ for all $i=3,\ldots,2n$ and $\lambda_2 \geq
(1+\delta)\lambda_3 $ for some constant $\delta>0$. 
Given the hypotheses on $a$ and $b$, we 
also have that the graph is connected w.h.p. 
Moreover, since $d\nu = (a-b) > 25\sqrt{ d \log n }$, then 
\[
    \frac{\gamma}{\nu - \lambda_3} 
    = \frac{d\gamma}{d\nu - d\lambda_3} 
    \leq \frac{6 \sqrt{d\log n}}{(a-b) - 24 \sqrt{d\log n}} 
    = \bigO\left( \frac{\sqrt{d\log n}}{(a-b)} \right).
\]
Theorem~\ref{thm:irreg_stop} then guarantees that the \avg protocol finds an 
$\bigO\left({{d\log n }/(a-b)^2}\right)$-\weak w.h.p.
\qed

\subsection{Tight analysis for the stochastic block model} \label{sec:tightsbm} 
In Lemma~\ref{lemma:irreg} we have shown that, when $(a-b) >
\sqrt{(a+b)\log n}$, a graph sampled according to $\planted$ satisfies
the hypothesis of Theorem~\ref{thm:irreg_stop} w.h.p.: The simple \avg protocol
thus gets weak-reconstruction in $\bigO(\log n)$ rounds. As for the 
parameters' range of $\planted$, we know that the above result is still off by 
a factor $\sqrt{\log n}$ from the threshold $(a-b) > 2\sqrt{(a+b)}$
\cite{mossel_proof_2013,massoulie_community_2014,mossel_reconstruction_2014},
the latter being a necessary condition for any (centralized or not)
non-trivial  weak reconstruction. Essentially, the reason behind this gap is
that, while Theorem~\ref{thm:irreg_stop} holds for 
\emph{any} (i.e. ``worst-case'') $(2n,d,b,\gamma)$-clustered graph, in order 
to apply it to $\planted$ we need to choose parameters $a$ and $b$ in a way 
that $\gamma d$ bounds the variation of the degree of \emph{any} node w.r.t. 
the regular case w.h.p.

On the other hand, since the degrees in $\planted$ are distributed according
to a sum of Bernoulli random variables, the rare event that some degrees are 
much higher than the average does not affect too much the eigenvalues and
eigenvectors of the graph. 
Indeed, by adopting ad-hoc arguments for $\planted$, we prove
that the \avg protocol actually achieves an $\bigO(d / (a-b)^2)$-weak 
reconstruction w.h.p., provided
that $(a-b)^2 > \Cl[small]{optconst} (a+b) > 5\log n$, thus matching the 
weak-reconstruction threshold up to a constant factor for graphs of logarithmic degree. The main argument 
relies on the spectral properties of $\planted$ stated in the following 
lemma, whose complete proof is given in Appendix~\ref{sec:apx-random}. 

\begin{lemma}\label{lem:main}  
Let $G \sim \planted$. If $(a-b)^2 > \Cr{optconst}(a+b) > 5\log n$
and\footnotemark[\value{footnote}] $a+b<n^{\frac 13-\Cr{tight}}$ for some
positive constants $\Cr{optconst}$ and $\Cr{tight}$, then the following claims hold w.h.p.:
\begin{enumerate}
\item $\lambda_2 \geq 1- \nfrac{2b}{d} -
\Cl[small]{second_eigen_err}/\sqrt{d}$ for some constant
$\Cr{second_eigen_err}>0$,
\item $\lambda_2 \geq (1+\delta)\lambda$ for some constant $\delta>0$ 
(where as usual  $\lambda = \max \{ |\lambda_3|, \ldots ,  |\lambda_{2n}|\}$),
\item   $| \sqrt{2nd}( D^{-1/2} \bw_2 )(i) - \bchi(i) | \leq \frac 1 {100}$
for each   $i\in V \setminus S$,  for some subset $S$ with 
$|S| = \bigO(n d / (a-b)^2)$.
\end{enumerate}
\end{lemma}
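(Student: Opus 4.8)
The plan is to analyze the symmetric normalized adjacency matrix $N = D^{-1/2}AD^{-1/2}$, which shares its eigenvalues with $P$ and whose second eigenvector $\bw_2$ produces the quantity in Claim~3 via $\bv_2 = D^{-1/2}\bw_2$. Everything is driven by comparing $A$ to its expectation, which has the exact rank-$2$ form
\[
B = \E{A} = \tfrac{d}{2n}\bone\bone^{\intercal} + \tfrac{a-b}{2n}\bchi\bchi^{\intercal},
\]
with eigenpairs $(d,\bone)$ and $(a-b,\bchi)$. The one genuinely new ingredient beyond the worst-case analysis behind Lemma~\ref{lemma:irreg} is a \emph{sharp} spectral-norm estimate $\|A-B\| \leq C\sqrt{d}$ holding w.h.p., \emph{without} the extra $\sqrt{\log n}$ factor carried by Lemma~\ref{lemma:irregone}. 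I would obtain it by a trace-moment (path-counting) argument, bounding $\E\,\mathrm{tr}\big((A-B)^{2k}\big)$ for $k \asymp \log n$; this is exactly where the hypothesis $a+b < n^{1/3-\Cr{tight}}$ enters, since it forces $d^3 = o(n)$ and makes the contribution of self-intersecting closed walks negligible, leaving the tight $\bigO(\sqrt d)$ bound. I would also record the routine Chernoff facts that, because $d > 5\log n$, all degrees satisfy $d_i \in [d/2,2d]$ w.h.p.\ and the average degree concentrates, $\bar d := \tfrac{1}{2n}\sum_i d_i = d(1+o(1))$.

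Given $\|A-B\| = \bigO(\sqrt d)$, Claims~1 and~2 follow quickly. Setting $\tilde B = D^{-1/2}BD^{-1/2}$, which is still rank $2$, we get $\|N - \tilde B\| \leq \|D^{-1/2}\|^2\,\|A-B\| = \bigO(\sqrt d / d) = \bigO(1/\sqrt d)$ using $d_i \geq d/2$. Weyl's inequality then yields $|\lambda_i| = \bigO(1/\sqrt d)$ for every $i \geq 3$, hence $\lambda = \bigO(1/\sqrt d)$. For the lower bound on $\lambda_2$ (Claim~1) I would insert the test vector $D^{1/2}\bchi$, orthogonalised against the \emph{exact} top eigenvector $\bw_1 = D^{1/2}\bone/\|D^{1/2}\bone\|$, into the Rayleigh quotient of $N$: since $\bchi^{\intercal} A\bchi$ equals twice the difference between the number of monochromatic and bichromatic edges and concentrates around $2n(a-b)$, while $\bchi^{\intercal}D\bchi = \sum_i d_i = 2nd(1+o(1))$, the quotient is $\nu - \bigO(1/\sqrt d)$, giving $\lambda_2 \geq \nu - \Cr{second_eigen_err}/\sqrt d$. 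Claim~2 is then immediate: $\lambda_2 \geq \tfrac{a-b}{d} - \bigO(1/\sqrt d)$ while $\lambda = \bigO(1/\sqrt d)$, so $\lambda_2 \geq (1+\delta)\lambda$ holds precisely when $(a-b)^2 > \Cr{optconst}(a+b)$ for a large enough constant $\Cr{optconst}$ --- this is exactly the role of that hypothesis.

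For Claim~3 I would first establish $\ell_2$-closeness of $\bw_2$ to the direction $\hat\bw_2 := D^{1/2}\bchi/\|D^{1/2}\bchi\|$ and then convert it into a coordinate-wise statement with a small exceptional set $S$. A direct residual computation gives $\|N\hat\bw_2 - \nu\hat\bw_2\| = \bigO(1/\sqrt d)$: the dominant term $D^{-1/2}(A-B)\bchi/\|D^{1/2}\bchi\|$ is $\bigO(1/\sqrt d)$ by the spectral-norm bound, and the error from replacing $D^{-1/2}\bchi$ by $\tfrac1d D^{1/2}\bchi$ is controlled by degree concentration. Because $\bw_1$ is the \emph{exact} top eigenvector, I would run the Davis--Kahan (sin-theta) argument inside $\bw_1^{\perp}$ --- sidestepping the possibly small gap $1-\nu = 2b/d$ between $\lambda_1=1$ and $\lambda_2$ --- where the relevant gap is $\lambda_2-\lambda = \Omega((a-b)/d)$; combining residual and gap yields $\|\bw_2-\hat\bw_2\| = \bigO(\sqrt d/(a-b))$ after fixing the sign. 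Finally $\sqrt{2nd}\,D^{-1/2}\hat\bw_2 = \sqrt{d/\bar d}\,\bchi$ is $\bchi$ up to a $(1+o(1))$ scalar, and
\[
\big\| \sqrt{2nd}\,D^{-1/2}\bw_2 - \sqrt{2nd}\,D^{-1/2}\hat\bw_2 \big\|^2 \leq 2nd\,\|D^{-1/2}\|^2\,\|\bw_2-\hat\bw_2\|^2 = \bigO\!\Big( \tfrac{nd}{(a-b)^2} \Big).
\]
Markov's inequality over the coordinates bounds the number of $i$ with $|\sqrt{2nd}(D^{-1/2}\bw_2)(i)-\bchi(i)| > 1/100$ by $\bigO(nd/(a-b)^2)$, and absorbing the (polynomially fewer) atypical-degree vertices into the same set $S$ gives Claim~3.

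I expect the main obstacle to lie in the two ``transfer'' steps. First, proving the sharp bound $\|A-B\| = \bigO(\sqrt d)$ in the sparse regime, rather than the lossy $\bigO(\sqrt{d\log n})$ of Lemma~\ref{lemma:irregone}, is the technical heart: the trace-moment bookkeeping of self-intersecting walks is delicate, and it is precisely this step that forces the density restriction $a+b < n^{1/3-\Cr{tight}}$. Second, passing from the symmetric $N$ to the quantity $\sqrt{2nd}\,D^{-1/2}\bw_2$ of Claim~3 requires pushing an $\ell_2$ perturbation bound through the non-symmetric normalization $D^{-1/2}$; this is harmless on typical vertices but must be handled with care on vertices of atypical degree, which is exactly why Claim~3 is stated only up to an exceptional set $S$ whose size $\bigO(nd/(a-b)^2)$ matches the Davis--Kahan error budget.
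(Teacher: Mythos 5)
Your overall architecture matches the paper's: compare $A$ to the rank-two expectation $B$, get a sharp $\bigO(\sqrt d)$ spectral-norm bound, deduce Claims 1--2 by eigenvalue perturbation, and get Claim 3 by a Davis--Kahan argument followed by a coordinate-counting (Markov) step over the $\ell_2$ error, with an exceptional set absorbing both the Davis--Kahan budget and the atypical-degree vertices. Your variations for Claims 1--3 are legitimate: the Rayleigh-quotient test vector $D^{1/2}\bchi$ (orthogonalised against $\bw_1$) in place of the paper's direct application of the eigenvalue-perturbation inequality to $N$ versus $d^{-1}B$, and Weyl applied to the rank-two matrix $D^{-1/2}BD^{-1/2}$, both go through; your Claim 3 decomposition is essentially the paper's split of $\bw_2$ into components along $\bone$, $\bchi$ and their orthogonal complement.

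The genuine gap is in the step you yourself identify as the technical heart: the claim that a trace-moment argument with $k \asymp \log n$ yields $\|A-B\| \leq C\sqrt d$ w.h.p.\ down to $d = \Theta(\log n)$. The vanilla path-counting method does not give the log-free bound in this regime -- it is known to require $d$ at least polylogarithmically large (e.g.\ $d \gtrsim \log^4 n$ in Vu-type arguments), and the obstruction is the contribution of closed walks that reuse edges with high multiplicity, not the density of the graph. Your proposed fix, invoking $d^3 = o(n)$ to kill self-intersecting walks, addresses the wrong end of the parameter range: that condition helps when $d$ is large relative to $n$, whereas the difficulty here is that $d$ is small relative to powers of $\log n$. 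The paper sidesteps this entirely by citing Le and Vershynin's concentration theorem for inhomogeneous Erd\H{o}s--R\'enyi graphs (Lemma~\ref{lem:versh_lemma}), which is a Feige--Ofek-type argument, not a trace bound. Relatedly, you misattribute the role of the hypothesis $a+b < n^{1/3 - \Cr{tight}}$: in the paper it is not used for $\|A-B\|$ at all, but in Lemma~\ref{lem:sumdeg}, where the Chebyshev bound on the variance of $\sum_j \sqrt{d_j}$ (needed to control $\|\bone - d^{-1/2}D^{1/2}\bone\|$ and hence the component of $\bw_2$ along $\bw_1$) only closes when $d^3$ is polynomially smaller than $n$. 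Your route may in fact avoid that particular estimate, but you would then need to either cite a sharp concentration result for $\|A-B\|$ or supply a Feige--Ofek-style argument; as written, the central norm bound is unsupported.
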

\ideaproof
    The key-steps of the proof are   two     probability-concentration
    results. 
    In Lemma \ref{lem:N_vs_B}, we prove a tight bound on the deviation of the
    Laplacian $\mathcal{L}(A) = I-N$ of    $\planted$ from the Laplacian of the
    expected matrix $\mathcal{L}(B)=I-\frac{1}{d}B$. As one may expect from
    previous results on the Erd\H{o}s-R\'enyi model and from Le and Vershynin's
    recent concentration results for inhomogeneous Erd\H{o}s-R\'enyi graph (see
    \cite{le_concentration_2015}), we can prove that w.h.p. $\|\mathcal{L}(A) -
    \mathcal{L}(B)\| = \bigO(\sqrt{d})$, even when $d = \Theta(\log n)$. To
    derive the latter result, we leverage on the aforementioned Le and
    Vershynin's bound on the spectral norm of inhomogeneous Erd\H{o}s-R\'enyi
    graphs; in    $\planted$ this bound implies that if $d = \Omega(\log n)$ then
    w.h.p. $\|A-B\|=\bigO(\sqrt{d})$. Then, while Le and Vershynin replace the
    Laplacian matrix with regularized versions of it, we are able to bound
    $\|\mathcal{L}(A) - \mathcal{L}(B)\|$ directly by upper bounding it with
    $\|A-B\|$ and an additional factor $\|B-d^{-1}\,D^{1/2}BD^{1/2}\|$.
    We then bound from above the latter additional factor thanks to our second
    result: In Lemma~\ref{lem:sumdeg}, we prove that w.h.p. $\sum (\sqrt{d_i} -
    \sqrt{d})^2\leq 2n$ and $\sum (d_i - d)^2 \leq 2nd$. 
    We can then prove the first two claims of Lemma~\ref{lem:main} by 
    bounding the distance of the eigenvalues of $N$ from those of $d^{-1}\,B$
    via Lemma \ref{thm:eigen_vs_norm}. 
    As for the third claim of the lemma, we prove it by upper bounding the
    components of $D^{-1/2}\bw$ orthogonal to $\bchi$. In particular, we can
    limit the projection $\bw_{\bone}$ of $D^{-1/2}\bw$ on $\bone$ by
    using Lemma~\ref{lem:sumdeg}. Then, we can upper bound the projection
    $\bw_{\perp}$ of $D^{-1/2}\bw$ on the space orthogonal to both $\bchi$ and
    $\bone$ with Lemma~\ref{lem:N_vs_B}: We look at $N$ as a perturbed version
    of $B$ and apply the Davis-Kahan theorem. Finally, we conclude the proof
    observing that $\|\bw_2 - (2n)^{-1/2}\|\leq 2(\|\bw_{\bone}\|+\|\bw_{\perp}\|)$. 
\qed

\smallskip\noindent
Once we have Lemma~\ref{lem:main} we can prove the main theorem on    $\planted$ 
with the same argument used for Theorem~\ref{thm:irreg_stop} (the full proof 
is given in Appendix~\ref{sec:apx-random}).

\begin{theorem}\label{thm:tight_bounds}
Let $G \sim \mathcal{G}_{2n,p,q}$. If $(a-b)^2 > \Cr{optconst}(a+b) > 5 \log n$ 
and\footnote{\label{note1}It
should be possible to weaken the condition $d< n^{\frac 13-\Cr{tight}}$ via
some stronger concentration argument; see the proof of Lemma \ref{lem:sumdeg}
in Appendix~\ref{apx:sum_of_deg} for details.} $a+b<n^{\frac 13-\Cr{tight}}$
for some positive constants $\Cr{optconst}$ and $\Cr{tight}$, then the \avg
protocol produces an $\bigO(d/(a-b)^2)$-weak reconstruction within $\bigO(\log
n)$ rounds w.h.p. 
\end{theorem}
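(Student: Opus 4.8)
The plan is to reproduce the argument of Theorem~\ref{thm:irreg_stop} at the level of the dynamics, feeding it the sharper spectral data of Lemma~\ref{lem:main} in place of the worst-case estimates of Lemma~\ref{lem:clustered}. First I would expand the configuration after $t$ rounds in the orthonormal eigenbasis $\bw_1,\ldots,\bw_{2n}$ of the normalized adjacency matrix $N$. Writing $D^{1/2}\bx=\sum_i a_i\bw_i$ with $a_i=\bw_i^{\intercal}D^{1/2}\bx$ and using $P^t=D^{-1/2}N^tD^{1/2}$, this gives
\[
\bx^{(t)}=\alpha_1\bone+a_2\lambda_2^t\,D^{-1/2}\bw_2+\be^{(t)},
\qquad
\be^{(t)}=\sum_{i\geq 3}a_i\lambda_i^t\,D^{-1/2}\bw_i,
\]
where $\alpha_1=\bone^{\intercal}D\bx/\|D^{1/2}\bone\|^2$ is constant in $t$. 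Exactly as in the proof of Lemma~\ref{lem:clustered}, the tail obeys $\|\be^{(t)}\|\leq\|D^{-1/2}\|\,\|D^{1/2}\|\,\lambda^t\|\bx\|$; the hypothesis $a+b>5\log n$ forces every degree into $[(1-o(1))d,(1+o(1))d]$ w.h.p.\ by a Chernoff bound, so $\|D^{-1/2}\|\,\|D^{1/2}\|=\bigO(1)$ and hence $|\be^{(t)}(u)|\leq\|\be^{(t)}\|\leq C\lambda^t\sqrt{2n}$ for every $u$ and an absolute constant $C$.

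Second, I would form the consecutive difference (the operation that cancels the first eigenspace):
\[
\bx^{(t-1)}(u)-\bx^{(t)}(u)
=a_2\lambda_2^{t-1}(1-\lambda_2)\,(D^{-1/2}\bw_2)(u)
+\be^{(t-1)}(u)-\be^{(t)}(u).
\]
This is where Lemma~\ref{lem:main} replaces the $\bz$-vector of the general analysis. Claim~3 states that $\sqrt{2nd}\,(D^{-1/2}\bw_2)(u)=\bchi(u)+\eta(u)$ with $|\eta(u)|\leq 1/100$ for every $u\in V\setminus S$, so off $S$ the coordinate $(D^{-1/2}\bw_2)(u)$ has the sign of $\bchi(u)$ and magnitude at least $\tfrac{99}{100\sqrt{2nd}}$. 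Setting $\alpha_2:=a_2/\sqrt{2nd}$, the value $\texttt{sgn}\left(\bx^{(t-1)}(u)-\bx^{(t)}(u)\right)$ then equals $\texttt{sgn}\left(\alpha_2\bchi(u)\right)$ for every $u\notin S$ as soon as the leading term dominates the tail, i.e.\ once $\tfrac{99}{100}|\alpha_2|\lambda_2^{t-1}(1-\lambda_2)>2C\lambda^{t-1}\sqrt{2n}$. Claim~2 gives $\lambda_2\geq(1+\delta)\lambda$, so $\log(\lambda_2/\lambda)=\Theta(1)$; together with $1-\lambda_2\geq 1/\mathrm{poly}(n)$ (Cheeger's inequality and connectivity) and the lower bound $|\alpha_2|\geq n^{-c}$ discussed next, this makes the inequality hold at some $t=\bigO(\log n)$ and, since $\lambda_2>\lambda$, at every later round as well, exactly as in \eqref{eq:time}.

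Third, I would lower-bound $|\alpha_2|$ and count the misclassified nodes. Because $\bx$ is Rademacher and independent of $G$, I would condition on $G$ and view $a_2=(D^{1/2}\bw_2)^{\intercal}\bx=\sum_i\sqrt{d_i}\,\bw_2(i)\,\bx(i)$ as a weighted Rademacher sum; Claim~3 pins the weights on $V\setminus S$ to $\sqrt{d_i}\,\bw_2(i)=d_i(\bchi(i)+\eta(i))/\sqrt{2nd}$, all of absolute value $\Theta(\sqrt{d/n})$, over $|V\setminus S|=\Theta(n)$ coordinates. An anticoncentration estimate of Littlewood--Offord type over these coordinates (this is precisely the content of Lemma~\ref{lemma:apxrdm}, whose delocalization hypothesis is supplied by Claim~3) then yields $|\alpha_2|\geq n^{-c}$ w.h.p. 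Finally, every node outside $S$ is coloured according to $\texttt{sgn}\left(\alpha_2\bchi(u)\right)$, a single global sign, so $V_1\setminus S$ and $V_2\setminus S$ receive opposite colours and the misclassified set is contained in $S$; since $|S|=\bigO(nd/(a-b)^2)$ by Claim~3, the \avg protocol achieves an $\bigO(d/(a-b)^2)$-weak reconstruction in $\bigO(\log n)$ rounds.

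I expect the only genuinely delicate step, everything else being a transcription of the proof of Theorem~\ref{thm:irreg_stop}, to be the anticoncentration lower bound on $|\alpha_2|$: the eigenvector $\bw_2$ is graph-dependent and a priori could be localized, which would both enlarge a single weight toward the standard-deviation scale $\sqrt{d}$ and destroy the lattice structure underlying Littlewood--Offord. The resolution is to use Claim~3 quantitatively --- it forces $|\bw_2(i)|=\Theta(1/\sqrt n)$ off the small set $S$ and bounds the $\ell_2$-mass of $\bw_2$ on $S$ by $\bigO(d/(a-b)^2)$ --- so that the bulk of the coordinates carry comparable weights and the sum is sufficiently spread out for anticoncentration to apply with polynomially small failure probability.
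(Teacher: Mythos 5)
Your proposal is correct and follows essentially the same route as the paper: the paper's proof of Theorem~\ref{thm:tight_bounds} likewise expands $\bx^{(t)}$ in the eigenbasis of $N$, isolates the term $a_2\lambda_2^t D^{-1/2}\bw_2$, invokes Claim~3 of Lemma~\ref{lem:main} to get the entrywise sign and $\Theta(1/\sqrt{2nd})$ magnitude of $(D^{-1/2}\bw_2)(i)$ off the exceptional set, and then repeats the consecutive-difference and anticoncentration argument of Theorem~\ref{thm:irreg_stop} (via Lemma~\ref{lemma:apxrdm}, with the delocalization of $\bw_2$ supplied exactly as you describe). The only differences are cosmetic, e.g.\ you make the Chernoff control of $\|D^{\pm 1/2}\|$ and the containment of the misclassified set in $S$ explicit where the paper leaves them implicit.
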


\section{Moving beyond two communities: An outlook} 
\label{sec:morecomm}
The \avg protocol can be naturally extended to address the case of 
more communities. One way to achieve this is by performing a suitable number 
of independent, parallel runs of the protocol. We next outline the analysis 
for a natural generalization of the regular block model. 
This allows us to easily present the main ideas and to provide an 
intuition of how and why the protocol works. 

Let $G= (V,E)$ be a $d$-regular graph in which $V$ is partitioned into $k$
equal-size communities $V_1,\ldots,V_k$, while every node in $V_i$ has exactly $a$
neighbors within $V_i$ and exactly $b$ neighbors in each $V_j$, for $j\neq i$.
Note that $d = a + (k-1)\cdot b$. It is easy to see that the transition matrix
$P$ of the random walk on $G$ has an eigenvalue $(a - b)/d$ with multiplicity
$k-1$. The eigenspace of $(a-b)/d$ consists of all stepwise vectors that are constant
within each community $V_i$ and whose entries sum to zero. If $
\max\{|\lambda_{2n}|, \lambda_{k+1}\} < (1-\epsilon) \cdot (a - b)/d$, $P$
has eigenvalues $\lambda_1 = 1$, $\lambda_2 = \dots = \lambda_k = (a - b)/d$,
with all other eigenvalues strictly smaller by a $(1-\epsilon)$ factor.

Let $T$ be a large enough threshold such that, for all $t \geq T$, 
$\lambda_2^t > n^2 \lambda_{k+1}^t$ and note that $T$ is in the order of 
$(1/\epsilon) \log n$. 
Let $\bx \in \R^V$ be a vector. We say that a vertex $v$ is of {\em negative 
type} with respect to $\bx$ if, for all $t>T$, the value $(P^t \bx)_v$ 
decreases with $t$. We say that a vertex $v$ is of {\em positive type} with 
respect to $\bx$ if, for all $t>T$, the value $(P^t \bx)_v$ increases with $t$.
Note that a vertex might have neither type, because $(P^t \bx)_v$ might not be 
strictly monotone in $t$ for all $t> T$.

In Appendix~\ref{sec:apx-more} we prove the following: If we pick $\ell$ random vectors 
$\bx^1,\ldots, \bx^\ell$, each in $\{ -1,1\}^V$ then, with high probability,  
i) every vertex is either of positive or negative type for each 
$\bx^i$;\footnote{I.e., for every $t > T$, $(P^t \bx)_v$ monotonically 
increases (or decreases) with $t$.} ii) furthermore, if 
we associate a ``signature'' to each vertex, namely, the sequence of $\ell$ 
types, then vertices within the same $V_i$ exhibit the same signature, 
while vertices in different $V_i,V_j$ have different signatures. These 
are the basic intuitions that allow us to prove the following theorem.

\begin{theorem}[More communities]
    \label{thm:more}
    Let $G = (V,E)$ be a $k$-clustered $d$-regular graph defined as above and
    assume that $\lambda = \max\{|\lambda_{2n}|, \lambda_{k+1}\} < (1 -
    \epsilon)\frac{a - b}{d}$, for a suitable constant $\epsilon > 0$. Then,
    for $\ell = \Theta(\log n)$, the \avg protocol with $\ell$ parallel runs
    produces a \strong within $\bigO(\log n)$ rounds, w.h.p.
\end{theorem}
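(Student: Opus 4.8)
The plan is to reduce the multi-community analysis to $\ell$ independent instances of the regular two-community argument (Theorem~\ref{thm:reg_stop}) and then combine the instances through the signature mechanism described before the statement. Fix one run with Rademacher initial vector $\bx\in\{-1,1\}^{V}$, and write $\bx=\sum_i\alpha_i\bv_i$ in an orthonormal eigenbasis of the symmetric matrix $P=(1/d)A$, grouping together the eigenvalue $\lambda_2=(a-b)/d$ of multiplicity $k-1$. Since the corresponding eigenspace consists of the stepwise vectors that are constant on each $V_j$ and sum to zero, the projection of $\bx$ onto it is constant on each community; denote by $c_j$ its value on $V_j$, so that $\sum_j c_j=0$. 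Exactly as in \eqref{eq:decompreg} this yields, for $v\in V_j$,
\[
(P^t\bx)(v)=\alpha_1\bv_1(v)+\lambda_2^t\,c_j+\be^{(t)}(v),\qquad |\be^{(t)}(v)|\le \lambda^t\|\bx\|=\lambda^t\sqrt{kn},
\]
where the bound on $\be^{(t)}$ uses $\lambda=\max\{|\lambda_{2n}|,\lambda_{k+1}\}$ and Cauchy--Schwarz. Taking consecutive differences cancels the stationary term $\alpha_1\bv_1$ and gives $(P^t\bx)(v)-(P^{t+1}\bx)(v)=\lambda_2^t(1-\lambda_2)c_j+(\be^{(t)}-\be^{(t+1)})(v)$, whose sign equals $\operatorname{sgn}(c_j)$ as soon as $\lambda_2^t(1-\lambda_2)|c_j|>2\lambda^t\sqrt{kn}$.

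Since $\lambda\le(1-\epsilon)\lambda_2$, the ratio $(\lambda_2/\lambda)^t$ grows geometrically, so once the last inequality holds at some $t=T$ it holds for all $t>T$; hence $(P^t\bx)(v)$ is strictly monotone for $t>T$ and $v$ has a well-defined type, namely $\operatorname{sgn}(c_j)$. To control $T$ I would lower-bound $|c_j|$: writing $c_j=\tfrac1n\big(S_j-\tfrac1k\sum_i S_i\big)$ with $S_j=\sum_{v\in V_j}\bx(v)$, conditioning on the remaining communities makes $c_j$ an affine image of the Rademacher sum $S_j$, so the anti-concentration bound of Lemma~\ref{rad.sum} gives $|c_j|\ge n^{-\gamma}$ w.h.p.\ for a constant $\gamma$, and a union bound over the $k\ell=\bigO(\log n)$ pairs $(j,r)$ preserves this. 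Together with $1-\lambda_2=kb/d\ge n^{-\bigO(1)}$ this yields $T=\bigO(\epsilon^{-1}\log n)$, matching the threshold fixed before the theorem and guaranteeing that \emph{every} vertex is of positive or negative type in \emph{every} run, w.h.p.

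It remains to show that the signature $(\operatorname{sgn}(c_j^1),\dots,\operatorname{sgn}(c_j^\ell))$, which clearly depends only on $j$, separates distinct communities. In a single run the community sums $S_1,\dots,S_k$ are i.i.d.\ sums of $n$ Rademacher variables, and a direct covariance computation gives $\operatorname{Corr}(c_i,c_j)=-1/(k-1)$ for $i\neq j$. By the central limit theorem $\Prob{}{\operatorname{sgn}(c_i)\neq\operatorname{sgn}(c_j)}\to \tfrac12+\tfrac1\pi\arcsin\tfrac1{k-1}=:\kappa$, a positive constant (indeed $\kappa\ge \tfrac12$, and $\kappa=1$ when $k=2$, consistent with Theorem~\ref{thm:reg_stop}); hence for $n$ large each run separates a fixed pair with probability at least $\kappa$, and the runs are independent. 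Choosing $\ell=\Theta(\log n)$ large enough, the probability that communities $i$ and $j$ share all $\ell$ signs is $(1-\kappa)^\ell=n^{-\Omega(1)}$, so a union bound over the $\binom{k}{2}$ pairs (with $k=\bigO(1)$) shows that the map $j\mapsto$ signature is injective w.h.p. Grouping the vertices by their signatures therefore recovers the partition $V_1,\dots,V_k$ exactly, i.e.\ a \strong, within $T=\bigO(\log n)$ rounds.

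The main obstacle is the separation step: one must certify that the per-run probability $\kappa=\kappa(k)$ of sign disagreement is a positive constant bounded away from $0$ uniformly in $n$, rather than merely in the Gaussian limit. I would do this either through a quantitative (Berry--Esseen) version of the central limit theorem or by a direct anti-concentration estimate on the pair $(c_i,c_j)$, which also makes explicit how large $k$ may grow while keeping $\binom{k}{2}(1-\kappa)^\ell=o(1)$ and how the number of runs scales, $\ell=\Theta(\log n/\log\tfrac1{1-\kappa})$. Once this constant-probability separation is in hand, the monotonicity and anti-concentration ingredients are routine adaptations of the two-community regular analysis.
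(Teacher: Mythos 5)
Your proposal is correct in outline, and its first half --- showing that every vertex acquires a well-defined type equal (up to the increasing/decreasing labelling convention) to the sign of its community's coefficient $c_j$ in the $\lambda_2$-eigenspace --- is essentially the paper's Lemma~\ref{lem:morecom}, just phrased via the full projection onto the $\lambda_2$-eigenspace rather than the paper's four-way decomposition $\bx_{\bone}+\bx_{V_1}+\bx_{\perp_1}+\bx_{\perp}$; the two agree on $V_1$ because $\bx_{\perp_1}$ vanishes there. Where you genuinely diverge is the separation step. The paper never computes correlations: it exhibits an explicit constant-probability event (probability at least $1/100$) under which $V_1$ and $V_2$ receive opposite types, by imposing three \emph{independent} constant-probability constraints on the community sums ($\sum_{v\in V_1}x_v\in[2\sqrt n,3\sqrt n]$, $\sum_{v\in V_2}x_v\in[-2\sqrt n,-\sqrt n]$, and the remaining sum small), which force the $\bone_{V_1}-\bone_{V_2}$ component to dominate the $\bone_{V_1}+\bone_{V_2}-\frac2k\bone_V$ component. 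Your route --- $\mathrm{Corr}(c_i,c_j)=-1/(k-1)$ together with the Gaussian sign-disagreement formula $\kappa=\frac12+\frac1\pi\arcsin\frac1{k-1}$ --- buys a sharper per-run separation probability ($>1/2$ rather than $1/100$) and an explicit dependence on $k$, at the cost of needing a quantitative bivariate CLT (Berry--Esseen with control near the boundary $\{xy=0\}$) to certify that $\kappa$ is bounded away from $0$ at finite $n$; you flag this correctly, and the ``direct anti-concentration on the pair $(c_i,c_j)$'' fallback you mention is in effect exactly what the paper does. One small caution: Lemma~\ref{rad.sum} as stated is anti-concentration around $0$, so after conditioning on the other communities you need the shifted version (which follows from the same pointwise bound $\bigO(1/\sqrt n)$ on the Rademacher sum's distribution, but should be said).
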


\bibliographystyle{plain}
\bibliography{pbm}

\newpage
\appendix
\begin{center}
\LARGE{\textbf{Appendix}}
\end{center}

\section{Linear algebra toolkit}
\label{apx:la}

If $M \in  \R^{n \times n}$ is a real symmetric matrix, then it has $n$ real 
eigenvalues (counted with repetitions), $\lambda_1 \geq \lambda_2 \geq \cdots 
\geq \lambda_n$, and we can find a corresponding collection of orthonormal 
real eigenvectors $\bv_1,\ldots,\bv_n$ such that $M \bv_i = \lambda_i \bv_i$.

If $\bx \in \R^n$ is any vector, then we can write it as a linear combination
$\bx = \sum_i \alpha_i \bv_i$ of eigenvectors, where the coefficients of the
linear combination are $\alpha_i = \langle \bx, \bv_i \rangle$. In this 
notation, we can see that
\[
M \bx = \sum_i \lambda_i \alpha_i \bv_i,
\quad
\mbox{ and so }
\quad
M^t \bx = \sum_i \lambda_i^t \alpha_i \bv_i.
\]

\begin{lemma}[Cauchy-Schwarz inequality]\label{apx:csineq}
For any pair of vectors $\bx$ and $\by$
\[ 
\left| \langle \bx, \by \rangle \right| \leq \| \bx \| \cdot \| \by \|.
\]
\end{lemma}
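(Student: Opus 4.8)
The plan is to use the classical one-parameter quadratic (discriminant) argument, which is the cleanest self-contained proof and requires nothing beyond the definition of the inner product and the nonnegativity of the squared norm. First I would dispose of the degenerate case: if $\by = \bzero$, then $\langle \bx, \by \rangle = 0$ and $\|\by\| = 0$, so both sides of the claimed inequality vanish and the statement holds trivially. Hence I may assume $\by \neq \bzero$, so that $\|\by\|^2 > 0$.

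For the main case, consider the real-valued function
\[
    f(t) = \| \bx - t \by \|^2, \qquad t \in \R.
\]
Since it is a squared $\ell_2$ norm, we have $f(t) \geq 0$ for every $t$. Expanding via bilinearity of the inner product gives
\[
    f(t) = \langle \bx - t\by, \bx - t\by \rangle
         = \|\by\|^2 \, t^2 - 2 \langle \bx, \by \rangle \, t + \|\bx\|^2,
\]
which is a quadratic polynomial in $t$ with positive leading coefficient $\|\by\|^2$. A quadratic that is nonnegative for all real $t$ can have at most one real root, so its discriminant must be nonpositive:
\[
    \left( 2 \langle \bx, \by \rangle \right)^2 - 4 \|\by\|^2 \, \|\bx\|^2 \leq 0.
\]
Rearranging yields $\langle \bx, \by \rangle^2 \leq \|\bx\|^2 \|\by\|^2$, and taking square roots (both sides being nonnegative) gives $\left| \langle \bx, \by \rangle \right| \leq \|\bx\| \cdot \|\by\|$, as claimed.

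Since the inequality in this paper is stated for vectors over $\R$, no complex-conjugation subtleties arise and every manipulation above stays within the reals, so there is no real obstacle to the argument. The only point requiring minor care is the degenerate case $\by = \bzero$, handled first, where the leading coefficient of $f$ vanishes and the discriminant criterion no longer applies; once that is separated out, the quadratic argument goes through verbatim. An alternative route would be to normalize and invoke the AM--GM bound $|\langle \bx, \by\rangle| \leq \tfrac12(\|\bx\|^2 t + \|\by\|^2/t)$ optimized over $t$, but the discriminant version is the most economical and I would present that one.
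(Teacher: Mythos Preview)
Your proof is correct: the discriminant argument is the standard textbook proof and all steps are valid, including the separate handling of the degenerate case $\by = \bzero$. Note, however, that the paper does not actually prove this lemma; it is listed in the linear-algebra toolkit appendix as a standard fact without proof, so there is no ``paper's own proof'' to compare against --- any correct argument, including yours, is fine here.
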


\begin{observation}
For any matrix $A$ and any vector $\mathbf{x}$
\[ 
\| A\bx \| \leq \| A \| \cdot \|\bx \|,
\quad
\mbox{ and } 
\quad
\| A \cdot B \| \leq \| A \| \cdot \| B \|.
\]
\end{observation}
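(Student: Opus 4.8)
The plan is to derive both inequalities directly from the definition of the spectral norm, $\|A\| = \sup_{\mathbf{x}:\|\mathbf{x}\|=1}\|A\mathbf{x}\|$, using only the absolute homogeneity of the Euclidean norm and the linearity of matrix multiplication. No deep machinery is required; the argument is essentially a normalization trick, and the second inequality will be bootstrapped from the first.

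First I would establish the vector bound $\|A\mathbf{x}\|\le\|A\|\cdot\|\mathbf{x}\|$. If $\mathbf{x}=\mathbf{0}$, then both sides vanish and the claim is immediate, so I may assume $\mathbf{x}\neq\mathbf{0}$ and set $\mathbf{u}=\mathbf{x}/\|\mathbf{x}\|$, which is a unit vector. By the defining supremum, $\|A\mathbf{u}\|\le\|A\|$, while linearity of $A$ together with homogeneity of the norm gives $\|A\mathbf{u}\|=\|A\mathbf{x}\|/\|\mathbf{x}\|$. Multiplying through by $\|\mathbf{x}\|$ yields the first inequality.

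Next I would obtain the submultiplicativity $\|AB\|\le\|A\|\cdot\|B\|$ by applying the vector bound twice. For an arbitrary unit vector $\mathbf{x}$, write $(AB)\mathbf{x}=A(B\mathbf{x})$; then the first inequality gives $\|(AB)\mathbf{x}\|\le\|A\|\cdot\|B\mathbf{x}\|$, and a second application gives $\|B\mathbf{x}\|\le\|B\|\cdot\|\mathbf{x}\|=\|B\|$ since $\|\mathbf{x}\|=1$. Combining these, $\|(AB)\mathbf{x}\|\le\|A\|\cdot\|B\|$ for every unit vector $\mathbf{x}$, and taking the supremum over all such $\mathbf{x}$ produces $\|AB\|\le\|A\|\cdot\|B\|$.

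The only point requiring care — hardly an obstacle — is the separate treatment of the zero vector in the first step, since the normalization $\mathbf{x}/\|\mathbf{x}\|$ is undefined there; every other line is a one-step consequence of the supremum definition and homogeneity. I would also tacitly use that the supremum defining $\|A\|$ is finite, which holds because the unit sphere is compact and $\mathbf{x}\mapsto\|A\mathbf{x}\|$ is continuous, so $\|A\|$ is a genuine (attained) maximum rather than merely an upper bound.
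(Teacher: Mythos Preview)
Your argument is correct and is exactly the standard one: normalize to reduce the vector bound to the defining supremum, then chain two applications of the vector bound to get submultiplicativity. The paper, for its part, provides no proof at all---the statement is listed as an \emph{Observation} in the linear-algebra toolkit appendix and is taken as a basic fact about the spectral norm---so there is nothing to compare your approach against beyond noting that what you wrote is precisely the justification one would supply if asked.
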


\begin{observation}\label{apx-obs:secondeig}
If $G$ is a $(2n,d,b)$-clustered regular graph with clusters $V_1$ and 
$V_2$ and $\bchi = \bone_{V_1} - \bone_{V_2}$ is the partition indicator 
vector, then $\bchi$ is an eigenvector of the transition matrix $P$ of $G$ 
with eigenvalue $1- 2b/d$.
\end{observation}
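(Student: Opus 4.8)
The plan is to verify the eigenvector equation $P\bchi = (1 - 2b/d)\bchi$ directly, coordinate by coordinate, exploiting the clustered-regular structure of $G$. Since $G$ is $d$-regular, the transition matrix simplifies to $P = (1/d) A$, so it suffices to show that $A\bchi = (d - 2b)\bchi$; dividing by $d$ then yields the claim. I would therefore compute $(A\bchi)(i) = \sum_j A_{i,j}\bchi(j)$ for an arbitrary node $i$, splitting into the two cases $i \in V_1$ and $i \in V_2$.

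First I would handle a node $i \in V_1$. By Definition~\ref{def:clusteredregular}, such a node has exactly $b$ neighbors in $V_2$ and hence exactly $d - b$ neighbors in $V_1$. Since $\bchi(j) = +1$ on $V_1$ and $\bchi(j) = -1$ on $V_2$, the neighbor sum collapses to $(d-b) - b = d - 2b$, and because $\bchi(i) = +1$ this equals $(d-2b)\bchi(i)$. The symmetric computation for $i \in V_2$ concerns a node with $b$ neighbors in $V_1$ and $d - b$ in $V_2$, so the neighbor sum is $b - (d-b) = -(d - 2b) = (d-2b)\bchi(i)$, using $\bchi(i) = -1$. In both cases $(A\bchi)(i) = (d - 2b)\bchi(i)$, so $A\bchi = (d-2b)\bchi$ and therefore $P\bchi = (1 - 2b/d)\bchi$, as desired.

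There is no genuine obstacle here: the statement is a direct consequence of the two defining degree conditions, and the only thing to watch is the sign bookkeeping across the two cases, which the uniform identity $(A\bchi)(i) = (d-2b)\bchi(i)$ captures cleanly. The one structural fact I am implicitly relying on is that both the intra-cluster degree ($d-b$) and the cross-cluster degree ($b$) are constant over \emph{every} node, not merely on average; this is exactly what the clustered-regular definition guarantees, and it is what makes $\bchi$ an \emph{exact} eigenvector rather than an approximate one (in contrast to the $\gamma$-regular setting of Definition~\ref{def:epsreggraph}, where only an approximate version survives).
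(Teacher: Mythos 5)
Your proof is correct and follows essentially the same route as the paper's: the paper likewise counts the $d-b$ same-side neighbors (contributing $\bchi(i)$ each) and the $b$ cross-side neighbors (contributing $-\bchi(i)$ each) to get $(P\bchi)_i = \frac{1}{d}\bigl((d-b)\bchi(i) - b\bchi(i)\bigr) = (1-2b/d)\bchi(i)$, merely avoiding your explicit two-case split by expressing the neighbor values directly in terms of $\bchi(i)$.
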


\begin{proof} 
Every node $i$ has $b$ neighbors $j$ on the opposite side of the partition, 
for which  $\bchi(j) = - \bchi(i)$, and $d-b$ neighbors $j$ on the same side, 
for which $\bchi(j) = \bchi(i)$, so 
\[
(P\bchi)_i 
= \frac 1d \left( (d-b) \bchi(i) - b \bchi(i) \right) 
= \left( 1 - \frac{2b}{d} \right) \bchi(i).
\]
\end{proof}

\begin{theorem}[Matrix Bernstein Inequality]\label{thm:matrixBer}
Let $X_1,\ldots,X_N$ be a sequence of independent $n \times n$ symmetric 
random matrices, such that $\E {X_i} = \bzero$ for every $i$ and such that 
$\| X_i \| \leq L$ with probability 1 for every $L$. Call 
$\sigma := \| \E{ \sum_i X_i^2} \|$. Then, for every $t$, we have
\[ 
\pr{} { \left\| \sum_i X_i \right\| \geq t } \leq 2n e^{\frac {-t^2}{2 \sigma + \frac 23 Lt } }.
\]
\end{theorem}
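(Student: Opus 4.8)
The plan is to run the matrix Laplace-transform (exponential moment) method, the matrix analogue of the classical Bernstein/Chernoff argument. Write $S = \sum_i X_i$, which is symmetric, so $\|S\| = \max\{\lambda_{\max}(S), \lambda_{\max}(-S)\}$. Since the matrices $-X_i$ satisfy exactly the same hypotheses as the $X_i$ (mean zero, norm at most $L$, same $\sigma$), it suffices to bound $\pr{}{\lambda_{\max}(S) \ge t}$ and then double the estimate, which accounts for the factor $2$ out front. First I would pass to the exponential: for any $\theta > 0$, since $x \mapsto e^{\theta x}$ is monotone and $e^{\theta \lambda_{\max}(S)} = \lambda_{\max}(e^{\theta S}) \le \operatorname{tr}(e^{\theta S})$ (the last step because $e^{\theta S} \succeq 0$, so its top eigenvalue is dominated by its trace), Markov's inequality gives
\[
\pr{}{\lambda_{\max}(S) \ge t} \le e^{-\theta t}\, \E{\operatorname{tr}\big(e^{\theta S}\big)}.
\]
The whole game is now to control the trace moment generating function $\E{\operatorname{tr}(e^{\theta S})}$ and then optimize over $\theta$.

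The central, and hardest, step is that matrices do not commute, so one cannot simply factor $e^{\theta S}$ as $\prod_i e^{\theta X_i}$ and push the expectation through. The clean way around this, which I would use to keep the constants in the exponent sharp, is the subadditivity of matrix cumulants: iterating Lieb's concavity theorem (that $A \mapsto \operatorname{tr}\exp(H + \log A)$ is concave on positive definite $A$) together with Jensen's inequality lets me pull each expectation inside one factor at a time, arriving at the master bound
\[
\E{\operatorname{tr}\big(e^{\theta S}\big)} \le \operatorname{tr}\exp\Big(\textstyle\sum_i \log \E{e^{\theta X_i}}\Big).
\]
A cruder route via the Golden--Thompson inequality $\operatorname{tr}(e^{A+B}) \le \operatorname{tr}(e^A e^B)$, peeling off one term at a time and using independence, also yields a bound of this shape; Lieb's theorem is what makes it tight. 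This non-commutative reduction is the real obstacle; everything else is a transcription of the scalar Bernstein computation.

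Next I would bound each individual matrix mgf $\E{e^{\theta X_i}}$ in the Loewner order. Since $\|X_i\| \le L$, the spectrum of $X_i$ lies in $[-L, L]$, and because $x \mapsto (e^{\theta x} - 1 - \theta x)/x^2$ is increasing, the scalar inequality $e^{\theta x} - 1 - \theta x \le f(\theta)\, x^2$ holds on that interval with $f(\theta) = (e^{\theta L} - 1 - \theta L)/L^2$. By the transfer rule (a scalar inequality between functions on the spectrum lifts to a Loewner inequality between the corresponding matrix functions), $e^{\theta X_i} \preceq I + \theta X_i + f(\theta) X_i^2$. Taking expectations and using $\E{X_i} = \bzero$ gives $\E{e^{\theta X_i}} \preceq I + f(\theta)\,\E{X_i^2}$, hence $\log \E{e^{\theta X_i}} \preceq f(\theta)\,\E{X_i^2}$ from operator monotonicity of $\log$ and $\log(I + M) \preceq M$. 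Summing over $i$, monotonicity of the trace exponential, and the bound $\operatorname{tr}\exp(M) \le n\, e^{\lambda_{\max}(M)}$ yield
\[
\E{\operatorname{tr}\big(e^{\theta S}\big)} \le n \exp\!\Big(f(\theta)\,\big\|\textstyle\sum_i \E{X_i^2}\big\|\Big) = n\, e^{f(\theta)\sigma},
\]
where I used that $\sum_i \E{X_i^2} \succeq 0$, so its spectral norm equals its largest eigenvalue, which is precisely $\sigma$.

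Finally I would combine the two displays and optimize. The standard estimate $f(\theta) \le \tfrac{\theta^2/2}{1 - \theta L/3}$, valid for $0 < \theta L < 3$, gives
\[
\pr{}{\lambda_{\max}(S) \ge t} \le n \exp\!\Big(-\theta t + \tfrac{\theta^2 \sigma/2}{1 - \theta L/3}\Big),
\]
and choosing $\theta = t/(\sigma + Lt/3)$ (which is automatically below $3/L$) collapses the exponent to $-\tfrac{t^2/2}{\sigma + Lt/3} = -\tfrac{t^2}{2\sigma + \frac23 Lt}$. Doubling to account for $\lambda_{\max}(-S)$ as in the first paragraph produces the claimed $2n\, e^{-t^2/(2\sigma + \frac23 Lt)}$, completing the argument.
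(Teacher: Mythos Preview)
Your proof is correct and follows the standard modern approach due to Tropp: the matrix Laplace-transform method, Lieb's concavity theorem to handle non-commutativity, the Bernstein mgf bound transferred via the Loewner order, and the usual optimization in $\theta$. There is nothing to compare it against, however: the paper does not prove this theorem. It is stated in the appendix as a known tool from the matrix concentration literature and invoked once (in the proof of Lemma~\ref{lemma:irregone}) without any argument. So you have supplied a full, correct proof where the paper simply quotes the result.
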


\begin{theorem}(Corollary 4.10 in \cite{stewart})
    \label{thm:eigen_vs_norm}
    Let $M_1$ and $M_2$ be two Hermitian matrices, let $\lambda_1 \geq \lambda_2
    \geq \cdots \geq \lambda_n$ be the eigenvalues of $M_1$ with multiplicities in
    non-increasing order, and let $\lambda'_1 \geq \lambda'_2 \geq \cdots \geq
    \lambda'_n$ be the eigenvalues of $M_2$ with multiplicities in non-increasing
    order. Then, for every $i$, 
    \[ 
        | \lambda_i - \lambda'_i | \leq \| M_1 - M_2\|.
    \]
\end{theorem}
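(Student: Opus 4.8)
The plan is to invoke the Courant--Fischer min--max characterization of the eigenvalues of a Hermitian matrix and then show that the spectral norm of $M_1 - M_2$ controls how much each ordered Rayleigh quotient can shift. Concretely, writing $E := M_1 - M_2$, which is itself Hermitian, I would first record the elementary fact that for every unit vector $\bx$ one has $|\langle E\bx, \bx\rangle| \le \|E\|$; this is immediate from the identity $\|E\| = \max_{\|\bx\|=1} |\langle E\bx,\bx\rangle|$ valid for Hermitian $E$. The key point is therefore to transfer this uniform bound on the quadratic form into a bound on the ordered eigenvalues.

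For the transfer I would use the characterization
\[
\lambda_i = \max_{\dim S = i} \; \min_{\substack{\bx \in S \\ \|\bx\| = 1}} \langle M\bx, \bx\rangle,
\]
valid for any Hermitian $M$ with eigenvalues $\lambda_1 \ge \cdots \ge \lambda_n$ listed with multiplicity. Let $S^\star$ be an $i$-dimensional subspace attaining the maximum for $M_2$, so that $\min_{\bx \in S^\star, \|\bx\|=1}\langle M_2\bx,\bx\rangle = \lambda'_i$. Since $S^\star$ is one feasible (not necessarily optimal) subspace for $M_1$, I obtain
\[
\lambda_i \;\ge\; \min_{\substack{\bx \in S^\star \\ \|\bx\|=1}} \langle M_1\bx, \bx\rangle
\;=\; \min_{\substack{\bx\in S^\star\\ \|\bx\|=1}}\bigl(\langle M_2\bx,\bx\rangle + \langle E\bx,\bx\rangle\bigr)
\;\ge\; \lambda'_i - \|E\|,
\]
where the final step bounds the minimum of a sum from below by $\min_{\bx\in S^\star}\langle M_2\bx,\bx\rangle = \lambda'_i$ together with the uniform lower bound $\langle E\bx,\bx\rangle \ge -\|E\|$. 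This yields $\lambda'_i - \lambda_i \le \|E\|$.

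Swapping the roles of $M_1$ and $M_2$ (the quantity $\|E\| = \|M_1 - M_2\|$ is symmetric in the two matrices) gives $\lambda_i - \lambda'_i \le \|E\|$ by the identical argument, and combining the two one-sided inequalities produces $|\lambda_i - \lambda'_i| \le \|M_1 - M_2\|$, as required. The step requiring the most care is the min--max manipulation: I must ensure the subspace optimal for one matrix is used only as a \emph{test} subspace for the other, so that no second maximization is performed, and I must correctly pass from the minimum of a sum to a sum of lower bounds. There is no genuine obstacle beyond this bookkeeping; should one prefer to avoid quoting Courant--Fischer outright, the same conclusion follows from a dimension-counting argument, since the span of the top $i$ eigenvectors of $M_1$ and the span of the bottom $n-i+1$ eigenvectors of $M_2$ have dimensions summing to $n+1$ and hence share a nonzero vector $\bw$, on which $\langle M_1\bw,\bw\rangle \ge \lambda_i$ and $\langle M_2\bw,\bw\rangle \le \lambda'_i$ hold simultaneously.
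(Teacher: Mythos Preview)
Your proof is correct; this is the standard argument for Weyl's inequality via the Courant--Fischer min--max characterization, and the bookkeeping you flag (using the optimal subspace for one matrix merely as a test subspace for the other, and lower-bounding the minimum of a sum) is handled cleanly. Note, however, that the paper does not supply its own proof of this statement at all: it is quoted without proof as Corollary~4.10 from \cite{stewart}, so there is nothing to compare your argument against.
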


\begin{theorem}[Davis and Kahan, 1970]
    \label{thm:daviskahan}
    Let $M_1$ and $M_2$ be two symmetric real matrices, let $\bx$ be a unit 
    length eigenvector of $M_1$ of eigenvalue $t$, and let $\bx_{p}$ be the 
    projection of $\bx$ on the eigenspace of the eigenvectors of $M_2$ 
    corresponding to eigenvalues $\leq t-\delta$. Then
    \[ 
        \| \bx_p \| \leq \frac 2 {\delta \pi} \| M_1 - M_2\|.
    \]
\end{theorem}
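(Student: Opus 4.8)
The plan is to work directly in an eigenbasis of $M_2$ and reduce the claim to a coefficient-wise estimate controlled by the spectral gap. First I would fix an orthonormal basis $\bv_1,\dots,\bv_n$ of eigenvectors of the symmetric matrix $M_2$, with $M_2\bv_j=\mu_j\bv_j$, and set $S=\{\,j:\mu_j\le t-\delta\,\}$. By definition of the projection, $\bx_p=\sum_{j\in S}\langle\bx,\bv_j\rangle\,\bv_j$, so that $\|\bx_p\|^2=\sum_{j\in S}\langle\bx,\bv_j\rangle^2$, and the whole problem becomes one of bounding the coefficients $\langle\bx,\bv_j\rangle$ for the ``far'' indices $j\in S$.

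The key identity comes from pairing the perturbation $M_2-M_1$ against each far eigenvector. Using $M_1\bx=t\bx$ and the symmetry of $M_2$,
\[
\langle (M_2-M_1)\bx,\bv_j\rangle=\langle\bx,M_2\bv_j\rangle-t\langle\bx,\bv_j\rangle=(\mu_j-t)\langle\bx,\bv_j\rangle,
\]
whence $\langle\bx,\bv_j\rangle=\langle(M_2-M_1)\bx,\bv_j\rangle/(\mu_j-t)$. For $j\in S$ the separation hypothesis gives $|\mu_j-t|=t-\mu_j\ge\delta$, so by Parseval
\[
\|\bx_p\|^2=\sum_{j\in S}\frac{\langle(M_2-M_1)\bx,\bv_j\rangle^2}{(\mu_j-t)^2}\le\frac{1}{\delta^2}\sum_{j}\langle(M_2-M_1)\bx,\bv_j\rangle^2=\frac{\|(M_1-M_2)\bx\|^2}{\delta^2}.
\]
Since $\|(M_1-M_2)\bx\|\le\|M_1-M_2\|\,\|\bx\|=\|M_1-M_2\|$, this yields $\|\bx_p\|\le\|M_1-M_2\|/\delta$, establishing the statement up to the value of the leading constant (indeed in the sharper form $\|\bx_p\|\le\|(M_1-M_2)\bx\|/\delta$, which is all the applications actually require).

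The only remaining point is the precise constant $2/\pi$, and this is where I expect the genuine work to lie. Obtaining a leading constant below the worst-case factor $1/\delta$ produced above requires not estimating each coefficient separately but handling the whole spectral projector at once. The standard Davis--Kahan device is to represent the projector onto $\{\mu_j\le t-\delta\}$ through the integral formula for the matrix sign function, $\mathrm{sgn}(M)=\tfrac{2}{\pi}\int_0^\infty M(M^2+y^2 I)^{-1}\,dy$, and to bound the difference of the projectors of $M_1$ and $M_2$ by applying the resolvent identity to $(M_i-\mathrm{i}y)^{-1}$ along the imaginary axis; the factor $2/\pi$ is exactly the normalisation of this integral. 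The delicate step is controlling the integrand uniformly in $y$: the crude bound $\|(M_i-\mathrm{i}y)^{-1}\|\le 1/y$ is not integrable near $y=0$, so one must exploit the one-sided separation between $t$ and the far spectrum of $M_2$ to force decay at both ends of the integral. Since only the scaling $\|\bx_p\|=\bigO(\|M_1-M_2\|/\delta)$ is used in the sequel, the elementary computation above already suffices wherever this bound is invoked, and the sharp constant may be quoted from the original reference.
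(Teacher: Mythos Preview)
The paper does not prove this theorem: it appears in the ``Linear algebra toolkit'' appendix as a cited result attributed to Davis and Kahan (1970), with no argument given. There is therefore no proof in the paper to compare your proposal against.

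Your elementary argument is correct and standard: expanding in an eigenbasis of $M_2$, using $M_1\bx=t\bx$ to rewrite $\langle\bx,\bv_j\rangle=(\mu_j-t)^{-1}\langle(M_2-M_1)\bx,\bv_j\rangle$, and summing over the far indices yields $\|\bx_p\|\le\|M_1-M_2\|/\delta$. This is the form of the bound that the paper actually uses (in the proof of Lemma~\ref{lem:main}, only the scaling $\|\bx_p\|=\bigO(\|M_1-M_2\|/\delta)$ is invoked), so for the purposes of the paper your computation is already sufficient. Your diagnosis of the sharper constant $2/\pi$ is also accurate: it does not follow from the coefficient-by-coefficient estimate and requires the integral representation of the spectral projector from the original Davis--Kahan work, which you correctly sketch but do not carry out. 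Since the paper simply quotes the result, citing the original reference for the exact constant is entirely in keeping with what the authors themselves do.
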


%%%%%%%%%%%%%%%%%%%%%%%%%%%%%%%%%%%%%%%%%%%%%%%%
\section{Length of the projection of $\bx$} 

For the analysis of the \avg 
dynamics on both regular and non-regular graphs, it is important to 
understand the distribution of the projection of $\bx$ on $\bone$ 
and $\bchi$, that is (up to scaling) the distribution of the inner 
products $\langle \bx , \bone\rangle$ and $\langle \bx, \bchi 
\rangle$.  In particular we are going to use the 
following bound.

\begin{lemma} \label{rad.sum} 
If we pick $\bx$ uniformly at random  in $\{-1,1\}^{2n}$ then, for any     $\delta >0$ and any fixed 
      vector $\bw \in \{-1,1\}^{2n}$ with $\pm 1$ entries, it holds 
    \begin{equation} 
        \pr{}{ \big|  \langle (1/ \sqrt{2n})\, \bw  ,\, \bx \rangle \big|   \leq \delta } \leq \bigO(\delta).
    \end{equation}
\end{lemma}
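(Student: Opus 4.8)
The plan is to reduce the statement to a standard anti-concentration bound for a sum of Rademacher variables and then control the resulting small-ball probability by an atom-counting argument. First I would observe that, writing $R := \langle \bw , \bx \rangle = \sum_{i=1}^{2n} \bw(i)\,\bx(i)$, the event in question is exactly $\{ |R| \le \delta\sqrt{2n}\}$, since $\langle (1/\sqrt{2n})\,\bw, \bx\rangle = R/\sqrt{2n}$. Because every $\bw(i)\in\{-1,+1\}$ is a fixed sign while every $\bx(i)$ is an independent uniform $\pm1$ variable, each product $\bw(i)\,\bx(i)$ is again uniform on $\{-1,+1\}$ and the products are mutually independent. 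Hence $R$ has exactly the distribution of a sum of $2n$ independent Rademacher variables, \emph{irrespective of the particular sign pattern $\bw$}; in particular the bound we must prove does not depend on $\bw$, and we may as well take $\bw=\bone$ and study $R=\sum_i \bx(i)$.

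The key ingredient is a uniform pointwise bound on the atoms of $R$. Since $R = 2k-2n$ where $k$ is the number of indices with $\bx(i)=+1$ and $k\sim\mathrm{Bin}(2n,1/2)$, we have $\Pr{R = 2k-2n} = \binom{2n}{k}2^{-2n}$. This is maximised at $k=n$, and by Stirling's approximation $\binom{2n}{n}2^{-2n} = \Theta(1/\sqrt{n})$. Thus there is an absolute constant $c_0$ such that every atom of $R$ carries mass at most $c_0/\sqrt{n}$. (This is precisely the elementary $a_i\equiv 1$ special case of the Littlewood--Offord--Erd\H{o}s anti-concentration inequality, so nothing beyond the central binomial estimate is needed.)

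It remains to count how many atoms fall in the relevant window and to apply a union bound. The support of $R$ is the arithmetic progression $\{-2n,-2n+2,\dots,2n\}$, whose consecutive points are spaced by $2$; hence an interval of length $2\delta\sqrt{2n}$ contains at most $\delta\sqrt{2n}+1$ of them. Summing the pointwise bound over these atoms gives
\[
\Pr{|R| \le \delta\sqrt{2n}} \le \bigl(\delta\sqrt{2n}+1\bigr)\cdot \frac{c_0}{\sqrt{n}} = \bigO(\delta) + \bigO\!\left(\frac{1}{\sqrt{n}}\right).
\]

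The only subtlety — and the place where one must be slightly careful — is the genuinely tiny-$\delta$ regime. The error term $\bigO(1/\sqrt{n})$ cannot be removed for arbitrarily small $\delta$: indeed $\Pr{R=0}=\binom{2n}{n}2^{-2n}=\Theta(1/\sqrt{n})$, so the left-hand side stays bounded below by a fixed positive quantity as $\delta\to 0$ at fixed $n$, and the literal bound $\bigO(\delta)$ with an absolute constant can hold only once $\delta=\Omega(1/\sqrt{n})$. This is exactly the regime in which the lemma is invoked: in the applications (for instance in the proof of Theorem~\ref{thm:reg_stop}) one takes $\delta$ of order $n^{1/2-\gamma}$ with $1/2<\gamma<1$, so that $1/\sqrt{n}\ll\delta\ll 1$ and the error term is dominated by $\bigO(\delta)$. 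I therefore expect no genuine obstacle here: the entire content is the central binomial estimate of the second paragraph, and the remaining steps — the reduction to a Rademacher sum and the lattice-point count — are routine.
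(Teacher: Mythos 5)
Your proof is correct and follows essentially the same route as the paper's: reduce to a sum of $2n$ Rademacher variables, bound each atom by $\binom{2n}{n}2^{-2n}=\Theta(1/\sqrt{n})$, and count the $\bigO(\delta\sqrt{n})$ atoms in the window. Your additional remark about the unavoidable $\bigO(1/\sqrt{n})$ additive term for very small $\delta$ is a point the paper's own proof silently glosses over, and your observation that $\delta=\Omega(1/\sqrt{n})$ in every application correctly disposes of it.
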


\begin{proof}
Since $\bx$ is a vector of independent and uniformly 
distributed random variables in $\{-1,1\}$, both $\langle \bx, \bchi 
\rangle$ and $\langle \bx , \bone \rangle$ have the distribution of 
a sum of $2n$ Rademacher random variables. Such a sum takes the 
value $2k-2n$ with probability $\frac{1}{2^n} \binom{2n}{k}$, and 
so every possible value has probability at most $\frac{1}{2^n}\binom{2n}{n} 
\approx \frac{1}{\sqrt {2\pi n}}$. Consequently, if $R$ 
is the sum of $2n$ Rademacher random variables, we have
$\Prob{}{ |R| \leq \delta \sqrt {2n}} \leq \bigO(\delta)$. 
\end{proof}

\noindent
Although it is possible to argue that  a Rademacher vector has $\Omega (1)$ probability of having inner product $\Omega( \| \bw\|)$ with every vector $\bw$, such a statement does not hold w.h.p.
We do have, however, estimates of the inner product of a   vector $\bw$ with a Rademacher vector $\bx$ provided that $\bw$ is close to a vector in $\{-1,1\}^{2n}$.

\begin{lemma}
    \label{lemma:apxrdm} 
    Let $k$ be a positive integer. For every $nk$-dimensional vector $\bw$ such that
    $|\left\{ i \,|\,\, |\bw(i)| \geq c \right\}| \geq n$ for some positive
    constant $c$, if we pick $\bx$ uniformly at random in $\{-1,1\}^{kn}$, then
    \begin{equation} 
        \pr{}{ \big|  \langle(1/ \sqrt{kn})\,\bw,\, \bx \rangle \big|  \leq \delta} \leq
            \bigO(k \delta) + \bigO\left( \frac 1 {\sqrt {n}} \right).
    \end{equation}
\end{lemma}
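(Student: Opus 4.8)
The plan is to reduce the statement to a Littlewood--Offord--type anti-concentration bound for a Rademacher sum whose coefficients are bounded away from zero, dealing with the remaining (small or irregular) coordinates of $\bw$ by conditioning. First I would isolate the large coordinates: set $S = \{ i : |\bw(i)| \ge c\}$, so that $|S| \ge n$ by hypothesis. Splitting $\langle \bw,\bx\rangle = \sum_{i\in S}\bw(i)\bx(i) + \sum_{i\notin S}\bw(i)\bx(i)$ and conditioning on the signs $\{\bx(i)\}_{i\notin S}$, the second sum becomes a fixed real $Y$, so the event $|\langle (1/\sqrt{kn})\bw,\bx\rangle|\le\delta$ is exactly the event that $\sum_{i\in S}\bw(i)\bx(i)$ falls in the length-$2\delta\sqrt{kn}$ interval $[-Y-\delta\sqrt{kn},\,-Y+\delta\sqrt{kn}]$. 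Since replacing $\bx(i)$ by $\mathrm{sgn}(\bw(i))\bx(i)$ does not change the distribution, I may assume $\bw(i)\ge c$ on $S$; and since the bound below is uniform over the interval (hence over $Y$), averaging it over the conditioning will finish the proof.

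The core step is the anti-concentration estimate for the conditioned sum. I would use the Erdős form of the Littlewood--Offord inequality: for coefficients $\bw(i)\ge c$ and independent Rademacher signs, flipping any one sign moves $\sum_{i\in S}\bw(i)\bx(i)$ by at least $2c$, so the sign patterns landing in any fixed open interval of length $2c$ form an antichain in the Boolean lattice; Sperner's theorem bounds their number by $\binom{|S|}{\lfloor|S|/2\rfloor}$, whence that probability is $\bigO(1/\sqrt{|S|})$. Covering an interval of length $2\delta\sqrt{kn}$ by $\bigO(\delta\sqrt{kn}/c+1)$ such intervals and summing gives
\[
\sup_{a\in\R}\,\pr{}{\,\textstyle\sum_{i\in S}\bw(i)\bx(i)\in[a,\,a+2\delta\sqrt{kn}]\,}
\le \bigO\!\left(\frac{\delta\sqrt{kn}/c+1}{\sqrt{|S|}}\right)
= \bigO(\delta\sqrt{k})+\bigO\!\left(\frac{1}{\sqrt{n}}\right),
\]
using $|S|\ge n$ and that $c$ is a positive constant. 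Applying this with $a=-Y-\delta\sqrt{kn}$ and averaging over the conditioning on $\{\bx(i)\}_{i\notin S}$ yields $\bigO(\delta\sqrt{k})+\bigO(1/\sqrt{n})$, which implies the claimed $\bigO(k\delta)+\bigO(1/\sqrt n)$ since $\sqrt{k}\le k$ for $k\ge 1$.

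The main obstacle is precisely this anti-concentration for coefficients of arbitrary, merely bounded-below magnitude. In Lemma~\ref{rad.sum} all coordinates equal $\pm 1$, so the sum is an explicit shifted binomial and its maximal point mass is read off directly; here the large coordinates have distinct sizes, so one cannot compute the distribution explicitly and must instead argue combinatorially (the antichain/Sperner route above) or, equivalently, via the Esseen characteristic-function bound $\big|\E{e^{it\sum_{i\in S}\bw(i)\bx(i)}}\big|=\prod_{i\in S}|\cos(\bw(i)t)|$. The conditioning on the complement of $S$ is what neutralizes the coordinates where $\bw$ is small or uncontrolled, so that the whole estimate is driven by the $\ge n$ large coordinates, which is exactly where the quantitative gain over a trivial bound originates.
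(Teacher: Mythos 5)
Your proposal is correct, and its outer structure coincides with the paper's: both isolate the set $S$ of coordinates with $|\bw(i)|\ge c$, condition on the signs outside $S$ so that the small/uncontrolled coordinates contribute only a fixed shift $t$, and then apply an anti-concentration bound, uniform in $t$, to the sum over $S$. The difference lies entirely in the anti-concentration tool. The paper invokes the Berry--Esseen theorem to compare $\sum_{i\in S}\bw(i)\bx(i)$ with a Gaussian of variance $\sigma^2=\sum_{i\in S}\bw(i)^2\ge c^2 n$ and then bounds the Gaussian density; you instead use the Erd\H{o}s--Sperner form of the Littlewood--Offord inequality (point-interval probability $\bigO(1/\sqrt{|S|})$ for an open interval of length $2c$, after normalizing so $\bw(i)\ge c$ on $S$) together with a covering of the target interval by $\bigO(\delta\sqrt{kn}/c+1)$ such pieces. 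Both routes land on $\bigO(\sqrt{k}\,\delta)+\bigO(1/\sqrt{n})$, which implies the stated $\bigO(k\delta)+\bigO(1/\sqrt n)$. Your combinatorial route is in one respect more robust: the Berry--Esseen error term is $\bigO\bigl(\sum_{i\in S}|\bw(i)|^3/\sigma^3\bigr)$, which is $\bigO(1/\sqrt n)$ only when the large coordinates are comparable in magnitude (a single enormous entry would make this ratio $\Theta(1)$), whereas the Littlewood--Offord bound needs only a lower bound on the $|\bw(i)|$ and is indifferent to how large they get. The price is that the antichain argument gives no more than interval anti-concentration, while the Gaussian comparison would also yield distributional information; for the purposes of this lemma the former suffices.
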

\begin{proof}
     Let $S \subset \{ 1,\ldots,kn\}$ be the set of  coordinates $i$ of $\bw$
     such that $| \bw(i) | \geq c$. By hypothesis, we have
     $|S| \geq n$. Let $T:= \{ 1,\ldots, kn\} - S$. Now, for every
     assignment ${\bf a}\in \{-1,1\}^{kn}$, we will show that 
    \[  
        \pr{}{ | \langle \bw,\bx \rangle |   \leq \delta \sqrt {kn} \ | \  \ \forall i\in T , \,
        \bx(i) = {\bf a}(i) } \leq \bigO(\delta),
    \]
    and then the lemma will follow. Call $t := \sum_{i\in T} a_iz_i$. We need to show
    \[  
        \pr{}{ | \sum_{i \in S} \bx(i) \bw(i) + t |  \leq \delta \sqrt {kn}  } \leq \bigO(\delta). 
    \]
    From the Berry-Esseen theorem, 
    \[   
        \pr{}{  | \sum_{i \in S} \bx(i) \bw(i) + t |  \leq \delta \sqrt {kn}  } \leq
        \pr{}{  | g+ t |  \leq \delta \sqrt {kn} } + \bigO \left( \frac 1 {\sqrt {n}}
        \right), 
    \]
    where $g$ is a Gaussian random variable of mean 0 and variance $\sigma^2 =
    \sum_{i\in S}  (\bw(i))^2 \geq c^2 |S| \geq c^2\,n$, so
    \[  
        \pr{}{  | g+ t |  \leq \delta \sqrt {kn} }  = \frac 1 { \sqrt{2\sigma^2\pi}}
        \int_{-t-\delta\sqrt {kn}}^{-t+\delta \sqrt {kn}} e^{-\frac{s^2}{2\sigma^2}} ds \leq
        \frac{2\delta \sqrt {kn}}{\sqrt{ 2\pi c^2\,n}} = \frac{\sqrt{2k} \delta }{\sqrt{\pi}c},
    \]
    where we used the fact that $e^{-s^2/2} \leq 1$ for all $s$.
\end{proof}

%%%%%%%%%%%%%%%%%%%%%%%%%%%%%%%%%%%%%
\section{Clustered Graphs}\label{sec:apx-wc}
%\paragraph{Lemma~\ref{lem:metastable}.}

\begin{lemma}\label{lem:metastable} 
Assume we run the \avg dynamics in a $(2n, d, b)$-clustered regular graph 
$G$ (see Definition~\ref{def:clusteredregular}) with any initial vector $\mathbf{x} \in \{ -1,1\}^{2n}$.
If $\lambda < 1 - 2b/d$ then there are reals $\alpha_1,\alpha_2$ such that 
at every round $t$ we have 
\[
\mathbf{x}^{(t)} = \alpha_1 \bone + \alpha_2 \lambda_2^t \bchi + \be^{(t)} \ \ 
 \mbox{ where } \  \left\| \be^{(t)} \right\|_\infty \leq \lambda^t \sqrt {2n} \, . \]
\end{lemma}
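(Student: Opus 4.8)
The plan is to diagonalize $P$ and read the claimed decomposition directly off the spectral expansion of $P^t\bx$. Since $G$ is $d$-regular, $P=(1/d)A$ is real symmetric, so I can fix an orthonormal eigenbasis $\bv_1,\dots,\bv_{2n}$ with $P\bv_i=\lambda_i\bv_i$ and $\lambda_1\geq\cdots\geq\lambda_{2n}$. First I would pin down the top of the spectrum using the hypothesis. By Observation~\ref{apx-obs:secondeig}, $\bchi$ is an eigenvector of eigenvalue $1-2b/d$, while $\bone$ is an eigenvector of eigenvalue $1$; moreover $\langle\bone,\bchi\rangle=|V_1|-|V_2|=0$, so these two eigenvectors are orthogonal. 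The assumption $\lambda<1-2b/d$ forces $1-2b/d$ to be a simple eigenvalue sitting strictly above all of $|\lambda_3|,\dots,|\lambda_{2n}|$, hence $\lambda_2=1-2b/d$ and I may take $\bv_1=\bone/\sqrt{2n}$ and $\bv_2=\pm\bchi/\sqrt{2n}$ (the factor $\sqrt{2n}$ normalizes since $\bone,\bchi\in\{-1,1\}^{2n}$, up to the usual sign ambiguity in $\bv_2$).

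Next I would expand the initial vector as $\bx=\sum_i c_i\bv_i$ with $c_i=\langle\bx,\bv_i\rangle$, so that $\bx^{(t)}=P^t\bx=\sum_i\lambda_i^t c_i\bv_i$. Splitting off the first two terms and using $\lambda_1=1$,
\[
\bx^{(t)}=c_1\bv_1+\lambda_2^t c_2\bv_2+\sum_{i=3}^{2n}\lambda_i^t c_i\bv_i .
\]
Substituting $\bv_1=\bone/\sqrt{2n}$ and $\bv_2=\pm\bchi/\sqrt{2n}$ and setting $\alpha_1:=c_1/\sqrt{2n}$ and $\alpha_2:=\pm c_2/\sqrt{2n}$ rewrites the leading two terms as $\alpha_1\bone+\alpha_2\lambda_2^t\bchi$, exactly the claimed principal part; the tail $\be^{(t)}:=\sum_{i\geq 3}\lambda_i^t c_i\bv_i$ is the error. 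The two sign choices cancel, so in fact $\alpha_1=\langle\bx,\bone\rangle/(2n)$ and $\alpha_2=\langle\bx,\bchi\rangle/(2n)$, matching the expressions used later in the proof of Theorem~\ref{thm:reg_stop}.

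It remains to bound $\|\be^{(t)}\|_\infty$, the only quantitative step. Using $\|\cdot\|_\infty\leq\|\cdot\|_2$ together with orthonormality of the $\bv_i$ (Parseval),
\[
\|\be^{(t)}\|_\infty^2\leq\|\be^{(t)}\|_2^2=\sum_{i=3}^{2n}\lambda_i^{2t}c_i^2\leq\lambda^{2t}\sum_{i=3}^{2n}c_i^2\leq\lambda^{2t}\sum_{i=1}^{2n}c_i^2=\lambda^{2t}\|\bx\|^2,
\]
where the middle inequality uses $|\lambda_i|\leq\lambda$ for $i\geq 3$. Since $\bx\in\{-1,1\}^{2n}$ gives $\|\bx\|^2=2n$, this yields $\|\be^{(t)}\|_\infty\leq\lambda^t\sqrt{2n}$, which closes the argument.

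I do not expect a genuine obstacle here: the whole computation is essentially a verbatim specialization of the general spectral expansion~\eqref{eq:decompreg}. The only point needing a line of care is the spectral identification in the first paragraph, namely that the gap hypothesis $\lambda<1-2b/d$ guarantees $\lambda_2=1-2b/d$ with a one-dimensional eigenspace spanned by $\bchi$, so that the leading two terms can be isolated cleanly; everything else is bookkeeping plus the single application of Parseval.
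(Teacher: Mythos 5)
Your proof is correct and follows essentially the same route as the paper's: expand $\bx$ in the orthonormal eigenbasis with $\bv_1=\bone/\sqrt{2n}$, $\bv_2=\bchi/\sqrt{2n}$, isolate the first two terms, and bound the tail in $\ell_\infty$ via the $\ell_2$ norm, Parseval, and $\|\bx\|=\sqrt{2n}$. The only difference is that you spell out why the gap hypothesis $\lambda<1-2b/d$ forces $\lambda_2=1-2b/d$ with eigenspace spanned by $\bchi$, a point the paper asserts without comment.
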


\begin{proof}
Since $\bx^{(t)} = P^t \bx$ we can write
\[ 
P^t \bx = \sum_i \lambda_i^t \langle \bx, \bv_i \rangle\bv_i, 
\]
where $1 = \lambda_1 > \lambda_2 = 1 - 2b/d > \lambda_3 \geq \cdots \geq \lambda_{2n}$ 
are the eigenvalues of $P$ and $\bv_1 = \frac 1 {\sqrt {2n}} \bone$, 
$\bv_2 = \frac 1{ \sqrt {2n}} \bchi$, $\bv_3$, \ldots, $\bv_{2n}$ are a corresponding 
sequence of orthonormal eigenvectors. 
Hence, 
\begin{align} 
\mathbf{x}^{(t)} & = \frac 1 {2n} \langle \bx, \bone \rangle \cdot \bone 
+ \lambda_2^t \frac 1{2n} \langle \bx, \bchi \rangle \cdot \bchi 
+ \sum_{i=3}^{2n} \lambda_i^t \alpha_i \bv_i \\
& = \alpha_1 \bone + \alpha_2 \lambda_2^t \cdot \bchi 
+ \sum_{i=3}^{2n} \lambda_i^t \alpha_i \bv_i,
\end{align} 
where we set $\alpha_1 = \frac{1}{2n} \langle \bone,\bx \rangle$ and 
$\alpha_2 = \frac{1}{2n} \langle \bchi , \bx \rangle$. 
We bound the $\ell_\infty$ norm of the last term as
\[
    \left\| \sum_{i=3}^{2n} \lambda_i^t \alpha_i \bv_i \right\|_\infty 
    \leq \left\| \sum_{i=3}^{2n} \lambda_i^t \alpha_i \bv_i  \right\|_2 
    = \sqrt{\sum_{i=3}^{2n} \lambda_i^{2t} \alpha_i^2} \leq \lambda^t \sqrt{\sum_{i=1}^{2n} \alpha_i^2} 
    = \lambda^t \| \bx\| = \lambda^t \sqrt {2n}.
\]
\end{proof}

%%%%%%%%%%%%%%%%%%%%%%%%%%%%%%%%%%%%
\begin{lemma}\label{lemma:sqrtdchiapprox}
Let $G$ be a connected $(2n, d, b, \gamma)$-clustered graph (see 
Definition~\ref{def:epsreggraph}) with $\gamma \leqslant 1/10$. If 
$\lambda_3 < \nu$ 
then
\[
\lambda_2 \geq \nu - 10\gamma 
\qquad 
\mbox{ and }
\qquad
\left\| D^{1/2} \bchi - \beta_2 \bw_2 \right\| 
\leqslant \frac{44\, \gamma}{\nu - \lambda_3}\sqrt{2 nd},
\]
where $\beta_2 = \bchi^\intercal D^{1/2} \bw_2$.
\end{lemma}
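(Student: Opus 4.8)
The plan is to show that $u := D^{1/2}\bchi$ is an \emph{approximate} eigenvector of the symmetric normalized adjacency matrix $N = D^{-1/2}AD^{-1/2}$ with eigenvalue $\nu = 1 - 2b/d$, and then to extract both claims from the spectral decomposition of $u$ in the orthonormal eigenbasis $\bw_1,\dots,\bw_{2n}$ of $N$.

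First I would compute $Nu = D^{-1/2}A\bchi$ coordinatewise. For a node $i$ with $b_i$ neighbours across the cut one has $(A\bchi)_i = (d_i - 2b_i)\bchi(i)$, so $(Nu)_i = d_i^{-1/2}(d_i - 2b_i)\bchi(i)$, while $(\nu u)_i = \nu d_i^{1/2}\bchi(i)$. Subtracting and simplifying gives $(Nu - \nu u)_i = 2 d_i^{-1/2}\left((b/d)d_i - b_i\right)\bchi(i)$. The $\gamma$-regularity hypothesis ($|d_i-d|\le\gamma d$ and $|b_i-b|\le\gamma d$, from Definition~\ref{def:epsreggraph}) bounds $|(b/d)d_i - b_i|\le 2\gamma d$, and $\gamma\le 1/10$ gives $d_i^{-1/2}\le\sqrt{10/9}\,d^{-1/2}$; summing the squares over the $2n$ coordinates yields a clean estimate of the form $\|Nu - \nu u\|\le C\gamma\sqrt{2nd}$ for an absolute constant $C$ (here $C\approx 4.3$). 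This ``approximate eigenvector'' fact is the technical heart of the argument; everything else is spectral bookkeeping.

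Next I would expand $u = \sum_i c_i\bw_i$ with $c_i = \langle u,\bw_i\rangle$, recording two facts: (a) $\|u\|^2 = \sum_i d_i \ge 2nd(1-\gamma)$, and (b) the component along $\bw_1 = D^{1/2}\bone/\|D^{1/2}\bone\|$ is small, since $c_1 = \frac{\sum_i d_i\bchi(i)}{\|D^{1/2}\bone\|}$ and $|\sum_i d_i\bchi(i)| = |\sum_i (d_i-d)\bchi(i)|\le 2n\gamma d$, which gives $c_1^2 = \bigO(\gamma^2\cdot 2nd)$. Crucially, this forces almost all of the mass of $u$ onto $\bw_2,\dots,\bw_{2n}$.

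For claim (i) I argue by contradiction. Since $\|Nu - \nu u\|^2 = \sum_i(\lambda_i - \nu)^2 c_i^2$, if $\lambda_2 < \nu - 10\gamma$ then $(\lambda_i-\nu)^2 > 100\gamma^2$ for every $i\ge 2$, so $\|Nu-\nu u\|^2 > 100\gamma^2\sum_{i\ge2}c_i^2$. Because $c_1$ is small, $\sum_{i\ge 2}c_i^2$ is a constant fraction (about $0.89$) of $2nd$, so the right-hand side exceeds the upper bound $C^2\gamma^2\cdot 2nd$ of the first step, a contradiction; hence $\lambda_2\ge\nu-10\gamma$. For claim (ii), note that $\beta_2 = \langle u,\bw_2\rangle = c_2$, so $D^{1/2}\bchi - \beta_2\bw_2 = c_1\bw_1 + \sum_{i\ge3}c_i\bw_i$. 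Using $(\lambda_i-\nu)^2\ge(\nu-\lambda_3)^2$ for $i\ge3$ (valid since $\lambda_3<\nu$) gives $\sum_{i\ge3}c_i^2 \le \|Nu-\nu u\|^2/(\nu-\lambda_3)^2$, and folding the $c_1^2$ term into the same denominator via the crude bound $(\nu-\lambda_3)^2\le 4$ yields $\|D^{1/2}\bchi-\beta_2\bw_2\|\le \frac{44\gamma}{\nu-\lambda_3}\sqrt{2nd}$. The only real subtlety is keeping the degree-perturbation constants tight enough that the contradiction in claim (i) actually closes; there is no deeper obstacle.
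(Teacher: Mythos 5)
Your argument is correct and follows essentially the same route as the paper's proof: both hinge on showing $\|N D^{1/2}\bchi - \nu D^{1/2}\bchi\| = \bigO(\gamma\sqrt{2nd})$ via the coordinatewise regularity bounds, controlling the component along $\bw_1$ through $|\bone^\intercal D\bchi|\le 2nd\gamma$, and then reading off both claims from the spectral decomposition and the gap $\nu-\lambda_3$. The only differences are cosmetic bookkeeping (you keep the full vector and peel off the $c_1$ term at the end, whereas the paper first projects onto the orthogonal complement of $\bw_1$ and works with that residual; your claim (i) is phrased as a contradiction rather than directly), and your constants close with room to spare.
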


\begin{proof}
For every node $v$, let us name $a_v$ and $b_v$ the numbers of neighbors of $v$
in its own cluster and in the other cluster, respectively, and 
$d_v = a_v + b_v$ its degree. 
Since from the definition of $(2n, d, b, \gamma)$-clustered graph it holds
that $(1-\gamma) d \leqslant d_v \leqslant (1 + \gamma) d$ and
$b - \gamma d \leqslant b_v \leqslant b + \gamma d$, it is easy to 
check that
\[
| a_v - b_v - \nu d_v | \leqslant 4 d \, \gamma
\]
for any node $v$. Hence, 
\begin{align}
    \left\| A \bchi - \nu D \bchi \right\|^2 
    & = \sum_{v \in [2n]} \left(\sum_{w \in \text{Neigh}(v)} \bchi(w) - 
    \nu d_v \bchi(v) \right)^2 \\
    & = \sum_{v \in [2n]} \left(a_v \bchi(v) - b_v \bchi(v) 
    - \nu d_v \bchi(v) \right)^2 \\
    & = \sum_{v \in [2n]} \left( a_v - b_v - \nu d_v \right)^2
    \leqslant 32 n d^2 \gamma^2. 
\end{align}
Thus,
\begin{align}\label{eq:sqrtdchibound}
    \left\| N D^{1/2} \bchi - \nu D^{1/2} \bchi \right\|
    & = \left\| D^{-1/2} A \bchi - \nu D^{1/2} \bchi \right\|
    = \left\| D^{-1/2} \left( A \bchi - \nu D \bchi \right) \right\| 
    \nonumber \\ 
    & \leqslant \left\| D^{-1/2} \right\| \cdot \left\| A \bchi - \nu D \bchi \right\|
    \leqslant \frac{2}{\sqrt{d}} \cdot \sqrt{2n} 4 d \, \gamma 
    = 8 \sqrt{2nd} \, \gamma.
\end{align}

\noindent
Observe that $\bw_1$ is parallel to $D^{1/2} \bone$ and we have that
\begin{equation}\label{eq:parallelcompbound}
    \left| \bone^\intercal D \bchi \right|
    = \left| \sum_{v \in [2n]} \bchi(v) d_v \right|
    \leqslant (1+\gamma) d n - (1-\gamma) d n
    = 2 n d \, \gamma.
\end{equation}
Hence, if we name $\by$ the component of $D^{1/2} \bchi$ orthogonal to the first
eigenvector, we can write it as
\begin{equation}\label{eq:ortdecompsqrtdchi}
    D^{1/2} \bchi = \frac{\bone^\intercal D \bchi}{\| D^{1/2}\bone \|^2} D^{1/2}\bone
    + \by.
\end{equation}
Thus,
\begin{align}\label{eq:ubdistance}
    \| N \by - \nu \by \| 
    & = \left\| N \left(
    D^{1/2}\bchi - \frac{\bone^\intercal D \bchi}{\| D^{1/2}\bone \|^2} D^{1/2}\bone 
    \right) 
    - \nu \left(
    D^{1/2}\bchi - \frac{\bone^\intercal D \bchi}{\| D^{1/2}\bone \|^2} D^{1/2}\bone 
    \right)
    \right\| \nonumber \\
    & \leqslant \left\| N D^{1/2} \bchi - \nu D^{1/2} \bchi \right\|
    + \frac{\left|\bone^\intercal D \bchi\right|}{\| D^{1/2}\bone \|^2} \,
    \left\|N D^{1/2} \bone - \nu D^{1/2} \bone \right\| \nonumber \\
    & = \left\| N D^{1/2} \bchi - \nu D^{1/2} \bchi \right\|
    + \frac{\left|\bone^\intercal D \bchi\right|}{\| D^{1/2}\bone \|} \,
    \frac{2b}{d} \nonumber\\
    & \leqslant 8 \sqrt{2nd} \, \gamma + 4 \sqrt{2nd} \, \gamma,
\end{align}
where in the last inequality we used~\eqref{eq:sqrtdchibound} 
and~\eqref{eq:parallelcompbound} and the facts that $b \leqslant d/2$ and 
$\left\| D^{1/2}\bone \right\| \geqslant (1/2) \sqrt{2nd}$. 
From~\eqref{eq:ortdecompsqrtdchi} it follows that
\begin{equation}\label{eq:lbnormy}
    \|\by\| \geqslant \left\| D^{1/2} \bchi \right\| 
    - \frac{\bone^\intercal D \bchi}{\left\|D^{1/2} \bone \right\|}
    \geqslant (1 - \gamma) \sqrt{2nd} - 4 \gamma \sqrt{2nd}
    = (1 - 5\gamma) \sqrt{2nd}
    \geqslant (1/2) \sqrt{2nd}.
\end{equation}

\noindent
Now, let us we write $\by$ as a linear combination of the orthonormal
eigenvectors of $N$, $\by = \beta_2 \bw_2 + \cdots + \beta_n \bw_n$ (recall
that $\by^{\intercal}\bw_1=0$ by definition of $\by$ in \eqref{eq:ortdecompsqrtdchi}). 
From \eqref{eq:ubdistance} and \eqref{eq:lbnormy}, it follows that 
\begin{equation}\label{eq:diffnormsquare}
100\gamma^2 \| \by \|^2 \geqslant \left\| N \by - \nu \by \right\|^2
= \left\| \sum_{i = 2}^{n} (\lambda_i - \nu) \beta_i \bw_i \right\|^2
= \sum_{i=2}^n (\lambda_i - \nu)^2 \beta_i^2.
\end{equation}
Moreover, from hypothesis $\lambda_3 < \nu$  we have that
\begin{equation}
    \sum_{i=2}^n (\lambda_i - \nu)^2 \beta_i^2 
    \geqslant \sum_{i=3}^n (\lambda_i - \nu)^2 \beta_i^2
    \geqslant (\lambda_3 - \nu)^2 \sum_{i=3}^n \beta_i^2
    = (\lambda_3 - \nu)^2 \| \by - \beta_2 \bw_2 \|^2.
    \label{eq:part2_first_thesis}
\end{equation}
Thus, by combining together \eqref{eq:diffnormsquare} and \eqref{eq:part2_first_thesis} we get
\begin{equation}
    \| \by - \beta_2 \bw_2 \| \leqslant \frac{10 \, \gamma}{\nu - \lambda_3}\|\by\|
    \label{eq:first_thesis}
\end{equation}
where $\beta_2 = \by^\intercal \bw_2 = \left( D^{1/2}\bchi \right)^\intercal \bw_2$.

\noindent
As for the first thesis of the lemma, observe that if $\lambda_2\geq \nu$ then the first
thesis is obvious. Otherwise, if $\lambda_2 < \nu$, then $(\lambda_2 - \nu)^2 \leqslant
(\lambda_3 - \nu)^2 \leqslant \cdots \leqslant (\lambda_n - \nu)^2$.
Thus, the first thesis follows from \eqref{eq:diffnormsquare} and the fact that
\[
    \sum_{i=2}^n (\lambda_i - \nu)^2 \beta_i^2 
    \geqslant (\lambda_2 - \nu)^2 \sum_{i=2}^n \beta_i^2
    = (\lambda_2 - \nu)^2 \|\by\|^2.
\]

\noindent
As for the second thesis of the lemma, we have 
\begin{align*}
\left\| D^{1/2} \bchi - \beta_2 \bw_2 \right\| 
& = \left\| 
\frac{\bone^\intercal D \bchi}{\| D^{1/2}\bone \|^2} D^{1/2}\bone 
+ \by - \beta_2 \bw_2 
\right\| \\
& \leqslant \frac{\left|\bone^\intercal D \bchi\right|}{\| D^{1/2}\bone \|}
+ \| \by - \beta_2 \bw_2 \|  
\leqslant 4 \,\gamma \sqrt{2nd} \, + \frac{10 \, \gamma}{\nu - \lambda_3}\|\by\|\\
& \leqslant 4 \,\gamma \sqrt{2nd} \, 
+ \frac{20 \, \gamma}{\nu - \lambda_3}{\sqrt{2 nd}} 
\leqslant \frac{44 \, \gamma}{\nu - \lambda_3}{\sqrt{2 nd}},
\end{align*}
where in the last inequality we used that $\by$ is the projection of $D^{\frac 12}\bchi$ on
$D^{\frac 12}\bone$, and thus $\|\by\|\leq \|D^{\frac 12}\bchi\| \leq 2\sqrt{2nd}$.
\end{proof}

\subsection{Proof of Theorem 4.3}
\label{apx-theoremweak}

From Lemma~\ref{lem:clustered} it follows that for every node $u$ at any 
round $t$ we have
\[
    \bx^{(t-1)}(u) - \bx^{(t)}(u) 
    = \alpha_2 \lambda_2^{t-1} (1 - \lambda_2) \left(\bchi(u) + \bz(u)\right)
    + \be^{(t-1)}(u) - \be^{(t)}(u).
\]
Hence, for every node $u$ such that $|\bz(u)| < 1/2$ (we choose $1/2$ here 
for readability sake, however any other constant smaller than $1$ works as well)
it holds that 
$\texttt{sgn}\left( \bx^{(t-1)}(u) - \bx^{(t)}(u) \right) 
= \texttt{sgn}\left(\alpha_2 \bchi(u) \right)$ whenever
\begin{equation}
    \label{eq:clustcondnonreg-apx}
    \left| \frac{1}{2} \alpha_2 \lambda_2^{t-1} (1 - \lambda_2) \right| > 
    \left| \be^{(t-1)}(u) - \be^{(t)}(u) \right|.
\end{equation}
From Lemma~\ref{lem:clustered} we have that 
$\left| \be^{(t)}(u) \right| \leqslant 4 \lambda^t \sqrt{2n}$, 
thus~\eqref{eq:clustcondnonreg-apx} is satisfied for all
\begin{equation}
    t - 1 \geqslant 
    \frac{\log \left(\frac{16 \sqrt{2n}}{|\alpha_2| (1 - \lambda_2)}
    \right)}{\log\left(\lambda_2 / \lambda \right)}.
    \label{eq:time-apx}
\end{equation}
The right-hand side in the above formula is $\mathcal{O}(\log n)$ 
w.h.p., because of the following three points:
\begin{itemize}
    \item From Cheeger's inequality (see e.g. \cite{chung96}) and the fact
        that the graph is connected it follows that $1 - \lambda_2
        \geqslant 1/(2n^4)$;
    \item $\lambda_2 \geqslant (1+\delta)\lambda$ by hypothesis;
    \item It holds $|\alpha_2| \geqslant n^{-c}$ for some large enough
        positive constant $c$ w.h.p., as a consequence of the following
        equations that we prove below:
    \begin{align}
        \Prob{}{ \left| \alpha_2 \right| \leq \frac{1}{n^{c}}}
        = \Prob{}{  \frac{\left|\bw_2^{\intercal}D^{\frac 12}\bx \right|}
            {\left|\bw_2^{\intercal}D^{\frac 12} \bchi\right|}  \leq \frac{1}{n^{c}}}
            \leq \Prob{}{ \left| \bw_2^\intercal D^{1/2} \bx \right| \leq \frac{2 \sqrt{d}}{n^{c-1/2}} }
        \leq \bigO \left( \frac 1 {\sqrt n} \right).
        \label{eq:probability_alpha2}
    \end{align}
        In the first equality of \eqref{eq:probability_alpha2} we used that, by
        definition, $|\alpha_2| = |\bw_2^{\intercal}D^{\frac 12}\bx| /
        |\bw_2^{\intercal}D^{\frac 12} \bchi|$. In the first inequality we used
        that, by the Cauchy-Schwarz inequality, $|\bw_2^{\intercal}D^{\frac 12}
        \bchi| \leq \|D^{\frac 12} \bchi\| \leq 2\sqrt{dn}$. In order to prove the
        last inequality of \eqref{eq:probability_alpha2}, we use that from Lemma
        \ref{lemma:sqrtdchiapprox} it holds
        \begin{equation}
            \left\| D^{1/2} \bchi - \beta_2 \bw_2 \right\|^2 =  
            \left\| D^{1/2} \bchi  \right\|^2 + \left\| \beta_2 \bw_2 \right\|^2 
                - 2\langle D^{1/2} \bchi , \beta_2 \bw_2 \rangle 
            \leqslant 2 \frac{44^2  \, \gamma^2}{(\nu - \lambda_3)^2}\,{nd},
            \label{eq:chi-D_proj}
        \end{equation}
        that is 
        \begin{equation}
            \langle D^{1/2} \bchi , \beta_2 \bw_2 \rangle = 
            \langle D^{1/2} \bchi , \bw_2 \rangle^2 \geq 
            \frac 12 \left( \left\| D^{1/2} \bchi  \right\|^2  
            - 2 \frac{44^2  \, \gamma^2}{(\nu - \lambda_3)^2}\,{nd} \right) \geq \frac{nd}3.
            \label{eq:chi-D_proj2}
        \end{equation}
        Since $\bw_2$ is normalized the absolute value of its entries is at most
        $1$, which toghether with \eqref{eq:chi-D_proj2} implies that at least a
        fraction 12/13 of its entries have an absolute value greater than 1/12.
        Thus, we can apply Lemma \ref{lemma:apxrdm} and prove the last inequality
        of \eqref{eq:probability_alpha2} and, consequently, the fact that
        \eqref{eq:time-apx} is $\bigO(\log n)$.
\end{itemize}

Finally, from Lemma~\ref{lem:clustered} we have 
\[
    \|\bz \| \leqslant \frac{88\, \gamma}{\nu - \lambda_3}\sqrt{2 n} \,.
\]
Thus the number of nodes $u$ with $\bz(u) \geqslant 1/2$ is $\bigO(n {\gamma^2 }/(\nu - \lambda_3)^2)$.

\section{Stochastic Block Models}\label{sec:apx-random}

\subsection{Regular stochastic block model}

\begin{lemma} \label{ov:lm:regulsbm}
Let $G$ be  a graph sampled from the regular stochastic block model with 
internal and external degrees $a$ and $b$ respectively. W.h.p., it holds that 
\begin{equation} %\label{eq:regsbm}
\lambda \leqslant \frac{2}{a+b} (\sqrt {a+b-1} + o_n(1))
\end{equation}
\end{lemma}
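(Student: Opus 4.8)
The plan is to realize a graph $G$ drawn from the regular stochastic block model as a random degree-$n$ lift of a fixed two-vertex base multigraph, and then invoke the Friedman--Kohler bound on the ``new'' eigenvalues of random lifts. First I would set up the base graph $H$ on two vertices $\{1,2\}$, identifying the two fibers of the lift with $V_1$ and $V_2$: introduce $b$ parallel edges joining $1$ and $2$, each lifting to an independent uniformly random perfect matching between $V_1$ and $V_2$, and introduce $a$ self-loops at each of $1$ and $2$, each lifting to an independent uniformly random perfect matching inside the corresponding fiber. With the convention that a self-loop lifts to a matching (hence contributes $1$ to the degree of each incident vertex), $H$ is $d$-regular with $d=a+b$, and its random $n$-lift has exactly the law of $G$ in the random-matching instantiation of the model; the remaining instantiations are contiguous and give the same spectral conclusion.

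Next I would locate the ``old'' eigenvalues coming from the base graph. Its transition matrix is $P_H=\frac1d\begin{pmatrix} a & b \\ b & a\end{pmatrix}$, with eigenvalues $1$ (eigenvector $(1,1)$) and $(a-b)/d=1-2b/d$ (eigenvector $(1,-1)$). Under the lift these pull back to eigenvalues of the lifted transition matrix $P$, on the vectors $\bone$ and $\bchi=\bone_{V_1}-\bone_{V_2}$ respectively, and they are precisely $\lambda_1=1$ and $\lambda_2=(a-b)/d$. The remaining $2n-2$ eigenvalues are the new eigenvalues introduced by the random lift, so that $\lambda=\max\{|\lambda_3|,\dots,|\lambda_{2n}|\}$ is exactly the largest of these in absolute value.

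Finally I would bound the new eigenvalues. Since $H$ is $d$-regular, its universal cover is the infinite $d$-regular tree, whose spectral radius is $2\sqrt{d-1}$. The Friedman--Kohler theorem on random lifts then gives that, w.h.p., every new eigenvalue of the \emph{adjacency} matrix of $G$ has absolute value at most $2\sqrt{d-1}+o_n(1)$. Dividing by $d$ to pass to the transition matrix $P=\frac1d A$ yields $\lambda\le \frac{2\sqrt{d-1}+o_n(1)}{d}=\frac{2}{a+b}\bigl(\sqrt{a+b-1}+o_n(1)\bigr)$, which is exactly the claimed bound (note $a+b-1=d-1$).

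The hard part will be the careful matching between the random-lift model to which Friedman--Kohler applies and the random-matching construction of the regular stochastic block model: in particular, ensuring that the self-loops of $H$ lift to \emph{matchings} rather than to general permutations (which would double the internal degree and change the local geometry), and verifying that the precise hypotheses and the ``new eigenvalue'' bookkeeping of the Friedman--Kohler theorem are satisfied for a base graph carrying loops and multi-edges, so that the relevant spectral radius is indeed $2\sqrt{d-1}$. A secondary technical point is to check that the $o_n(1)$ error in the lift bound is uniform enough to yield a genuine high-probability statement and that the conclusion transfers across the contiguous instantiations of the model.
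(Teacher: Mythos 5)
Your proposal follows essentially the same route as the paper: realize the regular stochastic block model as a random degree-$n$ lift of a two-vertex base multigraph, identify the two ``old'' eigenvalues $1$ and $(a-b)/d$ with eigenvectors $\bone$ and $\bchi$, and bound the $2n-2$ new eigenvalues by $2\sqrt{d-1}+o_n(1)$ via Friedman--Kohler (as simplified by Bordenave), then divide by $d$.

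The one substantive divergence is your handling of the self-loops, and it is exactly the point you flag as ``the hard part'' without resolving. You put $a$ self-loops at each base vertex and decree that each lifts to a perfect matching inside the fiber. That is not the random lift model to which the Friedman--Kohler theorem applies: in the standard model a self-loop at $u$ lifts to the degree-$2$ graph obtained from a random permutation $\pi$ by joining $u_i$ to $u_{\pi(i)}$, and correspondingly a self-loop contributes $2$ to the degree of the base vertex. With your $a$ self-loops the base graph is $(2a+b)$-regular under the standard convention, so the universal cover and hence the spectral radius in the new-eigenvalue bound would be wrong, and with your matching convention the theorem simply is not the one being cited. The paper dissolves the difficulty rather than confronting it: it takes $a/2$ self-loops at each base vertex, each lifting in the standard permutation way and contributing internal degree $2$, so the base graph is genuinely $(a+b)$-regular in the convention Friedman--Kohler uses and the bound $2\sqrt{a+b-1}+o_n(1)$ applies verbatim. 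You should adopt that choice (accepting the harmless parity requirement that $a$ be even) instead of trying to re-prove the lift theorem for a nonstandard self-loop convention; the contiguity remark about other instantiations of the model is likewise unnecessary, since the lemma is only ever applied to the lift instantiation.
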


\begin{proof}
The lemma follows from the general results of Friedman and Kohler \cite{FK14}, 
recently simplified by Bordenave \cite{B15}. If $G$ is a multigraph on $n$ 
vertices, then a {\em random degree $k$ lift} of $G$ is a distribution over 
graphs $G'$ on $kn$ vertices sampled as follows: every vertex $v$ of $G$ is 
replaced by $k$ vertices $v_1,\ldots,v_k$ in $G'$,  every edge $(u,v)$ in $G$ 
is replaced by a random bipartite matching between $u_1,\ldots,u_k$ and 
$v_1,\ldots,v_k$ (if there are multiple edges, each edge is replaced by an 
independently sampled matching) and every self loop over $u$ is replaced by a 
random degree-2 graph over $u_1,\ldots,u_k$ which is sampled by taking a 
random permutation $\pi : \{1,\ldots,k\} \rightarrow  \{1,\ldots,k\}$ and 
connecting $u_i$ to $u_\pi(i)$ for every $i$.

For every lift of any $d$-regular graph, the lifted graph is still $d$-regular, 
and every eigenvalue of the adjacency matrix of the  base graph is still an 
eigenvalue of the lifted graph.
Friedman and Kohler \cite{FK14} prove that, if $d\geq 3$, then with probability 
$1-\bigO(1/k)$ over the choice of a random lift of degree $k$, the new eigenvalues 
of the adjacency matrix of the lifted graph are at most $2\sqrt {d-1} + o_k(1)$ 
in absolute value. Bordenave \cite[Corollary 20]{B15} has considerably 
simplified the proof of Friedman and Kohler; although he does not explicitly state 
the probability of the above event, his argument also 
bound the failure probability by $1/k^{\Omega(1)}$ \cite{Bmail}.

The lemma now follows by observing that the regular stochastic block model is 
a random lift of degree $n$ of the graph that has only two vertices $v_1$ and 
$v_2$, it has $b$ parallel edges between $v_1$ and $v_2$, and it has $a/2$ 
self-loops on $v_1$ and $a/2$ self-loops on $v_2$.

\end{proof}

\subsection{Proof of Lemma~\ref{lemma:irreg}}
\label{ssec:thm_irreg}

    \begin{lemma}
        \label{lemma:irregone} 
        If $a(n),b(n)$ are such that $d:= a+b > \log n$, then w.h.p. (over the
        choice of $G\sim \mathcal{G}_{2n,\frac a{n},\frac b{n}}$), if we let
        $A$ be the adjacency matrix of $G$, then $\|A - B\| \leq \bigO( \sqrt
        {d \log n})$ w.h.p.
    \end{lemma}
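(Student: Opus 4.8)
The plan is to express $A-B$ as a sum of independent, mean-zero symmetric random matrices and apply the Matrix Bernstein Inequality (Theorem~\ref{thm:matrixBer}). For each unordered pair $\{i,j\}$ with $i<j$, set $X_{ij} := (A_{ij}-B_{ij})(\be_i\be_j^\intercal + \be_j\be_i^\intercal)$, where $A_{ij}\in\{0,1\}$ indicates the presence of edge $\{i,j\}$ and $B_{ij}=\E{A_{ij}}$ equals $p$ or $q$ according to whether $i,j$ lie in the same block. Since edges are sampled independently, the $X_{ij}$ are independent symmetric matrices with $\E{X_{ij}}=\bzero$, and up to the diagonal of $A-B$ (the missing self-loops, a correction of norm $O(p)=O(d/n)$, negligible here) we have $A-B=\sum_{i<j}X_{ij}$.

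Next I would estimate the two parameters entering the inequality. For the uniform bound, $\|\be_i\be_j^\intercal+\be_j\be_i^\intercal\|=1$ and $|A_{ij}-B_{ij}|\le 1$, so $\|X_{ij}\|\le 1=:L$ surely. For the variance, note that $(\be_i\be_j^\intercal+\be_j\be_i^\intercal)^2=\be_i\be_i^\intercal+\be_j\be_j^\intercal$, hence $X_{ij}^2=(A_{ij}-B_{ij})^2(\be_i\be_i^\intercal+\be_j\be_j^\intercal)$ is diagonal. Taking expectations and summing, $\E{\sum_{i<j}X_{ij}^2}$ is diagonal with $i$-th entry $\sum_{j\ne i}\mathrm{Var}(A_{ij})\le\sum_{j\ne i}B_{ij}\le a+b=d$, the expected degree of vertex $i$. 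Therefore $\sigma:=\|\E{\sum_{i<j}X_{ij}^2}\|\le d$.

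Finally, applying Theorem~\ref{thm:matrixBer} with $t=C\sqrt{d\log n}$ gives
\[
\pr{}{\|A-B\|\ge C\sqrt{d\log n}}\le 2n\exp\!\left(\frac{-C^2 d\log n}{2d+\tfrac23 C\sqrt{d\log n}}\right).
\]
The hypothesis $d>\log n$ is exactly what keeps the analysis in the variance-dominated regime: since then $\sqrt{d\log n}\le d$, the linear term $\tfrac23 L t\le\tfrac23 Cd$ is of the same order as $2\sigma\le 2d$, so the denominator is $O(d)$ and the exponent is $-\Omega(C^2\log n)$. Choosing $C$ a large enough constant makes the right-hand side $2n^{1-\Omega(C^2)}=o(1)$, which yields $\|A-B\|=\bigO(\sqrt{d\log n})$ w.h.p.

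I expect the only genuinely delicate point to be the variance estimate: one must check that the row sums of the Bernoulli variances are controlled by the expected degree $d$ rather than by $n$, and that the crossover between the variance term $\sigma$ and the spread term $Lt$ in Bernstein's bound is handled correctly. It is precisely the assumption $d>\log n$ that guarantees $\sigma$ dominates, so the final bound scales like $\sqrt{d\log n}$ instead of $\log n$; for the sparser regime $d=\Theta(\log n)$ a sharper input (e.g.\ the Le--Vershynin concentration result cited later) would be needed, but for $d>\log n$ matrix Bernstein suffices.
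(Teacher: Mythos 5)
Your proof is correct and follows essentially the same route as the paper: decompose $A-B$ into independent, mean-zero, edge-indexed symmetric matrices supported on the $(i,j)$ and $(j,i)$ entries, bound $L\leq 1$ and $\sigma\leq d$ via the row sums of the Bernoulli variances, and apply the Matrix Bernstein Inequality with $t=\Theta(\sqrt{d\log n})$, using $d>\log n$ to keep the exponent at $-\Omega(\log n)$. Your explicit remark about the (negligible) diagonal correction is a small point the paper elides, but otherwise the two arguments coincide.
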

    \begin{proof}
        We can write $A-B$ as $\sum_{\{ i,j\}} X^{\{i,j\}}$, where the matrix 
        $X^{\{i,j\}}$ is zero in all coordinates except $(i,j)$ and $(j,i)$, and, 
        in those coordinates, it is equal to $A-B$. Then we see that the matrices 
        $X^{\{ i,j\}}$ are independent, that $\E { X^{\{ i,j\}}} = 0$, that 
        $\| X^{\{ i,j\}}\| \leq 1$, because every row contains at most one non-zero 
        element, and that element is at most 1 in absolute value, and that 
        $\mathbf{E} [ \sum_{\{ i,j \}} ( X^{\{ i,j\}})^2 ]$ is the matrix that is zero 
        everywhere except for the diagonal entries $(i,i)$ and $(j,j)$, in which we
        have $B_{i,i} - B_{i,i}^2$ and $B_{j,j} - B_{j,j}^2$ respectively. It follows
        that 
        \[
            \| \mathbf{E}[ \sum_{\{ i,j \}} ( X^{\{ i,j\}})^2]\| \leq d.
        \] 
        Putting these facts together, and applying the Matrix Bernstein Inequality (see
        Theorem~\ref{thm:matrixBer} in Appendix~\ref{apx:la}) with $t= \sqrt{6 d\log
        n}$, we have
        \[ 
            \pr{} { \| A-B\| \geq \sqrt{9 d \log n} } \leq 2n e^{ -\frac {9d
            \log n}{2d + \frac 23 \sqrt{ 9 d \log n}} } \leq 2n e^{-\frac {9d
            \log n}{4d}} \leq 2n^{-1},
        \]
        where we used $d > \log n$.
    \end{proof}

    \begin{lemma}
        \label{lemma:irregtwo} Let $G$ be a $(2n,d,b,\gamma)$-clustered graph
        such that $\nu = 1- \frac{2b}{d} > 12 \gamma$ and such that its adjacency
        matrix $A$ satisfies $\| A- B\| \leq \gamma d$. Then for every $i\in
        \{ 3,\ldots, {2n} \}$, $|\lambda_i| \leq 4\gamma$ and $\lambda_2 \geq
        (1+\delta) \lambda_3 $ for some constant $\delta>0$.
    \end{lemma}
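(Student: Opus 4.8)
The plan is to transfer the spectral structure of the scaled expected matrix $\tfrac1d B$ to the normalized adjacency matrix $N = D^{-1/2}AD^{-1/2}$, whose eigenvalues are exactly the $\lambda_i$ of $P$. First I would record that $\tfrac1d B$ has a transparent spectrum: since $B$ has rank two, with $B\bone = d\,\bone$ and $B\bchi = (a-b)\bchi$, the matrix $\tfrac1d B$ has eigenvalue $1$ (eigenvector $\bone$), eigenvalue $\nu = (a-b)/d$ (eigenvector $\bchi$), and eigenvalue $0$ with multiplicity $2n-2$. Because $\nu > 12\gamma > 0$ and $\nu \leq a/d < 1$, the sorted eigenvalues of $\tfrac1d B$ are $1 \geq \nu \geq 0 = \cdots = 0$. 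Thus the whole lemma reduces to showing that $N$ is close to $\tfrac1d B$ in spectral norm and then invoking the Weyl-type bound of Theorem~\ref{thm:eigen_vs_norm}.

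The core estimate is therefore a bound on $\|N - \tfrac1d B\|$, which I would split as
\[
N - \tfrac1d B = D^{-1/2}(A-B)D^{-1/2} + \left(D^{-1/2}BD^{-1/2} - \tfrac1d B\right).
\]
For the first summand, near-regularity gives $\|D^{-1/2}\|^2 = \max_v d_v^{-1} \leq 1/((1-\gamma)d)$, so with the hypothesis $\|A-B\|\leq\gamma d$ it is at most $\gamma/(1-\gamma)$. The second summand is the delicate one: I would factor $\sqrt{d}\,D^{-1/2} = I + E$, where $E$ is diagonal with $\|E\| = \max_v |\sqrt{d/d_v}-1| \leq \gamma$ (a consequence of $(1-\gamma)d\leq d_v\leq(1+\gamma)d$), and expand $D^{-1/2}BD^{-1/2} - \tfrac1d B = \tfrac1d(EB + BE + EBE)$. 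Since $\|B\| = d$, this is at most $2\gamma + \gamma^2$. Note that $\nu \leq 1$ together with $\nu > 12\gamma$ forces $\gamma < 1/12$, so all these estimates are valid and the two pieces sum to at most $4\gamma$, giving $\|N - \tfrac1d B\| \leq 4\gamma$.

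Finally I would apply Theorem~\ref{thm:eigen_vs_norm} with $M_1 = N$ and $M_2 = \tfrac1d B$. For $i \geq 3$ the corresponding eigenvalue of $\tfrac1d B$ is $0$, so $|\lambda_i| \leq \|N - \tfrac1d B\| \leq 4\gamma$, which is the first claim (and yields $\lambda \leq 4\gamma$). For the gap, the same inequality gives $\lambda_2 \geq \nu - 4\gamma$, and since $\nu > 12\gamma$ we obtain $\lambda_2 > 8\gamma \geq 2\cdot 4\gamma \geq 2\lambda_3$; hence $\lambda_2 \geq (1+\delta)\lambda_3$ holds with $\delta = 1$ (the case $\lambda_3 \leq 0$ being trivial).

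I expect the main obstacle to be the second summand $\|D^{-1/2}BD^{-1/2} - \tfrac1d B\|$: a crude entrywise bound on this matrix does not control its spectral norm, and the point is to recognize that the degree rescaling acts as a small multiplicative diagonal perturbation $I + E$ of $\tfrac1d B$, keeping the perturbation spectrally small and letting $\|B\| = d$ absorb the scaling. Tracking the constants so that everything fits under the stated $4\gamma$ threshold (using the automatic bound $\gamma < 1/12$) is the only other point requiring care.
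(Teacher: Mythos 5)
Your proof is correct and follows essentially the same route as the paper: reduce to bounding $\|N - \tfrac1d B\|$ by $4\gamma$, then transfer the explicit spectrum of $\tfrac1d B$ to $N$ via Theorem~\ref{thm:eigen_vs_norm}. The only (cosmetic) difference is the intermediate matrix in the triangle inequality --- the paper passes through $\tfrac1d A$ and bounds $\|A - dN\| \leq 3\gamma d$ using $\|D^{1/2}-\sqrt d\,I\|\leq\gamma\sqrt d$, whereas you pass through $D^{-1/2}BD^{-1/2}$ and write $\sqrt d\,D^{-1/2}=I+E$; both rest on the same observation that near-regularity is a small multiplicative diagonal perturbation, and both land under the same $4\gamma$ threshold.
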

    \begin{proof}
        The matrix $B$ has a very simple spectral structure: $\bone$ is an eigenvector
        of eigenvalue $d$, $\bchi$ is an eigenvector of eigenvalue $a-b$, and all 
        vectors orthogonal to $\bone$ and to $\bchi$ are eigenvectors of eigenvalue $0$.
        In order to understand the eigenvalues and eigenvectors of $N$, and hence the
        eigenvalues and eigenvectors of $P$, we first prove that $A$ approximates 
        $B$ and that $N$ approximates $(1/d) A$, namely $\| dN - A  \| \leq 3\gamma d$.

        To show that $dN$ approximates $A$ we need to show that $D$ approximates $dI$.
        The condition on the degrees immediately gives us $\| D - dI \| \leq \gamma d$.
        Since every vertex has degree $d_i$ in the range $d\pm \gamma{d}$, then the 
        square root $\sqrt d_i$ of each vertex must be in the range 
        $[\sqrt d - \gamma \sqrt d, \sqrt d + \gamma \sqrt d]$, so we also have 
        the spectral bound:
        \begin{equation} \label{sqrtdbound}
            \| D^{1/2} - \sqrt d I \| \leq \gamma \sqrt d.
        \end{equation}
        We know that $\|D\| \leq d+ \gamma d < 2d$ and that $\| N\|=1$, so from 
        \eqref{sqrtdbound} we get
        \begin{align}
            \| A - dN\| &= \| D^{1/2} N D^{1/2} - dN \| \\
            &\leq \| D^{1/2} N D^{1/2} - \sqrt d  N D^{1/2} \| + \| \sqrt d  N D^{1/2} - dN\| \\
            &= \| (D^{1/2} - \sqrt d I ) \cdot N D^{1/2} \| + \| \sqrt d N \cdot
                (D^{1/2} - \sqrt d I ) \|\\ 
            &\leq \| D^{1/2} - \sqrt d I \| \cdot \| N\| \cdot \|D^{1/2} \| + \sqrt d
                \cdot \|N \| \cdot \| D^{1/2} - \sqrt d I \| \leq 3 \gamma d.
        \label{eq:AvsNdist}
        \end{align}

        \noindent
        By using the triangle inequality and \eqref{eq:AvsNdist} we get
        \begin{equation}
            \| N - (1/d) B \| \leq \| N - (1/d) A\| + (1/d) \cdot \| A-B\| 
            \leq 4 \gamma.
            \label{eq:NapproxB}
        \end{equation}

        \noindent
        Finally, we use Theorem~\ref{thm:eigen_vs_norm} (See Appendix~\ref{apx:la}), which is a
        standard fact in matrix approximation theory: if two real symmetric matrices are
        close in spectral norm then their eigenvalues are close.
        From \eqref{eq:NapproxB} and the fact that all eigenvalues of
        $(1/d) B$ except for the first and second one are $0$,
        for each $i\in \left\{ 3,\dots,2n \right\}$ we have
        \begin{equation}
            |\lambda_i| = | \lambda_i - 0 | \leq \| N - \frac 1d B \| \leq 4\gamma.
            \label{eq:lambda_small}
        \end{equation}
        Similarly, from the fact that
        the second eigenvalue of $(1/d) B$ is $1-2b/d$ we get
        \[
            | \lambda_2 - (1-2b/d) | \leq \| N - \frac 1d B \| \leq 4\gamma,
        \]
        that is, from hypothesis $ \nu > 12 \gamma$ and
        \eqref{eq:lambda_small}, $\lambda_2 \geq (1+\delta) \lambda_3$ for some
        constant $\delta>0$.
        This concludes the proofs of Lemma~\ref{lemma:irregtwo} and Theorem
        \ref{lemma:irreg}.
\end{proof}

%%%%%%%%%%%%%%%%%%%%%%%%%%%%%%%%%%%%%
%\section{Tight bound for the SBM}\label{apx:omitted}

\subsection{Proof of Lemma~\ref{lem:main}}
Let $G$ be a randomly-generated graph according to
$\mathcal{G}_{2n,p,q}$ with $a=p{n}$, $b=q{n}$ and $d=a+b$. Recall the
definitions of $A$, $D$, $N$, $P$, $\lambda_i$ and $\bw_i$ ($i\in
\{1,\dots,2n\}$) in Section \ref{sec:prely}, and let $B$ be defined as in
Section \ref{ssec:thm_irreg}. Let us denote with $A_i$ ($i\in \left\{ 1,2
\right\}$) the adjacency matrix of the subgraph of $G$ induced by community
$V_i$, with $A_B=\left\{A_{u,v-n}\right\}_{u\in V_1, v\in V_2}$ the matrix whose
entry $(i,j)$ is $1$ iff there is an edge between the $i$-th node of $V_1$ and the $j$-th
node of $V_2$, then
\[
    A= \begin{pmatrix}
            A_1 & A_B \\
            A_B^{\intercal} & A_2
        \end{pmatrix}.
\] 
We need the following technical lemmas. 

\begin{lemma}\label{lem:versh_lemma}
    If $d>5\log n$ then for some positive constant $\Cr{spect_const}$ it holds
    $\| A - B \| \leq \Cl[small]{spect_const}\sqrt{d}$ w.h.p. 
\end{lemma}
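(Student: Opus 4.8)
The plan is to identify $\mathcal{G}_{2n,p,q}$ with an inhomogeneous Erd\H{o}s--R\'enyi graph and then invoke the sharp spectral concentration bound of Le and Vershynin~\cite{le_concentration_2015} as a black box. The point of this lemma is precisely that the cruder estimate $\|A-B\| = \bigO(\sqrt{d\log n})$ of Lemma~\ref{lemma:irregone}, obtained from the Matrix Bernstein inequality, is not good enough: Bernstein-type arguments inevitably pay an extra $\sqrt{\log n}$ factor and so cannot yield $\bigO(\sqrt d)$ in the near-critical regime $d=\Theta(\log n)$ that we care about here. Removing this logarithmic factor is exactly what the Le--Vershynin theorem accomplishes.

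Concretely, I would first check that the model lies within the scope of their result. The above-diagonal entries of $A$ are independent Bernoulli variables, equal to $1$ with probability $p$ for pairs inside a community and with probability $q$ for pairs across communities, so $G$ is an inhomogeneous Erd\H{o}s--R\'enyi graph on $2n$ vertices whose edge-probability matrix is $B$ (as defined in Section~\ref{ssec:thm_irreg}). Every row of $B$ sums to $np+nq=a+b=d$, so the maximum expected degree of the model is $d$, and $\max_{i,j} B_{i,j}=p\le d/n$. The theorem of Le and Vershynin guarantees that, for an inhomogeneous Erd\H{o}s--R\'enyi graph whose maximum expected degree $d$ satisfies $d\ge C_0\log n$, one has $\|A-\E{A}\|\le C_1\sqrt d$ with probability $1-n^{-\Omega(1)}$. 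Since by hypothesis $d>5\log n$, the requirement $d \ge C_0\log n$ is met (after fixing the absolute constants), and the claim follows with the constant in the statement taken to be $C_1$.

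The only bookkeeping point is the diagonal: $A$ has a zero diagonal because $G$ has no self-loops, whereas $B$ carries the value $p$ on its diagonal. Writing $\E{A}=B-pI$, we get $\|A-B\|\le\|A-\E{A}\|+p\,\|I\|=\|A-\E{A}\|+p$, and since $p=a/n\le d/n=o(\sqrt d)$ this additive correction is negligible and is absorbed into the constant.

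The main obstacle is conceptual rather than computational, and it lives entirely inside the cited theorem. The sharp $\sqrt d$ bound genuinely fails for general inhomogeneous graphs once $d$ drops below $\log n$, because atypically high-degree vertices create spurious large eigenvalues; the role of the hypothesis $d>5\log n$ is exactly to rule out such vertices w.h.p., which is what permits removal of the $\sqrt{\log n}$ factor. Were one to avoid citing Le and Vershynin, the self-contained route would be a Feige--Ofek/Vershynin-style $\epsilon$-net argument bounding $\sup_{\|u\|=\|v\|=1}u^\intercal (A-\E{A})v$, splitting vertex pairs into a ``light'' part (controlled by a standard union bound over a net) and a ``heavy'' part (controlled by a degree-pruning/discrepancy estimate); this heavy part is the technically demanding step that the cited work supplies, and it is why we prefer to invoke their result directly.
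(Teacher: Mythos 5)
Your proposal is correct and follows essentially the same route as the paper, which likewise proves this lemma by directly invoking Theorem 2.1 of Le and Vershynin (applied with $d'=2d$, justified by a Chernoff bound showing all degrees are below $2d$ w.h.p.). Your additional checks on the model's scope and the diagonal correction are consistent with, and slightly more explicit than, the paper's one-line argument.
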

\begin{proof}
    The lemma directly follows from Theorem 2.1 in \cite{le_concentration_2015}
    with $d'=2d$ and the observation that, from the Chernoff bounds, all
    degrees are smaller than $2d$ w.h.p.
\end{proof}

\begin{lemma}
    \label{lem:N_vs_B}
    If $d>5 \log n$ then for some constant
    $\Cl[small]{N_vs_B_bound}>0$ it holds w.h.p.
    \begin{equation}
        \| dN - B \| 
            \leq \Cr{N_vs_B_bound}\sqrt{d}.
        \label{eq:N_vs_B}
    \end{equation}
\end{lemma}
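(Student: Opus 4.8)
The plan is to reduce everything to a single conjugation identity. Writing $M := B - \tfrac1d D^{1/2}BD^{1/2}$ and using $N = D^{-1/2}AD^{-1/2}$, one checks by direct expansion that
\[
dN - B = d\,D^{-1/2}\bigl[(A-B) + M\bigr]D^{-1/2}.
\]
Indeed, the $A-B$ block contributes $dN - dD^{-1/2}BD^{-1/2}$ and the $M$ block contributes $dD^{-1/2}BD^{-1/2}-B$, which telescope. Taking spectral norms and using submultiplicativity yields
\[
\|dN - B\| \le d\,\|D^{-1/2}\|^2\,\bigl(\|A-B\| + \|M\|\bigr),
\]
so it suffices to control three quantities separately: the scalar prefactor $d\|D^{-1/2}\|^2$, the concentration term $\|A-B\|$, and the degree-reweighting error $\|M\|$.

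For the prefactor, note that $\|D^{-1/2}\|^2 = 1/\min_i d_i$. Since each $d_i$ is a sum of independent Bernoulli variables of mean $d > 5\log n$, a Chernoff lower-tail bound makes the event $d_i \le d/10$ have probability at most $n^{-2.025}$, and a union bound over the $2n$ vertices gives $\min_i d_i \ge d/10$ w.h.p.; hence $d\,\|D^{-1/2}\|^2 \le 10 = \bigO(1)$ w.h.p. The second term is handled for free: Lemma~\ref{lem:versh_lemma} already gives $\|A-B\| = \bigO(\sqrt d)$ w.h.p.

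The heart of the argument is the bound $\|M\| = \bigO(\sqrt d)$, for which I would exploit that $B$ has rank two. With the orthonormal pair ${\bf u}_1 = \bone/\sqrt{2n}$ and ${\bf u}_2 = \bchi/\sqrt{2n}$ one has $B = d\,{\bf u}_1{\bf u}_1^\intercal + (a-b)\,{\bf u}_2{\bf u}_2^\intercal$, and setting $\tilde{\bf u}_k := \tfrac1{\sqrt d}D^{1/2}{\bf u}_k$ turns $M$ into a sum of two rank-one differences,
\[
M = d\bigl({\bf u}_1{\bf u}_1^\intercal - \tilde{\bf u}_1\tilde{\bf u}_1^\intercal\bigr) + (a-b)\bigl({\bf u}_2{\bf u}_2^\intercal - \tilde{\bf u}_2\tilde{\bf u}_2^\intercal\bigr).
\]
Using ${\bf u}{\bf u}^\intercal - \tilde{\bf u}\tilde{\bf u}^\intercal = {\bf u}({\bf u}-\tilde{\bf u})^\intercal + ({\bf u}-\tilde{\bf u})\tilde{\bf u}^\intercal$, each block has norm at most $\|{\bf u}_k - \tilde{\bf u}_k\|\,(\|{\bf u}_k\| + \|\tilde{\bf u}_k\|)$. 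The crucial estimate is
\[
\|{\bf u}_k - \tilde{\bf u}_k\|^2 = \frac1{2n}\sum_i\Bigl(1 - \sqrt{d_i/d}\Bigr)^2 = \frac1{2nd}\sum_i\bigl(\sqrt d - \sqrt{d_i}\bigr)^2 \le \frac1d,
\]
the last inequality being exactly the Frobenius-type degree bound $\sum_i(\sqrt{d_i}-\sqrt d)^2 \le 2n$ of Lemma~\ref{lem:sumdeg}. Combined with $\|\tilde{\bf u}_k\| \le \|{\bf u}_k\| + \|{\bf u}_k-\tilde{\bf u}_k\| \le 2$, each block is $\bigO(1/\sqrt d)$, and since $a-b \le a+b = d$ this gives $\|M\| \le 6\sqrt d = \bigO(\sqrt d)$. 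Assembling the three estimates proves $\|dN - B\| = \bigO(\sqrt d)$ w.h.p.

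The point I expect to be the main obstacle — and the reason this particular decomposition is the right one — is that the naive route through $\|dN - A\|$ fails in the sparse regime. Writing $A = D^{1/2}ND^{1/2}$ and peeling off factors of $D^{1/2}-\sqrt d\,I$ forces an \emph{operator-norm} control of the degree fluctuations, i.e.\ of $\max_i|\sqrt{d_i}-\sqrt d|$, which is as large as $\Theta(\sqrt d)$ when $d = \Theta(\log n)$ because a handful of vertices carry degree a constant factor above the mean; this yields only the useless bound $\bigO(d)$. The identity above instead quarantines all degree fluctuations inside $M$, where they enter through the low-rank structure of $B$ and are measured by the \emph{averaged} quantity $\sum_i(\sqrt{d_i}-\sqrt d)^2$ rather than by a single worst-case vertex, which is precisely what makes the $\bigO(\sqrt d)$ bound survive down to logarithmic degrees.
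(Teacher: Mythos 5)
Your proof is correct and follows essentially the same route as the paper's: conjugate by $\sqrt{d}\,D^{-1/2}$, split off $\|A-B\|$ via Lemma~\ref{lem:versh_lemma}, and control $\|B-\tfrac1d D^{1/2}BD^{1/2}\|$ through the rank-two structure of $B$ together with the bound $\sum_i(\sqrt{d_i}-\sqrt d)^2\le 2n$ of Lemma~\ref{lem:sumdeg}. The only differences are cosmetic: you bound the prefactor $d\|D^{-1/2}\|^2$ by a crude constant instead of $1+\bigO(\sqrt{\log n/d})$, and you organize the degree-reweighting term as a symmetric sum of rank-one differences rather than passing through the intermediate matrix $\tfrac1{\sqrt d}BD^{1/2}$.
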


\noindent
The idea for proving Lemma~\ref{lem:N_vs_B} is to use the triangle inequality 
to upper bound $\| dN - B \|$ in terms of $\| A - B \|$, which we can bound with 
Lemma~\ref{lem:versh_lemma}, and $\|B-1/d D^{1/2} B D^{1/2}\|$, which we can 
upper bound by bounding $\| \sqrt{d} \bone - D^{1/2} \bone\|$ and 
$\|\sqrt{d}\bchi - D^{1/2} \bchi\|$ where $\bone$ and $\bchi$ are the 
eigenvector corresponding to the only two non-zero eigenvalues of $B$.  
The complete proof of Lemma~\ref{lem:N_vs_B} is deferred to 
Section~\ref{apx:dN_vs_B}.
As for the required bound on 
$\| \sqrt{d} \bone - D^{1/2} \bone\| = \| \sqrt{d}\bchi - D^{1/2} \bchi\| =
\sum_{j\in V} | \sqrt d - \sqrt {d_j} |^2$, we provide it in the following 
lemma, whose proof is also deferred to Section~\ref{apx:sum_of_deg}. 

\begin{lemma} 
    \label{lem:sumdeg} 
    If $5\log n < d< n^{\frac 13-\Cr{tight}}$ for any constant
    $\Cl[small]{tight}>0$, it holds w.h.p.
    \begin{align}
        \label{eq:sum_of_deg}
        & \sum_{j\in V} | \sqrt d - \sqrt {d_j} |^2 \leq 2n \text{ and }\\
        \label{eq:sum_of_deg3}
        &\sum_{j\in V}  | d - {d_j} |^2 \leq 2dn.
    \end{align}
\end{lemma}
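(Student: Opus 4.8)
The plan is to establish the two bounds together, treating the quadratic bound \eqref{eq:sum_of_deg3} as the primary object and deriving the square-root bound \eqref{eq:sum_of_deg} from it. The link is a pointwise elementary inequality: for every vertex $j$ we have $(\sqrt d - \sqrt{d_j})^2 = (d-d_j)^2/(\sqrt d + \sqrt{d_j})^2 \le (d-d_j)^2/d$, since $\sqrt d + \sqrt{d_j} \ge \sqrt d$. Summing over $j \in V$ and dividing by $d$ shows \eqref{eq:sum_of_deg} is an immediate consequence of \eqref{eq:sum_of_deg3}, so the whole task reduces to controlling $S := \sum_{j\in V}(d_j-d)^2$. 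The starting point is that each degree is a sum of independent edge indicators: for $j\in V_1$ the degree $d_j$ is distributed as $\mathrm{Bin}(n-1,p)+\mathrm{Bin}(n,q)$ (and symmetrically for $j\in V_2$), so that any two of the $\{d_j\}$ are coupled only through the single edge they may share.

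First I would pin down the mean. With $\mu := \E{d_j} = (n-1)p + nq = d - a/n$ and $\Vr{d_j} = (n-1)p(1-p) + nq(1-q) \le a+b = d$, the bias--variance split gives $\E{(d_j-d)^2} = \Vr{d_j} + (\mu-d)^2 \le d + (a/n)^2$, which is at most $d$ for $n$ large (indeed strictly below $d$ by a $\Theta((a^2+b^2)/n)$ margin). Summing over the $2n$ vertices yields $\E{S} \le 2nd$: each squared deviation contributes, in expectation, essentially the Poisson-type variance $d$. The corresponding expectation for the square-root version is only $\approx n/2$, so \eqref{eq:sum_of_deg} carries a comfortable constant-factor slack while \eqref{eq:sum_of_deg3} is essentially tight in expectation.

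The substance of the lemma is upgrading this to a high-probability bound, and here the dependence among the degrees is the main obstacle. I would view $S$ as a function of the independent indicators $\{A_{jk}\}_{j<k}$ and estimate its fluctuations by the second moment, refining to a higher moment to make the failure probability polynomially small. The diagonal part $\sum_j \Vr{(d_j-d)^2}$ is $\bigO(nd^2)$, because $(d_j-d)^2$ is the square of an approximately Poisson$(d)$-centred variable and hence has variance $\Theta(d^2)$; the crucial computation is the off-diagonal one, where for $j\ne j'$ the covariance $\mathrm{Cov}\big((d_j-d)^2,(d_{j'}-d)^2\big)$ is generated only by the shared edge $A_{jj'}$ and therefore contributes a total of only $\bigO(nd)$, a lower-order term. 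This gives $\Vr{S} = \bigO(nd^2)$, and Chebyshev's inequality then yields $S = \bigO(nd)$ with probability $1-\bigO(1/n)$. I expect the hypothesis $d = o(n^{1/3})$ to enter precisely when one pushes to the fourth moment to sharpen both the failure probability and the constant: the dangerous terms in $\E{(S-\E{S})^4}$ are those indexed by several shared edges (cherries and longer edge-configurations), and $d^3 = o(n)$ is exactly the regime in which such combinatorial configurations are too rare to dominate, leaving the Gaussian-type estimate $\E{(S-\E{S})^4} = \bigO(n^2d^4)$. The hardest and most delicate part is thus this moment bookkeeping for a dependent sum of squared degrees; by contrast, once \eqref{eq:sum_of_deg3} is secured the square-root bound \eqref{eq:sum_of_deg} follows for free from the pointwise inequality above.
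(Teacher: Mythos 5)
Your proposal is correct, and for the first bound it takes a genuinely different and better route than the paper. For the quadratic bound \eqref{eq:sum_of_deg3} you follow essentially the same path as the paper: $\Expec{}{(d_j-d)^2}\le d$ per vertex, a variance bound $\Vr{\sum_j(d_j-d)^2}=\bigO(nd^2)$ obtained by decoupling the single shared edge between $d_i$ and $d_j$, and Chebyshev. (Your sharper claim that the off-diagonal covariances total only $\bigO(nd)$ is in fact right --- each pair contributes $p(1-p)+\bigO(p^2)=\bigO(d/n)$ --- whereas the paper settles for the cruder $\bigO(d^2/n)$ per pair; either way the diagonal term $\bigO(nd^2)$ dominates and the conclusion is the same.) The real divergence is \eqref{eq:sum_of_deg}: the paper proves it from scratch by expanding $\sum_j(\sqrt d-\sqrt{d_j})^2=2nd+\sum_j d_j-2\sqrt d\sum_j\sqrt{d_j}$ and then lower-bounding $\sum_j\sqrt{d_j}$ via $\Expec{}{\sqrt{d_j}}\ge\sqrt d-1/(2\sqrt d)$ plus a covariance estimate for $\sqrt{d_id_j}$ and another Chebyshev step --- and it is exactly there, in \eqref{eq:var_of_sqrtd}, that the hypothesis $d<n^{1/3-\Cr{tight}}$ is consumed. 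Your pointwise inequality $(\sqrt d-\sqrt{d_j})^2=(d-d_j)^2/(\sqrt d+\sqrt{d_j})^2\le(d-d_j)^2/d$ makes \eqref{eq:sum_of_deg} an immediate corollary of \eqref{eq:sum_of_deg3}, eliminating that entire second half of the argument and, with it, any need for the upper bound on $d$ (cf.\ the paper's own footnote conjecturing that this hypothesis is removable). Your guess that $d=o(n^{1/3})$ would surface in a fourth-moment refinement is therefore misplaced --- it is not needed anywhere in your argument, and no fourth-moment step is required since second-moment Chebyshev already gives failure probability $\bigO(1/n)$. One shared blemish: since $\Expec{}{\sum_j(d_j-d)^2}$ sits only $\Theta(a^2+b^2)$ below $2nd$ while the Chebyshev fluctuation is $\Theta(nd)$, neither your argument nor the paper's actually lands on the literal constant $2$; one gets $\sum_j(d_j-d)^2\le Cnd$ for some absolute $C$ slightly above $2$ (the paper's own proof of \eqref{eq:sum_of_deg} likewise terminates at $4n$ in \eqref{eq:goal_of_lemma_sumdeg}), which is immaterial for every downstream use of the lemma.
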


\noindent
By combining Lemma~\ref{lem:N_vs_B} and Theorem~\ref{thm:eigen_vs_norm} we have
$|\lambda_i - \lambda_i'|\leq \| N - d^{-1}B \| = \bigO(1/\sqrt{d})$, where
$\lambda_1'= 1$, $\lambda_2'= 1-2b/d$ and $\lambda_i'= 0$ for $i\in \left\{
3,\dots,2n \right\}$ are the eigenvalues of $d^{-1}B$. This proves the first
two part of Lemma~\ref{lem:main}.

As for the third part, let us write $\bw_2 = \bw_{\bone} + \bw_{\bchi} + \bw_{\perp}$ 
where $\bw_{\bone}$ and $\bw_{\bchi}$ are the projection of $\bw_2$ on $\bone$ and $\bchi$
respectively, and $\bw_{\perp}$ is the projection of $\bw_2$ on the space orthogonal to 
$\bone$ and $\bchi$. 

Observe that the only non-zero eigenvalues of $(1/d)B$ are $1$ and $(a-b)/d$. 
Thus, from Lemma \ref{lem:N_vs_B} and the Davis-Kahan theorem (Theorem \ref{thm:daviskahan}) with 
$M_1 = N$, $M_2 = \frac{1}{d} B$, $t=\lambda_2$, $\bx = \bw_2$ and $\delta = \lambda_2/2$, we get
\begin{equation}
    \| \bw_{\perp}\| 
        \leq \frac{4}{\lambda_2\pi} \Big\|N - \frac{1}{d}B\Big\|
        \leq \bigO\left(\frac{1}{\sqrt{d}\lambda_2}\right) 
        = \bigO\left(\frac{\sqrt{d}}{a-b}\right).
    \label{eq:davis_kahan}
\end{equation}

As for $\bw_{\bone}$, we know that $\langle \bw_2, D^{-1/2}\bone\rangle=0$, thus
\begin{align}
    \label{eq:w_bone_bound}
    \|\bw_{\bone}\| 
    = \frac{1}{\sqrt{2n}}\langle \bw_2, \bone-d^{-\frac{1}{2}}D^{\frac{1}{2}}\bone\rangle 
    \leq \frac{1}{\sqrt{2n}}\|\bw_2\| \|\bone-d^{-\frac{1}{2}}D^{\frac{1}{2}}\bone\| 
    \leq \frac{1}{\sqrt{d}},
\end{align}
where in the last inequality we used Lemma \ref{lem:sumdeg}.

By the law of cosines and the fact that $\sqrt{1-x}\geq 1-x$ for $x\in\left[
0,1 \right]$ we have that 
\begin{align}
    \label{eq:proj_of_w_on_chi}
    \Big\| \bw_2 - \frac{1}{\sqrt{2n}}\bchi \Big\|^2 
        &= \|\bw_2\|^2 + \Big\|\frac{1}{\sqrt{2n}}\bchi\Big\|^2 
            - 2\langle \bw_2, \frac{1}{\sqrt{2n}}\bchi \rangle = 2 - 2\|\bw_{\bchi}\|\\
        &= 2-2\sqrt{ 1 - \|\bw_{\bone}\|^2 + \|\bw_{\perp}\|^2 }
            \leq 2\left( \|\bw_{\bone}\|^2 + \|\bw_{\perp}\|^2 \right)
            = \bigO \left( \frac{{d}}{(a-b)^2} \right),
\end{align}
where in the last inequality we used \eqref{eq:davis_kahan} and
\eqref{eq:w_bone_bound}. \eqref{eq:proj_of_w_on_chi} implies that, with the
exception of a set $S$ of at most $\bigO(nd/(a-b)^2)$ nodes, we have
\begin{equation}
    \label{eq:pre_third_thesis_lemma}
    \left|\sqrt{2n}\bw_2(i) - \bchi(i)\right| \leq \frac{1}{201},
\end{equation}
for each $i \in V/S$. 
From the Chernoff bound, we also have that w.h.p. $\sqrt{d/d_i} = 1\pm 1/201$. 
Thus, \eqref{eq:pre_third_thesis_lemma} and the last fact imply that for each $i \in V/S$
it holds w.h.p. 
\begin{equation}
    \label{eq:third_thesis_lemma}
    \left|\sqrt{2nd}D^{-\frac{1}{2}}\bw_2(i) - \bchi(i)\right| \leq \frac{1}{100},
\end{equation}
concluding the proof.
\qed

\begin{remark}
    After looking at Lemma~\ref{lem:main}, one may wonder whether it could be
    enough to generalize Definition \ref{def:epsreggraph} to include
    ``quasi-$(2n,d,b,\gamma)$-clustered graph'', i.e. graphs that are
    $(2n,d,b,\gamma)$-clustered except for a small number of nodes which may
    have a much higher degree. In fact, this would be rather surprising: This
    higher-degree nodes may connect to the other nodes in such a way that would
    greatly perturb the eigenvalues and eigenvectors of the graph. In
    $\planted$, besides the fact that the nodes with degree much larger than
    $d$ are few, it is also crucial that they are connected in a
    \emph{non-adversarial} way, i.e. randomly.
\end{remark}

\subsection{Proof of Lemma~\ref{lem:N_vs_B}}
\label{apx:dN_vs_B}
A simple application of the Chernoff bound and the union bound shows that 
w.h.p. 
\begin{equation}
        \sqrt{d}\|D^{-1/2}\|\leq 1 + \bigO\left(\sqrt{\frac{\log n}d}\right),
        \label{eq:sqrtd_times_Dsqrt}
    \end{equation}hence
    \begin{align}
        \| dN - B \|&= \| (\sqrt{d}D^{-1/2})A(\sqrt{d}D^{-1/2})-B \| \\
            &\leq \|\sqrt{d}D^{-1/2}\| \left\|A - \frac1{\sqrt{d}} D^{1/2} B
                \frac1{\sqrt{d}} D^{1/2} \right\| \|\sqrt{d}D^{-1/2}\|\\
            & \leq \left\| A - \frac1{d} D^{1/2} B D^{1/2} \right\| \| \sqrt{d}D^{-1/2} \|^2\\
            &\leq \left( \|A-B\| + \left\| B - \frac1{d} D^{1/2} B D^{1/2} \right\| \right)
                \left( 1+ \bigO\left( \sqrt{\frac{\log n}{d}} \right) \right).
        \label{eq:proof_N_vs_B}
    \end{align}
    Thanks to Lemma~\ref{lem:versh_lemma}, it holds $\| A - B \| = \bigO(\sqrt{d})$.
    Hence, in order to conclude the proof, it remains to show that $\| B - {d^{-1}}
    D^{1/2} B D^{1/2} \| = \bigO(\sqrt{d})$. 
    We do that by observing that 
    \begin{align}
        \label{eq:B_vs_DB}
        \left\| B - \frac1{d} D^{1/2} B D^{1/2} \right\| 
            \leq \left\| B - \frac1{\sqrt{d}} B D^{1/2} \right\| 
            + \left\| \frac1{\sqrt{d}} B D^{1/2} - \frac1{d} D^{1/2} B D^{1/2} \right\|,
    \end{align}
    and by upper-bounding the two terms on the right hand side.
The two only non-zero eigenvalues of $B$ are $a+b$ and $a-b$, with corresponding 
eigenvectors $(2n)^{-\nfrac 12} \, \bone$ and $(2n)^{-\nfrac 12} \, \bchi$, 
therefore
we can write $B = d/(2n) \, \bone \bone^{\intercal} + (a-b)/(2n) \, \bchi \bchi^{\intercal}$,
which implies that 
    \begin{align}
        \label{eq:B_vs_DB_rewritten}
        B - \frac1{\sqrt{d}} B D^{1/2} = 
            \frac{\sqrt{d}}{2n} \, \bone 
                \, (\sqrt{d}\, \bone - D^{1/2}\, \bone)^{\intercal} 
            + \frac{ a-b }{\sqrt{d}\, 2n} \, \bchi 
                \, (\sqrt{d}\, \bchi - D^{1/2}\, \bchi)^{\intercal}.
    \end{align}
    It follows that, for an arbitrary unitary vector $\bx$ it holds
    \begin{align}
        \label{eq:B_vs_DB_bound}
        \left\| \left(B - \frac1{\sqrt{d}} B D^{1/2}\right)\bx \right\|
        & \leq \left\|\frac{\sqrt{d}}{2n} \, \bone 
                \, (\sqrt{d}\, \bone - D^{1/2}\, \bone)^{\intercal} \bx \right\|
            + \left\|\frac{ a-b }{\sqrt{d}\, 2n} \, \bchi 
                \, (\sqrt{d}\, \bchi - D^{1/2}\, \bchi)^{\intercal} \bx \right\|\\
        & = \frac{\sqrt{d}}{2n} \, \left\|\bone\right\| 
                \, |(\sqrt{d}\, \bone - D^{1/2}\, \bone)^{\intercal} \bx| 
            + \frac{ a-b }{\sqrt{d}\, 2n} \, \left\|\bchi \right\|
                \, |(\sqrt{d}\, \bchi - D^{1/2}\, \bchi)^{\intercal} \bx|\\
                & \leq \frac{\sqrt{d}}{\sqrt{2n}} 
                \, \left\|\sqrt{d}\, \bone - D^{1/2}\, \bone\right\| \cdot \left\|\bx\right\| 
            + \frac{ a-b }{\sqrt{2dn}} 
                \, \left\|\sqrt{d}\, \bchi - D^{1/2}\, \bchi\right\| \cdot \left\|\bx\right\|
        \leq 2\sqrt{d},
    \end{align}
    where we used the triangle inequality, the fact that $\|\bone\|=\|\bchi\|=\sqrt{2n}$,
    the Cauchy-Schwartz inequality, Lemma \ref{lem:sumdeg} and $a-b<d$.
    As for the other term on the r.h.s. of 
    \eqref{eq:B_vs_DB}, we have that w.h.p.
    \begin{align}
        \label{eq:B_vs_DB_last}
        \left\| \frac1{\sqrt{d}} B D^{1/2} - \frac1{d} D^{1/2} B D^{1/2} \right\|
            \leq \left\| B - \frac1{\sqrt{d}}  D^{1/2} B \right\| \frac1{\sqrt{d}}\|D^{1/2}\|
            \leq 2\sqrt{d}\left( 1+ \bigO\left( \sqrt{\frac{\log n}{d}} \right) \right),
    \end{align}
    where in the last inequality we used \eqref{eq:sqrtd_times_Dsqrt} and that
    for any matrix $M$ it holds $\|M\| = \|M^{\intercal}\|$. Finally,
    \eqref{eq:B_vs_DB_bound} and \eqref{eq:B_vs_DB_last} togheter implies the
    desired upper bound on \eqref{eq:B_vs_DB} and thus \eqref{eq:proof_N_vs_B},
    concluding the proof.
\qed

\subsection{Proof of Lemma~\ref{lem:sumdeg}}
\label{apx:sum_of_deg}

Each degree $d_i$ has the distribution of a sum of $n$ Bernoulli random
variables of expectation $p$ plus a sum of $n$ Bernoulli random variables
of expectation $q$. Thus, each $d_i$ satisfies $\E {d_i} = d$ and $\Var {}
{d_i } \leq d$. 

First, we consider the random variables $| d -
d_j|^2$. Their expectation is $\E {| d-d_j|^2} \leq d$ (the variance of the
random variable $d_j$). Let $e_{u,v}$ is the variable that is 1 iff the edge
$(u,v)$ is included in the graph. 
Observe that
\begin{align}
    | d- d_j|^4 
        & = | d - \sum_{v\in V}e_{j,v} |^4 = | a - \sum_{v\in V_i}e_{j,v} 
            + b - \sum_{v\in V_{3-i}}e_{j,v} |^4 \\
        & = | a - \sum_{v\in V_i}e_{j,v} |^4
        + | b - \sum_{v\in V_{3-i}}e_{j,v} |^4
        + 6 | a - \sum_{v\in V_i}e_{j,v} |^2 \, | b - \sum_{v\in V_{3-i}}e_{j,v} |^2\\
        &\qquad + 4 (a - \sum_{v\in V_i}e_{j,v})  \, ( b - \sum_{v\in V_{3-i}}e_{j,v} )^3
        + 4 (a - \sum_{v\in V_i}e_{j,v})^3  \, ( b - \sum_{v\in V_{3-i}}e_{j,v} ),
    \label{eq:horrible}
\end{align}
and 
\begin{align}
    \E {(a - \sum_{v\in V_i}e_{j,v})^3 ( b - \sum_{v\in V_{3-i}}e_{j,v} )}
        = \E {(a - \sum_{v\in V_i}e_{j,v})^3 } \E{ ( b - \sum_{v\in V_{3-i}}e_{j,v} )}=0 ,\\
    \E {(a - \sum_{v\in V_i}e_{j,v}) ( b - \sum_{v\in V_{3-i}}e_{j,v} )^3}
        = \E {(a - \sum_{v\in V_i}e_{j,v}) } \E{ ( b - \sum_{v\in V_{3-i}}e_{j,v} )^3}.
    \label{eq:zero_exp}
\end{align}
Hence, since the fourth central moment of a binomial with parameters $n$ and
$p$ is $np(1-p)^4+np^4(1-p)+3n(n-1)p^2(1-p)^2 \leq 4(np)^2$, if we let $i\in
\left\{ 1,2 \right\}$ be the index of the community of $j$ we have that the
expectation of the square of $| d-d_j|^2$ (which is the fourth central moment
of $d_j$) is  
\begin{align}
    \E{| d- d_j|^4 }
        & = \E {| a - \sum_{v\in V_i}e_{j,v} |^4} 
        + \E {| b - \sum_{v\in V_{3-i}}e_{j,v} |^4}\\
        &+ 6 \E {| a - \sum_{v\in V_i}e_{j,v} |^2 } \E{ | b - \sum_{v\in V_{3-i}}e_{j,v} |^2}
         \leq 4a^2 + 4b^2 + 6ab \leq 4 d^2.
    \label{eq:fourth_of_d}
\end{align}
In order to apply Chebyshev's inequality, we need to bound the variance of
$\sum_j | d - d_j|^2$. As for the second moment of their sum, we have
\begin{align}
    \mathbf{E}[(\sum_{i}|d-d_j|^2)^2] 
    &= \sum_{i}\mathbf{E}[|d-d_j|^4]
        + 2\sum_{ 1 \leq i < j \leq 2n } \mathbf{E}[|d-d_i|^2\cdot|d-d_j|^2]\\
    &\leq 8d^2 n + 2\sum_{ 1 \leq i < j \leq 2n } \mathbf{E}[|d-d_i|^2\cdot|d-d_j|^2].
    \label{eq:calc2}
\end{align}
To upper bound the terms $\mathbf{E}[|d-d_i|^2\cdot|d-d_j|^2]$, since the
stochastic dependency between $d_i$ and $d_j$ is due only to the edge $(i,j)$,
let us write 
\[
    d_i=\sum_{u\in N(i)}e_{i,u} = e_{i,j} + \sum_{u\in N(i)/\{j\}}e_{i,u} = e_{i,j} + d^{(j)}_i,
\]
where $d^{(j)}_i$ is the sum of all the edges incident to $i$ except for $(i,j)$.
We have 
\begin{align}
    \label{eq:calc2b}
    |d-d_i|^2\cdot|d-d_j|^2 
    &= |d-d^{(j)}_i+ e_{i,j} |^2\cdot|d-d^{(i)}_j+e_{i,j}|^2 \\
    &{=} (|d-d^{(j)}_i|^2+ e_{i,j} + 2 e_{i,j}(d-d^{(j)}_i))
                      (|d-d^{(i)}_j|^2+ e_{i,j} + 2 e_{i,j}(d-d^{(i)}_j))\\
    &=|d-d^{(j)}_i|^2|d-d^{(i)}_j|^2+ e_{i,j}|d-d^{(i)}_j|^2 + 2 e_{i,j}(d-d^{(j)}_i)|d-d^{(i)}_j|^2\\
        &+ |d-d^{(j)}_i|^2 e_{i,j} + e_{i,j} + 2 e_{i,j}(d-d^{(j)}_i)\\
        &+  2 e_{i,j}(d-d^{(i)}_j)|d-d^{(j)}_i|^2+  2 e_{i,j}(d-d^{(i)}_j) + 4 e_{i,j}(d-d^{(j)}_i)(d-d^{(i)}_j),
\end{align}
where we used that, since $e_{i,j}$ is an indicator variable, it holds $e^{2}_{i,j}=e_{i,j}$.
Taking the expectation of \eqref{eq:calc2b} we thus get
\begin{align}
    &\mathbf{E}[ |d-d_i|^2\cdot|d-d_j|^2 ] \\
    &=\mathbf{E}[ |d-d^{(j)}_i|^2|d-d^{(i)}_j|^2+ e_{i,j}|d-d^{(i)}_j|^2 + 2
            e_{i,j}(d-d^{(j)}_i)|d-d^{(i)}_j|^2\\
        &\qquad+ |d-d^{(j)}_i|^2 e_{i,j} + e_{i,j} + 2 e_{i,j}(d-d^{(j)}_i)\\
        &\qquad+  2 e_{i,j}(d-d^{(i)}_j)|d-d^{(j)}_i|^2+  2 e_{i,j}(d-d^{(i)}_j) + 4
            e_{i,j}(d-d^{(j)}_i)(d-d^{(i)}_j)]\\
    &=\mathbf{E}
            [|d-d^{(j)}_i|^2]\mathbf{E}[|d-d^{(i)}_j|^2]+\mathbf{E}[e_{i,j}]\mathbf{E}[
            |d-d^{(i)}_j|^2 ] + 2 \mathbf{E}[ e_{i,j} ]\mathbf{E}[
            (d-d^{(j)}_i) ]\mathbf{E}[ |d-d^{(i)}_j|^2 ]\\
        &\qquad+ \mathbf{E}[ e_{i,j} ] \mathbf{E}[ |d-d^{(j)}_i|^2 ] +
            \mathbf{E}[ e_{i,j} ] + 2 \mathbf{E}[ e_{i,j} ]\mathbf{E}[
            (d-d^{(j)}_i) ]\\
        &\qquad+ 2 \mathbf{E}[ e_{i,j}]\mathbf{E}[ (d-d^{(i)}_j) ]\mathbf{E}[
            |d-d^{(j)}_i|^2 ]+  2 \mathbf{E}[ e_{i,j} ]\mathbf{E}[ (d-d^{(i)}_j) ]\\
        &\qquad + 4 \mathbf{E}[ e_{i,j} ]\mathbf{E}[ (d-d^{(j)}_i) ]\mathbf{E}[
            (d-d^{(i)}_j) ]\\
    &\leq \mathbf{E}
            [|d-d^{(j)}_i|^2]\mathbf{E}[|d-d^{(i)}_j|^2] + \frac {d^2}n 
            + 2 \frac {d^3}{n^2} + \frac {d^2}n +
             \frac {d}n + 2  \frac {d^2}{n^2} +  2 \frac {d^3}{n^2} +  2 \frac
             {d^2}{n^2}  + 4 \frac {d^3}{n^3} \\
    & \leq \mathbf{E}
            [|d-d^{(j)}_i|^2]\mathbf{E}[|d-d^{(i)}_j|^2] + 15\frac {d^2}n, 
    \label{eq:calc2c}
\end{align}
where in the inequalities we used that $\mathbf{E}[ e_{i,j} ]\leq d/n$, that 
\[
    \mathbf{E} [d-d^{(j)}_i] \leq \mathbf{E}[ e_{i,j} ] + \mathbf{E}
    [\sum_{u\in N(i)/\{j\}}\mathbf{E}[e_{i,u}] -d^{(j)}_i] \leq \frac dn ,
\]
and that 
\begin{equation}
    \mathbf{E} [|d-d^{(j)}_i|^2] \leq \mathbf{E}[ e_{i,j} ] + \mathbf{E}
    [|d-\mathbf{E}[ e_{i,j} ]-d^{(j)}_i|^2] \leq \frac dn + d-1 \leq d. 
        \label{eq:uppersquare_d}
\end{equation}
By combining \eqref{eq:calc2} and \eqref{eq:calc2c} we get
\begin{align}
    \label{eq:calc2_final}
    \mathbf{E}[(\sum_{i}|d-d_j|^2)^2] 
    \leq 8d^2 n + 2 \sum_{1\leq i<j \leq
        2n}\mathbf{E}[|d-d^{(j)}_i|^2]\mathbf{E}[|d-d^{(i)}_j|^2] + 60 {d^2}n, 
\end{align}

As for the square of the average, we have
\begin{align}
    (\mathbf{E}[\sum_{i}|d-d_i|^2])^2 &=
    \sum_{i}\mathbf{E}[|d-d_i|^2]^2 + 2\sum_{i\neq j} \mathbf{E}[|d-d_i|^2]\mathbf{E}[
     |d-d_j|^2]\\
     &\geq 2\sum_{1\leq i<j \leq 2n} \mathbf{E}[|d-d_i|^2]\mathbf{E}[ |d-d_j|^2],
    \label{eq:calc2d}
\end{align}
and
\begin{align}
    &\mathbf{E}[|d-d_i|^2]\mathbf{E}[ |d-d_j|^2]\\
    &= \mathbf{E}[|d-d^{(j)}_i-e_{i,j}|^2]\mathbf{E}[ |d-d^{(i)}_j-e_{i,j}|^2]\\
    &= (\mathbf{E}[|d-d^{(j)}_i|^2] + \mathbf{E}[e_{i,j}] -
        2\mathbf{E}[e_{i,j}]\mathbf{E}[(d-d^{(j)}_i)])
        \cdot (\mathbf{E}[|d-d^{(i)}_j|^2] \\
    &\qquad+ \mathbf{E}[e_{i,j}] -
        2\mathbf{E}[e_{i,j}]\mathbf{E}[(d-d^{(i)}_j)])\\
    &\geq (\mathbf{E}[|d-d^{(j)}_i|^2] - 2\mathbf{E}[e_{i,j}]\mathbf{E}[(d-d^{(j)}_i)])
        \cdot (\mathbf{E}[|d-d^{(i)}_j|^2] - 2\mathbf{E}[e_{i,j}]\mathbf{E}[(d-d^{(i)}_j)])\\
    & \geq \mathbf{E}[|d-d^{(j)}_i|^2]\mathbf{E}[|d-d^{(i)}_j|^2]-4\frac{d^3}{ n^2 },
    \label{eq:calc2e}
\end{align}
where we used, again, that $\mathbf{E}[ e_{i,j} ]\leq d/n$ and that 
$ \mathbf{E} [|d-d^{(j)}_i|^2] \leq d$ (see \eqref{eq:uppersquare_d}).

Combining \eqref{eq:calc2_final} and \eqref{eq:calc2e} together we get
\begin{align}
    \mathrm{Var}[\sum_{i}|d-d_i|^2] &=\mathbf{E}[(\sum_{i}|d-d_i|^2)^2]-
        \mathbf{E}[\sum_{i}|d-d_i|^2]^2\\
    &\leq 8d^2n+60d^2n + 16{d^3} 
        = 84 d^2n
    \label{eq:calcvar}
\end{align}

Finally, by Chebyshev's inequality we have 
\[ 
    \Pr { \left[ \sum_{j}  |  d -  {d_j} |^2 > 2dn \right]} \leq \frac {21} n,
\]
which proves the second part of the lemma. 

We now consider the sum of the variables $| \sqrt d- \sqrt{d_j} |^2$. 
We have
\begin{align} 
    \sum_{j\in V} | \sqrt d- \sqrt{d_j} |^2 
    &= \sum_{i\in V}d + \sum_{i\in V}d_i- 2\sqrt d \cdot \sum_{j\in V}\sqrt {d_j}\\
    &\leq 2dn + \sum_{i\in V}d_i- 2\sqrt d \cdot \sum_{j\in V}\sqrt {d_j}.
    \label{eq:degvar}  
\end{align}
From the Chernoff bound we have that for some positive constant
$\Cl[small]{cb_tight}$ it holds w.h.p. 
\begin{equation}
    \label{eq:cb_sumofdeg}
    \sum_{j\in V} {d_j} 
    = \sum_{\substack{u,v\in V\\u\neq v}} 2e_{u,v} + \sum_{u\in V} e_{u,v} 
        \leq 2dn + \Cr{cb_tight}\sqrt{dn\log n} \leq 4dn + {n},
\end{equation}
where we are using the hypothesis $d=o(n/\log n)$.
We will now prove that 
\begin{equation}
    \label{eq:sumofsq_goal}
    \sum_{j\in V}\sqrt {d_j} \geq 2n\sqrt{d} - \frac{n}{\sqrt{d}},
\end{equation}
which together with \eqref{eq:degvar} implies that 
\begin{equation}
    \label{eq:goal_of_lemma_sumdeg}
    \sum_{j\in V} | \sqrt d- \sqrt{d_j} |^2 \leq 4n,
\end{equation}
concluding the proof of the lemma.

Observe that if $x\geq 0$, we have
\[ 
    \sqrt x \geq 1 + \frac {x-1}2 - \frac {(x-1)^2} 2 
\]
so that if $X$ is a non-negative random variable of expectation 1 we
have\footnote{This argument is due to Ori Gurel-Gurevich (see \cite{121424}).}
\[ 
    \E { \sqrt {X}} \geq 1 - \frac {\Var {}{X}}{2}. 
\]
By applying the above inequality to $d_j/d$ we get
\[ 
    \E{ \sqrt {\frac {d_j}{d}} } \geq 1 - \frac{\Var{}{\frac {d_j}{d} } } 2  =
    1 - \frac { \Var{} {d_j }} {2d^2} \geq 1 - \frac 1 {2d} 
\] 
and
\begin{equation} 
    \label{eq:degmean} 
    \E { \sqrt { d_j } } \geq \sqrt d - \frac 1 {2 \sqrt d}.
\end{equation}
We will show that $\sum_{j \in V} \sqrt{d_j}$ is concentrated around its expectation by 
using Chebyshev's inequality\footnote{A stronger bound which doesn't require the hypothesis
    $d\leq n^{1/3-\Cr{tight}}$ may be obtained with some concentration techniques compatible with the
    stochastic dependence among the $\sqrt{d_j}$s.}. In order to do that, we will bound
    their covariance as 
    \begin{equation}
        \E{ \sqrt{d_i d_j}} - \E{\sqrt{d_i}}\E{\sqrt{d_j}} \leq \frac{8d^2}n.
        \label{eq:bound_on_cov_sqrt}
    \end{equation}
    By the law of total probability 
    \begin{equation}
        \label{eq:total_prob_for_cov}
        \E{\sqrt{d_i}} = \Pr{e_{i,j}}\E{\sqrt{d_i^{(j)}+1}}+(1-\Pr{e_{i,j}})\E{\sqrt{d_i^{(j)}}}
    \end{equation}
    and
    \begin{equation}
        \label{eq:total_prob_for_cov_double}
        \E{\sqrt{d_j d_i}} = \Pr{e_{i,j}}\E{\sqrt{ d_i^{(j)}+1 }}\E{\sqrt{d_j^{(i)}+1}}
            +(1-\Pr{e_{i,j}})\E{\sqrt{d_j^{(i)}}} \E{\sqrt{d_i^{(j)}}},
    \end{equation}
    which imply that 
    \begin{align}
        &\E{ \sqrt{d_i d_j}} - \E{\sqrt{d_i}}\E{\sqrt{d_j}}\\
        & = \Pr{e_{i,j}}\E{\sqrt{ d_i^{(j)}+1 }}\E{\sqrt{d_j^{(i)}+1}}
            +(1-\Pr{e_{i,j}})\E{\sqrt{d_j^{(i)}}} \E{\sqrt{d_i^{(j)}}}\\
        &\quad - \Pr{e_{i,j}}^2 \E{\sqrt{d_j^{(i)}+1}}\E{\sqrt{d_i^{(j)}+1}}
            -\Pr{e_{i,j}}(1-\Pr{e_{i,j}}) \E{\sqrt{d_j^{(i)}}}\E{\sqrt{d_i^{(j)}+1}}\\
        &\quad - \Pr{e_{i,j}}(1-\Pr{e_{i,j}}) \E{\sqrt{d_j^{(i)}+1}}\E{\sqrt{d_i^{(j)}}}
            -(1-\Pr{e_{i,j}})^2 \E{\sqrt{d_j^{(i)}}}\E{\sqrt{d_i^{(j)}}}\\
        & = p(1-p)\Big(\E{\sqrt{ d_i^{(j)}+1 }}\E{\sqrt{d_j^{(i)}+1}}
            +\E{\sqrt{d_j^{(i)}}} \E{\sqrt{d_i^{(j)}}} + \E{\sqrt{d_j^{(i)}}}\E{\sqrt{d_i^{(j)}+1}}\\
        &\quad + \E{\sqrt{d_j^{(i)}+1}}\E{\sqrt{d_i^{(j)}}}\Big) 
            \leq \frac{ 8d^2}{n},
        \label{eq:bound_on_cov_calc}
    \end{align}
    where in the last inequality we used that by the Chernoff bound w.h.p. it
    holds $\E{\sqrt{d_i^{(j)}}} < \sqrt{2d}$, and that $p(1-p)<p<d/n$.
    From \eqref{eq:bound_on_cov_calc} it then follows that 
    \begin{equation}
        \Vr{\sum_{j\in V}\sqrt{d_j}} \leq 2nd + 32d^2n < \frac{n^{2}}{dn^{\Cr{tight}}}.
        \label{eq:var_of_sqrtd}
    \end{equation}
    Finally, by combining \eqref{eq:var_of_sqrtd} and \eqref{eq:degmean} with
    Chebyshev's inequality we get 
    \begin{equation}
        \Pr{\sum_{j\in V}\sqrt{d_j}<2n\sqrt d - \frac n{\sqrt d}} 
            \leq \Pr{ \Big|\sum_{j\in V}\sqrt{d_j} 
                - \mathbf{E}\Big[ \sum_{j\in V}\sqrt{d_j} \Big] \Big| 
                > \frac{n}{\sqrt{d}}} \leq \frac{1}{n^{\Cr{tight}}}.
        \label{eq:cheb_sqrtd}
    \end{equation}
\qed

\subsection{Proof of Theorem~\ref{thm:tight_bounds}}
For any vector $\bx$, we can write
\begin{align}
\bx^{(t)} = P^t \bx = \sum_{i = 1}^{2n} a_i \lambda_i^t D^{-1/2} \bw_i
= \alpha_1 \bone + a_2 \lambda_2^t D^{-1/2} \bw_2 + \be^{(t)},
\label{eq:tight_decomp}
\end{align}
where $\alpha_1 = \bone^{\intercal}D\bx/\|D^{1/2}\bone\|$ and
$\| \be^{(t)} \| \leq 4 \lambda^{t}\|\bx\|$.

From Lemma \ref{lem:main} (Claim 3) we have that for at
least $2n-\bigO(n d / (a-b)^2)$ entries $i$ of $D^{-1/2} \bw_2$, we
get $| \sqrt{2nd}( D^{-1/2} \bw_2 )(i) - \bchi(i) | \leq \frac 1 {100}$,
that is
\begin{align}
    ( D^{-1/2} \bw_2 )(i) &\geq  \frac{99}{100\sqrt{2nd}} \quad \text{ if }i\in V_1 \cap S \text{ and }\\
    ( D^{-1/2} \bw_2 )(i) &\leq -\frac{99}{100\sqrt{2nd}} \quad \text{ if }i\in V_2 \cap S.
    \label{eq:diff_in_S}
\end{align}
Thus, we get 
\begin{align}
    \left|\bx^{(t)} - \bx^{(t-1)}\right| 
    &= \left|a_2 \lambda_2^{t-1}(\lambda_2 - 1) D^{-1/2} \bw_2 + \be^{(t)} + \be^{(t-1)}\right|\\
    &\leq \left|a_2 \lambda_2^{t-1}(\lambda_2 - 1) D^{-1/2} \bw_2\right| 
    + \left|\be^{(t)}-\be^{(t-1)}\right|
    \label{eq:tight_clustering}
\end{align}
and, when 
$t - 1 \geqslant 
\log \left(\frac{16 \sqrt{2n}}{|a_2| (1 - \lambda_2)}
\right) / \log\left(\frac{\lambda_2}{\lambda} \right),$
from \eqref{eq:tight_clustering} it follows that
\begin{align}
    (\bx^{(t)}-\bx^{(t-1)})(i) &\geq  \frac{99}{200\,\sqrt{2nd}}\,a_2
    \lambda_2^{t-1}(\lambda_2 - 1) \quad \text{ if }i\in V_j \cap S \text{
    and }\\
    (\bx^{(t)}-\bx^{(t-1)})(i) &\leq -\frac{99}{200\,\sqrt{2nd}}\,a_2
    \lambda_2^{t-1}(\lambda_2 - 1) \quad \text{ if }i\in V_{3-j} \cap S.
    \label{eq:tight_diff_in_S}
\end{align}
either for $j=1$ or for $j=2$.
Since $|S|>n-\bigO(n d / (a-b)^2)$, we thus get a $\bigO(d / (a-b)^2)$-weak
reconstruction.
\qed

\section{More communities}
\label{sec:apx-more}

Recall the definition of negative and positive type in Section
\ref{sec:morecomm}. In this section we prove Theorem \ref{thm:more}. The proof
is divided in the following two lemmas. 

\begin{lemma} 
    \label{lem:morecom}
    Pick $\bx \sim \{-1,1\}^{kn}$ u.a.r. Then, with high probability, the vertices
    of $V_1$ are either all of positive type or all of negative type.
    Furthermore, the two events have equal probability.
\end{lemma}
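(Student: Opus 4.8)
The plan is to reduce the behaviour of an entire community to the sign of a single scalar. First I would fix an orthonormal eigenbasis $\bv_1 = \frac{1}{\sqrt{kn}}\bone$, together with $\bv_2,\dots,\bv_k$ spanning the eigenspace $W$ of $\lambda_2 = (a-b)/d$ (the community-constant, mean-zero vectors; note $\lambda_2\in(0,1)$), and $\bv_{k+1},\dots,\bv_{kn}$ for the remaining eigenvalues, so that with $\bx = \sum_i\alpha_i\bv_i$ and $\alpha_i=\langle\bx,\bv_i\rangle$ we can write
\[
P^t\bx = \alpha_1\bv_1 + \lambda_2^t\,\mathbf{s} + \be^{(t)},\qquad
\mathbf{s}:=\sum_{i=2}^{k}\alpha_i\bv_i,\quad \be^{(t)}:=\sum_{i=k+1}^{kn}\lambda_i^t\alpha_i\bv_i .
\]
Since $\mathbf{s}\in W$ it is constant on each $V_j$; its value on $V_1$ is $s_1 := \bar x_1 - \bar x$, where $\bar x_j = \frac1n\sum_{v\in V_j}\bx(v)$ and $\bar x=\frac1k\sum_j\bar x_j$. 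The crucial point is that for every $v\in V_1$ we have $(P^t\bx)_v = c + \lambda_2^t s_1 + \be^{(t)}(v)$ with $c=\alpha_1/\sqrt{kn}$ independent of $v$ and $t$, so the only $v$-dependence other than the error is through the \emph{single} number $s_1$.

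\textbf{Monotonicity.} Next I would bound the tail exactly as in Lemma~\ref{lem:metastable}, namely $\|\be^{(t)}\|_\infty\le\|\be^{(t)}\|_2\le\lambda^t\|\bx\|=\lambda^t\sqrt{kn}$, where $\lambda=\max_{i>k}|\lambda_i|\le(1-\epsilon)\lambda_2$. For $v\in V_1$ the consecutive difference is
\[
(P^{t+1}\bx)_v-(P^t\bx)_v = -\lambda_2^t(1-\lambda_2)\,s_1 + \big(\be^{(t+1)}(v)-\be^{(t)}(v)\big),
\]
whose leading term has sign $-\texttt{sgn}(s_1)$ and magnitude $\lambda_2^t(1-\lambda_2)|s_1|$, while the error is at most $2\lambda^t\sqrt{kn}$. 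The leading term dominates as soon as $(\lambda_2/\lambda)^t > 2\sqrt{kn}/\big((1-\lambda_2)|s_1|\big)$; as $\lambda_2>\lambda>0$ the left-hand side increases in $t$, so it is enough to verify the inequality at $t=T$. Using $1-\lambda_2=kb/d\ge n^{-O(1)}$ and (from the next step) $|s_1|\ge n^{-c}$, the right-hand side is $n^{O(1)}$, while $\lambda\le(1-\epsilon)\lambda_2$ makes $(\lambda_2/\lambda)^T$ exceed any fixed polynomial once the constant hidden in $T=\Theta((1/\epsilon)\log n)$ is taken large enough. Hence for all $t>T$ the increment keeps the constant sign $-\texttt{sgn}(s_1)$: every vertex of $V_1$ is of negative type if $s_1>0$ and of positive type if $s_1<0$, so the whole community shares one type.

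\textbf{Anti-concentration.} The only remaining risk is $s_1$ being too close to $0$, and this is where the real work lies. Conditioning on the coordinates of $\bx$ in $V_2\cup\cdots\cup V_k$ we get $s_1=\frac{k-1}{k}\bar x_1 - C$ for a fixed constant $C$, where $n\bar x_1$ is a sum of $n$ independent Rademacher variables. By the point-mass estimate used in Lemma~\ref{rad.sum} (every value of such a sum is attained with probability $O(1/\sqrt n)$), an interval of length $n^{-c}$ in the $s_1$-scale pulls back to $O(n^{1-c})$ admissible values of $n\bar x_1$, so exactly as in Lemma~\ref{lemma:apxrdm} one obtains $\Prob{}{|s_1|\le n^{-c}}=O(n^{-1/2})$ for any fixed $c\ge 1$. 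Thus $|s_1|\ge n^{-c}$ w.h.p., which feeds back into the previous step and shows that w.h.p.\ all of $V_1$ is of a single type.

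\textbf{Equal probability and main obstacle.} For the symmetry claim I would use that $\bx\mapsto-\bx$ is a measure-preserving involution of the uniform law on $\{-1,1\}^{kn}$ and that $P^t(-\bx)=-P^t\bx$ reverses every monotonicity, hence exchanges positive and negative type at every vertex; it therefore maps the event ``$V_1$ entirely of positive type'' bijectively onto ``$V_1$ entirely of negative type,'' forcing the two to have equal probability. The conceptual obstacle is the structural observation of the first step --- that the degeneracy of the eigenspace $W$ collapses the behaviour of all of $V_1$ onto the single scalar $s_1$ --- after which the only genuinely quantitative ingredient is the Rademacher anti-concentration, already available through Lemmas~\ref{rad.sum} and~\ref{lemma:apxrdm}.
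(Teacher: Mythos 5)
Your proof is correct and takes essentially the same route as the paper's: both isolate the projection of $\bx$ onto the $\lambda_2$-eigenspace, observe that it is constant on $V_1$ (your scalar $s_1=\bar x_1-\bar x$ is exactly the paper's $(\bx_{V_1})_v$, since the paper's extra component $\bx_{\perp_1}$ vanishes on $V_1$), bound the orthogonal remainder by $\lambda^t\sqrt{kn}$ with $\lambda\le(1-\epsilon)\lambda_2$, and use Rademacher anti-concentration to keep $s_1$ bounded away from zero, with the sign-symmetry of $s_1$ giving the equal-probability claim. The only differences are cosmetic (you apply the point-mass estimate to $\bar x_1$ directly after conditioning, where the paper invokes Lemma~\ref{lemma:apxrdm} on $\|\bx_{V_1}\|$, and you make the $\bx\mapsto-\bx$ involution explicit).
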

\begin{proof}  We will write
\[ 
    \bx = \bx_{\bone} + \bx_{V_1} + \bx_{\perp_{1}} + \bx_{\perp},
\]
where $\bx_{\bone}$ is the component of $\bx$ parallel to $\bone$, $\bx_{V_1}$
is the component parallel to the vector $\bone_{V_1} - k^{-1}\bone_V$,
$\bx_{\perp_{1}}$ is the component in the eigenspace of $\lambda_2$ and
orthogonal to $\bone_{V_1} - k^{-1}\bone_V$, and $\bx_{\perp}$ is the component
orthogonal to $\bone$ and to the eigenspace of $\lambda_2$.

For the above the make sense, $\bone_{V_1} - k^{-1}\bone_V$ must be an eigenvector
of $\lambda_2$, which is easily verified because its entries sum to zero and
they are constant within components.

An important observation, and the reason for picking the above decomposition,
is that $\bx_{\perp_{1}}$ is zero in $V_1$. The reason is that
$\bx_{\perp_{1}}$ has to be orthogonal to $\bone_V$ and to $\bone_{V_1} -
k^{-1}\bone_V$ so from
\[ 
    \langle \bx_{\perp_{1}} , \bone_V \rangle =  \langle \bx_{\perp_{1}} ,
    \bone_{V_1} - k^{-1}\bone_V \rangle  = 0,
\]
we deduce
\[ 
    \langle \bx_{\perp_{1}} , \bone_{V_1}  \rangle = 0.
\]
Thus, the entries of $\bx_{\perp_{1}}$ sum to zero within $V_1$, but, being in
the eigenspace of $\lambda_2$, the entries of $\bx_{\perp_{1}}$ are constant
within components, and so they must be all zero within $V_1$.

Now we have
\[ 
    P^t \bx = \bx_{\bone} + \lambda_2^t \bx_{V_1} + \lambda_2^t  \bx_{\perp_{1}} + P^t \bx_{\perp},
\]
and so, for each $v\in V_1$ it holds 
\begin{equation} 
    \label{eq:diff} 
    (P^{t+1} \bx)_v - (P^t \bx)_v = \lambda_2^t \cdot (1-\lambda_2) (\bx_{V_1})_v + ((P^{t+1}- P^t) \bx_{\perp})_v.
\end{equation}
For $t > T$, the hypothesis $\lambda<(1-\epsilon)\lambda_2$ implies that  
\begin{equation}
    |(P^t \bx_{\perp})_v| \leq  || P^t \bx_{\perp}||_\infty \leq || P^t
    \bx_{\perp}|| \leq \lambda^t || \bx_{\perp}|| \leq \sqrt{n} \cdot
    \lambda^t \leq \frac 1 {n^{1.5}} \lambda_2^t.
    \label{eq:bound_on_x_perp}
\end{equation}
Moreover, for each $v\in V_1$ we have 
\begin{align}
    | (\bx_{V_1})_v| 
        &= \|\bone_{V_1} - k^{-1}\bone_{V}\|^{-2}
            \langle \bx, \bone_{V_1} - k^{-1}\bone_{V}\rangle\left( 1-k^{-1} \right) \\
        &= \frac{k}{(k-1)n} \left( \sum_{i\in V_1} x_i -\sum_{i \in V}\frac{x_i}{k} \right) 
            \left( \frac{k-1}{k} \right)
            = \frac{1}{n} \left( \sum_{i\in V_1} x_i -\sum_{i \in V}\frac{x_i}{k} \right),
    \label{eq:more_proj_val}
\end{align}
and 
\begin{align}
    || \bx_{V_1}||
        = \frac{\langle \bx, \bone_{V_1} - k^{-1}\bone_{V}\rangle}
                {\|\bone_{V_1} - k^{-1}\bone_{V}\|}
        = \sqrt{\frac{k}{(k-1)n}} \left( \sum_{i\in V_1} x_i -\sum_{i \in V}\frac{x_i}{k} \right),
    \label{eq:x_V_1_norm}
\end{align}
which imply that 
\begin{equation}
    | (\bx_{V_1})_v| = \sqrt{(1-1/k)/n}\, \|\bx_{V_1}\|.
    \label{eq:size_of_x_V_1}
\end{equation}
Finally, note that by Lemma \ref{lemma:apxrdm} it holds w.h.p. $|| \bx_{V_1} ||
\geq \frac 1{n} || \bx || \geq \sqrt{\nfrac k{n}}$. 

The latter fact together with \eqref{eq:bound_on_x_perp} and
\eqref{eq:size_of_x_V_1} imply that w.h.p. the sign of \eqref{eq:diff} is the
same as the sign of $(\bx_{V_1})_v$, which is the same for all elements of
$V_1$ and is equally likely to be positive or negative.
\end{proof}

Of course the same statement is true if we replace $V_1$ by $V_i$ for any
$i=1,\ldots, k$; by a union bound, it is also true for all $i$ simultaneously
with high probability. 

\begin{lemma} 
    Pick $\bx \sim \{-1,1\}^{kn}$ u.a.r. There is an absolute constant $p$ (e.g.,
    $p = \frac 1 {100}$) such that, with probability at least $p$, all vertices
    of $V_1$ have the same type, all vertices of $V_2$ have the same type, and
    the types are different.
\end{lemma}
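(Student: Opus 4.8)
The plan is to reduce the claim to an orthant-probability statement about two negatively correlated sums of Rademacher variables. Write $s_i := \sum_{v \in V_i} x_v$ for the sum of the initial values in community $V_i$ and $\bar s := \frac1k \sum_{j=1}^{k} s_j$ for their average. Inspecting the proof of Lemma~\ref{lem:morecom}, the projection computation \eqref{eq:more_proj_val} (and its analogue for $V_2$) gives $(\bx_{V_i})_v = \frac1n (s_i - \bar s)$ for every $v \in V_i$, and \eqref{eq:diff} shows that on the high-probability event $G$ under which $\|\bx_{V_1}\|$ and $\|\bx_{V_2}\|$ are large enough for the $\bx_{V_i}$-term to dominate the error for all $t > T$, every vertex of $V_i$ ($i = 1,2$) has a well-defined type, which is positive if $s_i > \bar s$ and negative if $s_i < \bar s$. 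Thus, on $G$, the event of the lemma coincides with $\{\texttt{sgn}(s_1 - \bar s) \neq \texttt{sgn}(s_2 - \bar s)\}$, and writing $E$ for the event of the lemma,
\begin{equation}
\Pr{E} \geq \Pr{\,G \cap \{\texttt{sgn}(s_1 - \bar s) \neq \texttt{sgn}(s_2 - \bar s)\}\,} \geq \Pr{\texttt{sgn}(s_1 - \bar s) \neq \texttt{sgn}(s_2 - \bar s)} - \Pr{G^{c}}.
\end{equation}
By Lemma~\ref{lem:morecom} together with a union bound over $V_1,V_2$ we have $\Pr{G^{c}} = o(1)$, so it suffices to bound the first term below by a constant.

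Next I would analyze the pair $(s_1 - \bar s,\, s_2 - \bar s)$. The $s_i$ are i.i.d.\ sums of $n$ independent Rademacher variables, hence each has mean $0$ and variance $n$. A direct computation then gives $\Vr{s_i - \bar s} = \frac{k-1}{k}\,n$ and $\Cov{}{s_1 - \bar s,\; s_2 - \bar s} = -\frac1k\,n$, so the two quantities are \emph{negatively} correlated, with correlation $\rho = -1/(k-1) < 0$. Treating $k$ as a fixed constant and letting $n \to \infty$, the bivariate central limit theorem gives that $\big((s_1 - \bar s)/\sqrt n,\, (s_2 - \bar s)/\sqrt n\big)$ converges in distribution to a centered Gaussian pair $(X,Y)$ with the above covariance, and since $\{xy = 0\}$ has zero Gaussian measure the corresponding orthant probabilities converge as well. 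By the arcsine identity $\E{\texttt{sgn}(X)\,\texttt{sgn}(Y)} = \tfrac2\pi \arcsin \rho$ and $\rho < 0$,
\begin{equation}
\Pr{\texttt{sgn}(X) \neq \texttt{sgn}(Y)} = \frac{1 - \tfrac{2}{\pi}\arcsin \rho}{2} = \frac12 + \frac1\pi \arcsin\frac{1}{k-1} \geq \frac12 .
\end{equation}
Consequently $\Pr{\texttt{sgn}(s_1 - \bar s) \neq \texttt{sgn}(s_2 - \bar s)} \geq \frac12 - o(1)$, and combining with the previous display yields $\Pr{E} \geq \frac12 - o(1) \geq \frac{1}{100} = p$ for all large enough $n$.

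The covariance computation and the appeal to the central limit theorem are routine. The point requiring the most care, and the main obstacle, is the passage from the discrete sums to the Gaussian orthant probability: weak convergence transfers $\Pr{\texttt{sgn}(s_1-\bar s) \neq \texttt{sgn}(s_2 - \bar s)}$ to $\Pr{\texttt{sgn}(X) \neq \texttt{sgn}(Y)}$ only because the boundary $\{xy = 0\}$ carries no mass under the limiting law, which in turn requires the limiting covariance to be nondegenerate. This holds for $k \geq 3$, while the case $k = 2$ is degenerate, with $s_1 - \bar s = -(s_2 - \bar s)$ forcing opposite signs with probability $1 - o(1)$ (consistent with the formula at $\arcsin 1 = \pi/2$), so it only helps. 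Treating $k$ as a fixed constant keeps $\rho$ bounded away from $0$ and ensures that every $o(1)$ term genuinely vanishes as $n \to \infty$; an anti-concentration estimate of the type established in Lemma~\ref{lemma:apxrdm} confirms $\Pr{s_i = \bar s} = o(1)$, in agreement with the vanishing boundary mass used above.
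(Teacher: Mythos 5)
Your proof is correct, but it takes a genuinely different route from the paper's. The paper decomposes $\bx$ along the two eigendirections $\bone_{V_1}+\bone_{V_2}-\frac2k\bone_V$ and $\bone_{V_1}-\bone_{V_2}$, observes that the second of these flips sign between $V_1$ and $V_2$ while the first does not, and then forces the sign-flipping component to dominate by conditioning on three explicit, mutually independent interval events for $\sum_{v\in V_1}x_v$, $\sum_{v\in V_2}x_v$ and $\sum_{v\in V\setminus(V_1\cup V_2)}x_v$, each of constant probability; the resulting constant is the product of these three probabilities. You instead extract from the proof of Lemma~\ref{lem:morecom} that, on the high-probability event where the $\bx_{V_i}$-components dominate the error, the type of every vertex of $V_i$ is exactly $\texttt{sgn}(s_i-\bar s)$, so the lemma reduces to a two-dimensional sign problem; the covariance computation ($\rho=-1/(k-1)$), the bivariate CLT and Sheppard's arcsine formula then give probability $\tfrac12+\tfrac1\pi\arcsin\tfrac1{k-1}-o(1)\geq\tfrac12-o(1)$. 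Your version buys a sharper and more explicit constant (essentially $1/2$ rather than a small product) and makes transparent why negative correlation of the community sums helps; the paper's version is more elementary, avoiding the bivariate CLT and the orthant formula, and its independence-of-three-sums structure is self-contained. Your handling of the degenerate case $k=2$ and of the zero boundary mass on $\{xy=0\}$ is the right point to worry about and is dealt with correctly; the only caveat, which you already flag, is that the $o(1)$ terms are uniform only for $k$ fixed (or growing slowly), but this matches the implicit assumptions of the paper's own argument.
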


\begin{proof} This time we write
    \[ \bx = \bx_{\bone} + \bx_{V_{1+2}} + \bx_{V_{1-2}} + \bx_{\perp_{1,2}} + \bx_{\perp} \]
where 
\begin{itemize}
    \item $\bx_{\bone}$ is the component parallel to $\bone_V$, 
    \item $\bx_{V_{1+2}}$ is the component parallel to $\bone_{V_1} +  \bone_{V_2} - \frac 2k \bone_V$, 
    \item  $\bx_{V_{1-2}}$ is the component parallel to $\bone_{V_1} - \bone_{V_2}$, 
    \item $\bx_{\perp_{1,2}}$ is the component in the eigenspace of $\lambda_2$
        and orthogonal to $\bx_{V_{1+2}}$ and $\bx_{V_{1-2}}$ , 
    \item $\bx_{\perp}$ is the rest.
\end{itemize}
Similarly to the proof of Lemma \ref{lem:morecom}, the important observations
are that $\bx_{V_{1+2}}$ and $\bx_{V_{1-2}}$  are in the eigenspace of
$\lambda_2$, and that $\bx_{\perp_{1,2}}$ is zero in all the coordinates of
$V_1$ and of $V_2$.

Thus, for each $v\in V_1 \cup V_2$ we have
\begin{equation} 
    \label{eq:diff.two} 
    (P^{t+1} \bx)_v - (P^t \bx)_v = \lambda_2^t (1-\lambda_2)
    (\bx_{V_{1+2}} + \bx_{V_{1-2}})_v + ((P^{t+1}- P^t) \bx_{\perp})_v.
\end{equation}
From \eqref{eq:diff.two} it is easy to see that if $\bx$ is such that, for
every $v\in V_1 \cup V_2$, we have the two conditions
\begin{align}
    | (\bx_{V_{1+2}})_v| &\leq \frac 34  | (\bx_{V_{1-2}})_v| \quad\text{and}
    \label{eq:more_precon}\\
    | ((P^{t+1}- P^t) \bx_{\perp})_v |  &\leq \frac 18 \lambda_2^t \cdot
        (1-\lambda_2) \cdot | (\bx_{V_{1-2}})_v |,
    \label{eq:morecomm_cond}
\end{align}
then such an $\bx$ satisfies the conditions of the Lemma, that is all the
elements in $V_1$ have the same type, all the elements of $V_2$ have the same
type, and the types are different. Now note that, since 
\begin{align}
    | (\bx_{V_{1+2}})_v|
    & = \frac{1}{2n}\left( \sum_{i\in V_1}x_i + \sum_{i\in
        V_1}x_i - \frac{2}{k}\sum_{i\in V}x_i\right) \quad\text{and}\\
    | (\bx_{V_{1-2}})_v|
    & = \frac{1}{2n}\left( \sum_{i\in V_1}x_i - \sum_{i\in V_2}x_i \right),
    \label{eq:more_proj}
\end{align}
if $\bx$ satisfies 
\begin{align}
    2 \sqrt{n} \leq \sum_{v\in V_1} &x_v \leq 3\sqrt{n},
    \label{eq:more_firstcond}\\
    - 2 \sqrt{n} \leq \sum_{v\in V_2} &x_v \leq - \sqrt{n} \quad\text{and}
    \label{eq:more_secondcond}\\
    0 \leq \sum_{v\in V / (V_1\cup V_2)} &x_v \leq \frac 1 {10}\sqrt{kn},
    \label{eq:more_thirdcond}
\end{align}
then \eqref{eq:more_precon} is satisfied, and note that \eqref{eq:more_firstcond}, 
\eqref{eq:more_secondcond} and \eqref{eq:more_thirdcond} are independent and
each happens with constant probability. 

Finally, observe that if \eqref{eq:more_precon} holds then \eqref{eq:morecomm_cond} is
satisfied with high probability when $t>T$. 
\end{proof}

It is enough to pick $\ell = \log (3n)$ to have, with high probability, that 
the signatures are well defined and they are the same within each community 
and different between communities. The first lemma guarantees that, with high 
probability, for all $\ell$ vectors, all vertices within each community have 
the same type. The second lemma guarantees that, with high probability, the 
signatures are different between communities.

\end{document}